\documentclass[11pt]{article}

\usepackage{arxiv}

\usepackage[utf8]{inputenc}
\usepackage[T1]{fontenc}
\usepackage{hyperref}
\usepackage{url}
\usepackage{booktabs}
\usepackage{amsfonts}
\usepackage{nicefrac}
\usepackage{microtype}
\usepackage{graphicx}
\usepackage{natbib}
\usepackage{doi}
\usepackage{float,setspace}
\usepackage{tikz}
\usetikzlibrary{
  arrows.meta,
  positioning,
  decorations.pathreplacing
}

\usepackage{amssymb,amsthm,amsmath}
\usepackage{xcolor,paralist}
\usepackage{enumitem}

\usepackage[capitalise,noabbrev]{cleveref}

\newtheorem{theorem}{Theorem}[section]
\newtheorem{lemma}[theorem]{Lemma}

\newtheorem{definition}[theorem]{Definition}

\newtheorem{remark}[theorem]{Remark}
\newtheorem{corollary}[theorem]{Corollary}

\newcommand{\Zq}{\mathbb{Z}_q}
\newcommand{\F}{\mathcal{F}}
\newcommand{\E}{\mathcal{E}}
\newcommand{\C}{\mathcal{C}}
\newcommand{\R}{\mathcal{R}}
\newcommand{\A}{\mathcal{A}}

\newcommand{\Sc}{\mathcal{S}}

\newcommand{\Hone}{\ensuremath{\mathsf{H}_1}}
\newcommand{\Htwo}{\ensuremath{\mathsf{H}_2}}
\newcommand{\Com}{\ensuremath{\mathsf{Com}}}
\newcommand{\NIZK}{\ensuremath{\mathsf{NIZK}}}
\newcommand{\EvalCom}{\ensuremath{\mathsf{EvalCom}}}
\newcommand{\Sim}{\ensuremath{\mathsf{Sim}}}
\newcommand{\Ext}{\ensuremath{\mathsf{Ext}}}
\newcommand{\negl}{\ensuremath{\mathsf{negl}}}
\newcommand{\poly}{\ensuremath{\mathsf{poly}}}
\newcommand{\Prove}{\ensuremath{\mathsf{Prove}}}
\newcommand{\Verify}{\ensuremath{\mathsf{Verify}}}
\newcommand{\Setup}{\ensuremath{\mathsf{Setup}}}

\newcommand{\cert}{\ensuremath{\mathsf{cert}}}
\newcommand{\scheme}{\ensuremath{\mathsf{PIVOT}}}
\newcommand{\Real}{\textsc{Real}}
\newcommand{\Ideal}{\textsc{Ideal}}

\newcommand{\sid}{\ensuremath{\mathsf{sid}}}
\newcommand{\ctx}{\ensuremath{\mathsf{ctx}}}
\newcommand{\pk}{\ensuremath{\mathsf{pk}}}
\newcommand{\Adv}{\mathbf{Adv}}

\renewcommand{\shorttitle}{\scheme: Proactive Threshold VOPRF from Isogenies}

\hypersetup{
  pdftitle={PIVOT: Proactive and Verifiable Threshold Oblivious Pseudorandom Functions from Isogeny Group Actions},
  pdfsubject={Cryptography and Security},
  pdfauthor={},
  pdfkeywords={OPRF, threshold cryptography, proactive security, isogenies, CSIDH, multiparty computation},
  colorlinks=true,
  linkcolor=black,
  citecolor=black,
  urlcolor=black
}

\title{\scheme: Proactive and Verifiable Threshold\\
Oblivious Pseudorandom Functions\\
from Isogeny Group Actions}

\author{
Abhinav Sharma\thanks{This work was carried out as part of a one-year
remote research internship undertaken by Abhinav Sharma when he was doing his Masters at RIE Mysore under the
supervision of Vikas Srivastava.}\\
Indian Institute of Technology Hyderabad\\
Hyderabad, Telangana, India\\
\texttt{abhisha8055@gmail.com}
\And
Vikas Srivastava\\
Department of Mathematics\\
National Institute of Technology Warangal\\
Warangal, Telangana, India\\
\texttt{vikas.math123@gmail.com}
}

\date{}

\begin{document}
\maketitle

\begin{abstract}
Oblivious pseudorandom functions (OPRFs) allow a client to evaluate a keyed pseudorandom function on a private input without revealing that input to the server.  In a threshold OPRF, the secret key is distributed among (n) servers so that any qualified set of at least (t) servers can complete an evaluation, while fewer than (t) shares reveal no information about the key.  Existing isogeny-based threshold OPRFs, however, are primarily designed for static corruption models.  If the same shares remain valid throughout the lifetime of the service, a mobile adversary can compromise different servers over time, accumulate (t) shares from the same sharing state, and eventually recover the master key.

We introduce \scheme{} (\emph{Proactive Isogeny-based Verifiable Oblivious Threshold PRF}), a dealerless threshold VOPRF framework based on effective isogeny group actions.  \scheme{} periodically refreshes the server shares without changing the master key, public key, or previously generated OPRF outputs.  The construction combines Shamir secret sharing, additively homomorphic coefficient commitments, sequential Lagrange-weighted group actions, and joint zero-knowledge relations that link certified shares to their corresponding isogeny actions.  It also supports coordinated epoch transitions, publicly verifiable blame, secure erasure, and committee resharing under a possibly different threshold.

We formalize the functionality of a long-lived proactive threshold VOPRF, prove the correctness of distributed key generation, threshold evaluation, proactive refresh, and committee resharing, and provide a simulation-based security analysis under the vectorization and one-more hidden-group-action assumptions.  As an application, we describe a distributed private lookup service whose encrypted database remains valid across repeated share renewals and committee migrations.
\end{abstract}

\keywords{OPRF \and threshold cryptography \and proactive security \and isogenies \and CSIDH \and multiparty computation}

\section{Introduction}
\label{sec:introduction}

Cryptographic services that remain in operation for many years face a
different threat model from short-lived protocols.  In a conventional
threshold system, a secret key is divided among several servers so that
no individual server can use the key alone.  This protects against the
failure or compromise of a small number of machines at a fixed point in
time.  It does not, however, prevent an adversary from compromising
different servers at different times.  An attacker may learn the share
held by one server during the first month, the share held by a second
server during a later intrusion, and so on.  If the shares remain
unchanged throughout the lifetime of the service, the attacker can store
each exposed share and eventually collect enough information to
reconstruct the long-lived master key.  The adversary may therefore
break the system without ever controlling the threshold number of
servers simultaneously.

This threat is commonly described through the model of a
\emph{mobile adversary}.  The set of corrupted servers is allowed to
change over time, although the number of servers under the adversary's
control during any individual time period remains below the threshold.
Proactive secret sharing~\citep{herzberg1995} addresses this problem by
periodically replacing the current shares with fresh shares of the same
secret.  Once the old shares and the temporary refresh data have been
securely erased, information obtained during one period cannot simply be
combined with information obtained during a later period.  The secret
itself remains unchanged, but the local representation of that secret is
continually renewed.

The need for such protection is particularly clear in services built
from oblivious pseudorandom functions.  An oblivious pseudorandom
function (OPRF)~\citep{jarecki2009} is an interactive protocol between a
client holding a private input \(x\) and a server holding a secret key
\(k\).  At the end of the interaction, the client learns the value
\(F_k(x)\), while the server learns nothing about \(x\) beyond what is
inherently revealed by the surrounding application.  The client should
not learn the key or obtain useful information about the function on
inputs that were not evaluated through the protocol.  This combination
of input privacy and controlled access to a keyed function makes OPRFs
useful in password-authenticated key exchange, private set
intersection, private keyword search, anonymous credentials, rate
limiting, and encrypted database lookup.

A \emph{verifiable} OPRF strengthens this functionality by allowing the
client to verify that the server evaluated the function under the key
associated with a public commitment or public key.  Verifiability
prevents a malicious server from changing the key from one request to
another, returning an unrelated value, or selectively evaluating under
a key chosen to influence the surrounding application.  A
\emph{threshold} OPRF distributes the secret key among \(n\) servers and
requires the cooperation of at least \(t\) of them to complete an
evaluation.  This removes the monolithic OPRF server as a single point
of compromise and makes the service more tolerant of failures.
Nevertheless, an ordinary threshold OPRF remains vulnerable to gradual
share accumulation if its shares are never refreshed.

The central problem considered in this work is therefore the following,
how can one construct a post-quantum threshold VOPRF that remains secure
for a long period of time, even when the adversary compromises different
servers in different epochs?  A satisfactory construction must preserve
the OPRF key and all previously derived outputs while replacing the
server shares.  It must also ensure that every partial evaluation is
performed with a share certified for the current epoch, that all servers
agree on the active sharing state, and that a malformed contribution can
be detected and attributed.  In addition, the system should support
committee migration, since a genuinely long-lived deployment may need
to replace machines, rotate administrative domains, or change its
threshold parameters without re-encrypting all application data.

\subsection{Why isogeny group actions?}

Post-quantum OPRFs have been studied from several algebraic foundations,
including lattices, codes, and isogenies.  Isogeny group actions are
particularly attractive for this setting because they provide a
commutative action on compact public objects.  Informally, if
\([a]E\) denotes the action of a secret scalar \(a\) on a curve object
\(E\), then the defining composition law
\[
  [a]([b]E)=[a+b]E
\]
resembles the exponent-addition property used in classical
Diffie--Hellman-based constructions.  This algebraic structure is well
suited to threshold evaluation as Lagrange-weighted secret shares can be
applied one after another, and the accumulated action is equal to the
action of the reconstructed secret even though the secret is never
explicitly reconstructed.

The same property also makes client blinding natural.  A client can map
its input to a curve \(X\), apply a fresh random action \([r]\), and send
the blinded value \(B=[r]X\) to the server committee.  After the servers
apply the secret-key action, the client removes the blinding by applying
\([-r]\).  The result is the desired value \([k]X\).  The client-side
randomization hides the original curve from the servers, while the
commutativity of the action permits the server-side contributions to be
combined in any fixed order.

At the same time, isogeny group actions introduce an important
verification challenge.  The secret-sharing layer is naturally
expressed over field elements and polynomial commitments, whereas the
OPRF evaluation takes place in the curve-action domain.  A server must
therefore prove that the scalar used in its group action is exactly the
same scalar that opens its certified share commitment.  The protocol
cannot safely treat these as two unrelated statements.  This
cross-domain witness-consistency requirement plays a central role in the
design of the proof relations used by \scheme{}.

\subsection{Prior work and remaining gap}
\label{sec:prior-work}

The literature on OPRFs, threshold cryptography, proactive sharing, and
isogeny-based protocols has developed along several largely independent
directions.  Classical OPRF constructions and applications were
developed in works such as Jarecki and Liu~\citep{jarecki2009}, and OPRFs
later became a central component of password-authenticated protocols such
as OPAQUE~\citep{opaque}.  These constructions established the
importance of oblivious evaluation but were not designed to provide
post-quantum security.

In the isogeny setting, Heimberger et al.~\citep{heimberger2024}
introduced OPUS, an OT-free Naor--Reingold-style OPRF based on CSIDH,
with security studied in the semi-honest setting.  Delpech de Saint
Guilhem and Pedersen~\citep{dsgp2024} developed proof techniques for
CSIDH arithmetic and constructed a VOPRF with protection against
malicious clients.  Basso~\citep{basso2024} investigated round-efficient
isogeny-based OPRFs.  Levin and Pedersen~\citep{levin2025} developed
faster proof techniques and related verifiable-function constructions.
More recently, Pedersen~\citep{pedersen2026} presented a robust threshold
VOPRF from isogeny group actions, obtaining verifiability and
identifiable aborts through an MPC-among-servers architecture.

Threshold mechanisms for isogeny-based systems were considered earlier
by De Feo and Meyer~\citep{defeo2020}, while Beullens et
al.~\citep{beullens2021} studied distributed key generation for CSIDH.
These works provide important techniques for distributing isogeny-based
secrets, but they do not by themselves solve the long-term share
accumulation problem for a threshold OPRF service.

Proactive secret sharing originates in the work of Herzberg et
al.~\citep{herzberg1995}.  Its core idea is to add a random sharing of
zero to the current sharing polynomial, thereby preserving the secret
while replacing the shares.  Proactive threshold OPRF constructions
have also begun to appear in classical groups.  For example, Baecker et
al.~\citep{baecker2025} proposed a proactive threshold OPRF based on a
one-more gap Diffie--Hellman assumption.  That construction, however, is
not isogeny-based and does not provide the same verifiability mechanisms
considered here.

To the best of our knowledge, the combination required for a long-lived
isogeny-based service has not previously been addressed in one
construction i.e. dealerless threshold key generation, verifiable oblivious
evaluation, proactive renewal against a mobile adversary, secure
transition between epochs, public attribution of malformed
contributions, and committee resharing without changing the OPRF key.
Existing isogeny-based threshold OPRFs generally protect against a
bounded static corruption set.  If the same shares remain valid
indefinitely, a mobile adversary can eventually accumulate a
reconstruction set even though the instantaneous number of corruptions
never reaches \(t\).

\subsection{Overview of \scheme{}}
\label{sec:scheme-overview}

We introduce \scheme{} (\emph{Proactive Isogeny-based Verifiable
Oblivious Threshold PRF}), a protocol framework for long-lived
threshold OPRF evaluation from effective isogeny group actions.
\scheme{} maintains a single master key \(k\), but represents that key
by a different Shamir polynomial in each epoch.  During epoch \(e\), the
servers hold
\[
  s_i^{(e)}=F_e(i),
  \qquad
  F_e(0)=k.
\]
The public coefficient-commitment vector
\(\mathbf A^{(e)}\) certifies the current polynomial, while the public
isogeny key
\[
  \pk=[k]E_0
\]
remains unchanged throughout the lifetime of the system.

The construction separates three forms of state.  The first is the
long-lived semantic state consisting of the master key and public key.
The second is the epoch-specific secret-sharing state consisting of the
current polynomial and server shares.  The third is the public
verification state consisting of coefficient commitments, certificates,
proofs, and signatures.  Proactive refresh changes only the
epoch-specific state.  Committee resharing changes both the committee
and the sharing polynomial.  Neither operation changes the master key,
the public key, or the value of the OPRF on a fixed input.

The protocol comprises five main procedures.  A dealerless distributed
key-generation protocol establishes the initial sharing and public key.
A threshold evaluation protocol allows a client to obtain the OPRF
output from a quorum of servers.  A proactive refresh protocol replaces
the current shares with fresh shares of the same key.  A verification
and blame mechanism identifies malformed contributions and supports
restart with a new quorum.  Finally, a committee-resharing protocol
transfers the same secret to a new committee and may simultaneously
change the threshold.

\subsection{Our contributions}
\label{sec:contributions}

The principal contributions of this work may be summarized as follows.

\begin{enumerate}

  \item We formulate an ideal functionality for a long-lived proactive
        threshold verifiable oblivious pseudorandom function.  The
        functionality, denoted by
        \(\F_{\mathsf{pTVOPRF}}\), captures the complete operational
        life cycle of the service, including dealerless distributed key
        generation, threshold OPRF evaluation, epoch-specific public
        verification state, proactive renewal of secret shares, secure
        erasure of obsolete information, migration to a new server
        committee, and identifiable aborts in the presence of malformed
        protocol contributions.  The corresponding adversarial model
        allows the set of corrupted servers to change from one epoch to
        another, subject to the requirement that the adversary obtains
        fewer than \(t\) valid shares from every individual epoch.  This
        formulation makes explicit the distinction between ordinary
        threshold security, which is usually defined with respect to a
        fixed corruption set, and proactive security, which must remain
        meaningful when compromises occur gradually over the lifetime
        of the system.

  \item We present a dealerless construction that combines an effective
        isogeny group action with Shamir secret sharing and an
        additively homomorphic commitment scheme.  Every server
        contributes independently to the initial distributed
        key-generation polynomial, and the qualified contributions are
        aggregated to define a master key \(k\) that is never
        reconstructed by any individual participant.  The public
        coefficient-commitment vector certifies the active sharing
        polynomial, while the isogeny public key
        \[
          \pk=[k]E_0
        \]
        remains associated with its constant term.  Threshold
        evaluation is performed through a sequential chain of
        Lagrange-weighted group actions on the client's blinded input.
        If \(I\) denotes the selected quorum, then the accumulated
        exponent satisfies
        \[
          \sum_{i\in I}\lambda_i^I s_i^{(e)}=k.
        \]
        Consequently, the final curve produced by the server chain is
        the result of applying the master-key action, even though every
        server uses only its own local share.

  \item We develop a proactive maintenance mechanism that renews the
        server shares without changing the underlying OPRF key.  During
        the transition from epoch \(e\) to epoch \(e+1\), every refresh
        dealer distributes a verifiable sharing of a random polynomial
        \(z_j(X)\) satisfying
        \[
          z_j(0)=0.
        \]
        The next sharing polynomial is defined by
        \[
          F_{e+1}(X)
          =
          F_e(X)+
          \sum_{j\in\mathcal R_e}z_j(X),
        \]
        where \(\mathcal R_e\) denotes the set of qualified refresh dealers,
        and therefore
        \[
          F_{e+1}(0)=F_e(0)=k.
        \]
        The public coefficient commitments are updated
        homomorphically, while a coordinated epoch-transition procedure
        ensures that every successful evaluation is associated with one
        consistent epoch certificate.  We also provide a committee
        resharing procedure in which the Lagrange-weighted shares of an
        old committee are redistributed as the constant terms of fresh
        sharing polynomials for a new committee.  This permits changes
        in committee membership and threshold parameters without
        replacing the master key, changing the public key, or
        invalidating previously generated OPRF outputs.

  \item We identify and formalize the joint NP relations required to
        connect the polynomial-sharing and isogeny-action components of
        the construction.  These relations enforce that a scalar
        committed in the secret-sharing domain is the same scalar used
        in the corresponding group action.  In particular, we define
        the relations
        \[
          \mathcal R_{\mathsf{link}},\qquad
          \mathcal R_{\mathsf{eval}},\qquad
          \mathcal R_{\mathsf{blind}},\qquad
          \mathcal R_{\mathsf{reshare}},
        \]
        which respectively bind DKG constant terms to public-key
        contributions, certify partial threshold evaluations, establish
        correct client blinding, and connect resharing polynomials to
        certified old shares.  We explain why separately proving a
        commitment-opening statement and a group-action statement does
        not automatically establish equality of the witnesses used in
        the two proofs.  The construction therefore requires proofs for
        the complete joint relations, or an explicit witness-equality
        mechanism, rather than an unlinked conjunction of independent
        algebraic statements.

  \item We provide a detailed correctness and simulation-based security
        analysis of the construction and illustrate its use in
        long-lived private lookup services.  The correctness analysis
        establishes that the dealerless DKG creates a valid sharing of
        the master key, that every successful threshold evaluation
        returns the intended value \(F_k(x)\), and that proactive
        refresh and committee resharing preserve the same key \(k\).
        The security analysis is carried out in the stated hybrid model
        under the vectorization and one-more hidden-group-action
        assumptions, together with the hiding and binding properties of
        the commitment scheme, the security of the NIZK and signature
        systems, and the secure-erasure assumption.  As an application,
        we describe a distributed private lookup service whose database
        is indexed or encrypted using OPRF-derived values.  Since the
        refresh and resharing procedures preserve
        \[
          F_k(x)
          =
          \Htwo\!\left(
            \mathsf{PIVOT-out}\|
            \ctx\|
            \pk\|
            x\|
            \mathsf{enc}\bigl([k]\Hone(\ctx\|x)\bigr)
          \right),
        \]
        the database remains valid across repeated share-renewal
        operations and committee migrations.

\end{enumerate}

\subsection{Technical overview}
\label{sec:technical-overview}

We now give an informal description of the construction.  The formal
notation and assumptions appear in Section~\ref{sec:prelim}, and the complete
protocol is specified in Section~\ref{sec:protocol}.

Each server \(S_j\) samples a degree-at-most-\((t-1)\) polynomial
\[
  f_j(X)=\sum_{\ell=0}^{t-1}a_{j,\ell}X^\ell
\]
and distributes its evaluations using verifiable secret sharing.  The
contributions of the qualified dealers (the set $\mathcal Q$ of dealers
that passed VSS verification) are added to obtain
\[
  F_0(X)=\sum_{j\in\mathcal Q}f_j(X).
\]
Server \(S_i\) stores
\(s_i^{(0)}=F_0(i)\), and the master key is
\[
  k=F_0(0)=\sum_{j\in\mathcal Q}a_{j,0}.
\]
The homomorphic coefficient commitments aggregate into a public vector
\(\mathbf A^{(0)}\) that certifies the initial sharing polynomial.

The public key is formed through a sequential chain of constant-term
actions.  Each dealer proves that the exponent used in its contribution
is the same value committed as the constant coefficient of its DKG
polynomial.  The final public curve is therefore
\[
  \pk=[k]E_0.
\]
This link is necessary because a commitment to a scalar and an isogeny
action by that scalar live in different algebraic domains.

For an application context string \(\ctx\), the protocol evaluates the
keyed function
\begin{equation}
  F_k(x)=
  \Htwo\!\left(
    \mathsf{PIVOT-out}\|
    \ctx\|
    \pk\|
    x\|
    \mathsf{enc}\bigl([k]\Hone(\ctx\|x)\bigr)
  \right).
  \label{eq:prf}
\end{equation}
Here \(\Hone\) hashes the input to the group-action orbit,
\(\mathsf{PIVOT-out}\) is a fixed domain-separator string that prevents
cross-protocol hash collisions, \(\ctx\) identifies the OPRF application,
\(\pk=[k]E_0\) is the public key, \(\mathsf{enc}\) is a canonical
byte-encoding of the resulting curve, and \(\Htwo\) derives the final
pseudorandom output.  The epoch number is not
included in the output hash because the intended function must remain
unchanged when the shares are refreshed.

The client computes
\[
  X=\Hone(\ctx\|x),
\]
samples a random scalar \(r\) from the field \(\Zq=\mathbb Z/q\mathbb Z\), and sends
\[
  B=[r]X
\]
to an ordered quorum \(I=\{i_1,\ldots,i_t\}\).  Server \(S_{i_h}\)
computes the next curve in the evaluation chain
\[
  Q_h=
  [\lambda_{i_h}^I s_{i_h}^{(e)}]Q_{h-1},
  \qquad
  Q_0=B,
\]
where \(\lambda_{i_h}^I\) is the Lagrange coefficient for interpolation
at zero.  After all \(t\) actions,
\[
  Q_t=
  \left[
    \sum_{i\in I}\lambda_i^I s_i^{(e)}
  \right]B
  =
  [k]B.
\]
The client removes the blinding,
\[
  Y=[-r]Q_t=[k]X.
\]
It then derives \(F_k(x)\) using Equation~\eqref{eq:prf}.

Every server proves that the share used in its partial action opens the
share commitment derived from the current epoch commitment vector.
The proof statement is bound to the context, session, epoch,
certificate, quorum, and position in the chain.  Consequently, a
partial response cannot be transplanted into an unrelated evaluation or
combined with shares certified under a different epoch state.

At the transition from epoch \(e\) to epoch \(e+1\), each refresh dealer
\(S_j\) samples
\[
  z_j(X)=\sum_{\ell=1}^{t-1}b_{j,\ell}X^\ell.
\]
Because the constant term is zero, adding this polynomial to the current
sharing leaves the secret unchanged.  The next sharing polynomial is
\[
  F_{e+1}(X)
  =
  F_e(X)+
  \sum_{j\in\mathcal R_e}z_j(X),
\]
and hence
\[
  F_{e+1}(0)=F_e(0)=k.
\]
Each server updates its share by adding the refresh evaluations it
receives.  The higher-degree coefficient commitments are updated
homomorphically, while the commitment to the constant term remains
unchanged.

The refresh is completed through a coordinated epoch transition.  The
servers first agree on the next commitment vector and certificate, then
activate the new state.  Evaluation requests are bound to a single
certificate, so shares from \(F_e\) and \(F_{e+1}\) cannot be combined
in one successful chain.  After activation, honest servers erase their
old shares and refresh randomness.  Under the mobile-adversary bound,
the adversary therefore obtains fewer than \(t\) shares from every
individual epoch.

Suppose the old committee uses threshold \(t\) and the new committee
uses threshold \(t'\).  An old quorum \(I\) satisfies
\[
  \sum_{i\in I}\lambda_i^I s_i^{(e)}=k.
\]
Each old server \(S_i\) samples a degree-at-most-\((t'-1)\) polynomial
whose constant term is
\(\lambda_i^I s_i^{(e)}\).  The new servers add the evaluations received
from all old dealers.  Their aggregate polynomial \(G(X)\) satisfies
\[
  G(0)=
  \sum_{i\in I}\lambda_i^I s_i^{(e)}
  =k.
\]
The resharing proof links every old dealer's new constant-term
commitment to its certified old share.  The new committee therefore
obtains a fresh sharing of the same key, while the public key and all
previous OPRF outputs remain unchanged.

\subsection{Architectural comparison with prior threshold VOPRFs}
\label{sec:architectural-comparison}

The recent threshold VOPRF of Pedersen~\citep{pedersen2026} follows an
MPC-among-servers architecture.  From the client's perspective, the
committee behaves like one virtual server, and the protocol can achieve
a transcript whose size is independent of the threshold.  \scheme{}
adopts a different design.  It exposes the threshold structure directly, each server contributes one signed and proven partial group action to a
sequential chain.

This native-threshold architecture has an \(O(t)\) evaluation
transcript, and therefore does not match the constant-size client
transcript of an MPC-emulated virtual server.  Its advantage is that the
secret-sharing state is explicit.  Proactive refresh can be expressed as
the addition of zero-sharing polynomials, and each server's contribution
remains individually attributable.  The construction should therefore
be understood as a different point in the design space rather than a
strict improvement in every performance dimension.

Adding proactive security to an MPC-based threshold VOPRF may require
refreshing the secret-shared MPC state and coordinating the transition of
the virtual server.  In \scheme{}, by contrast, the maintained state is
already represented as Shamir shares with public coefficient
commitments.  Refresh and committee migration are consequently integrated
directly into the protocol architecture.

\subsection{Comparison with related protocols}

Table~\ref{tab:comparison} summarizes the qualitative properties of the
most closely related OPRF and threshold constructions.  The table is
intended to place the protocol in context rather than to provide a full
performance comparison. Concrete efficiency also depends on parameter
selection, the proof system used for the group-action relations, and the
network model.

\begin{table}[ht]
\centering
\caption{Qualitative comparison with selected OPRF and threshold
constructions.  The notation \(\star\) indicates that the referenced
construction does not provide VOPRF-style verifiability,
\(\dagger\) indicates a classical Diffie--Hellman foundation and
\(\ddagger\) indicates a client transcript independent of the threshold
through MPC among the servers.}
\label{tab:comparison}
\small
\begin{tabular}{lcccccc}
\toprule
Protocol
& Threshold
& Oblivious
& Verifiable
& Proactive
& PQ
& Transcript \\
\midrule
Jarecki--Liu~\citep{jarecki2009}
& \(\times\)
& \checkmark
& \(\times\)
& \(\times\)
& \(\times^\dagger\)
& \(O(1)\) \\

OPAQUE~\citep{opaque}
& \(\times\)
& \checkmark
& \(\times\)
& \(\times\)
& \(\times^\dagger\)
& \(O(1)\) \\

OPUS~\citep{heimberger2024}
& \(\times\)
& \checkmark
& \(\times\)
& \(\times\)
& \checkmark
& \(O(1)\) \\

DSGP~\citep{dsgp2024}
& \(\times\)
& \checkmark
& \checkmark
& \(\times\)
& \checkmark
& \(O(1)\) \\

Basso~\citep{basso2024}
& \(\times\)
& \checkmark
& \(\times\)
& \(\times\)
& \checkmark
& \(O(1)\) \\

Baecker et al.~\citep{baecker2025}
& \checkmark
& \checkmark
& \(\times^\star\)
& \checkmark
& \(\times^\dagger\)
& \(O(1)\) \\

De Feo--Meyer~\citep{defeo2020}
& \checkmark
& \(\times\)
& \(\times\)
& \(\times\)
& \checkmark
& -- \\

Pedersen~\citep{pedersen2026}
& \checkmark
& \checkmark
& \checkmark
& \(\times\)
& \checkmark
& \(O(1)^\ddagger\) \\
\midrule

\textbf{\scheme{}}
& \checkmark
& \checkmark
& \checkmark
& \checkmark
& \checkmark
& \(O(t)\) \\
\bottomrule
\end{tabular}
\end{table}

\subsection{Applications}
\label{sec:intro-applications}

The intended use of \scheme{} is not a one-time cryptographic exchange
but a service that must preserve one logical OPRF key over an extended
period.  In a password-authenticated key-exchange deployment, for
example, several authentication servers may jointly provide the OPRF
operation used to protect password records.  Proactive refresh limits
the value of a temporary server compromise without requiring the entire
credential database to be rebuilt.  Since the OPRF key remains
unchanged, records derived under the existing public key remain valid.

A second application is distributed private set intersection or private
membership testing.  A service can encode set elements using OPRF
outputs while distributing the OPRF key across several administrative
domains.  Refresh protects the long-lived key against gradual
compromise, and committee resharing permits a provider to replace or
reorganize the server set without recomputing the encoded database.

The same property is useful in private lookup services.  A provider may
publish an encrypted decision table whose lookup keys are derived from
\(F_k(x)\).  Clients privately evaluate the OPRF and use the result to
recover the matching encrypted entry.  Since neither refresh nor
resharing changes \(F_k\), the table remains valid across system
maintenance operations.  This is particularly important when the table
is large or widely replicated, because re-encryption under a new key
would otherwise be operationally expensive.

Finally, the construction may support anonymous rate-limiting and
credential-checking systems in which stable pseudorandom tags are
required, but no single machine should hold the tagging key.  In such
applications, proactive maintenance provides protection against a
sequence of temporary compromises while preserving stable tags for
legitimate clients.

\subsection{Organization of the paper}

Section~\ref{sec:prelim} introduces the algebraic notation, group-action
model, Shamir secret sharing, homomorphic commitments, NIZK proof
systems, and distributed primitives used throughout the construction.
Section~\ref{sec:protocol} specifies the five sub-protocols of
\scheme: dealerless distributed key generation, threshold evaluation,
proactive share refresh, blame and robust restart, and committee
resharing.  Section~\ref{sec:correctness} establishes correctness of
each sub-protocol and states an overall invariant that is preserved
across the full system lifetime.  Section~\ref{sec:security} provides
the adversarial model, the ideal functionality, and a simulation-based
security proof showing input privacy, key secrecy, and proactive
protection.  Section~\ref{sec:effi} analyses communication and
computational costs, compares the construction with prior work, and
concludes with the properties and limitations of the design.

\section{Preliminaries}
\label{sec:prelim}

This section introduces the algebraic notation and cryptographic tools used
throughout the construction.  We begin with the basic computational and
protocol notation, then describe the effective group-action abstraction on
which the OPRF is built.  We next recall Shamir secret sharing and the
coefficient-wise homomorphic commitments used to certify server shares.
Finally, we summarize the proof systems, communication assumptions, and
distributed primitives required by the protocol.

The presentation is intentionally self-contained.  In particular, we state
the precise algebraic identities used later in the correctness proof and
explain how the public commitment state, the isogeny public key, and the
epoch mechanism are related.  The security assumptions associated with
these objects are stated formally in Section~\ref{sec:security}, the present
section fixes their syntax and the functionality expected from each
primitive.

\subsection{General notation and computational conventions}
\label{sec:notation}

The security parameter is denoted by \(\lambda\in\mathbb N\).  All
algorithms are probabilistic polynomial-time algorithms unless stated
otherwise.  We write \(1^\lambda\) for the unary representation of the
security parameter and use \(\poly(\lambda)\) for an unspecified
polynomial in \(\lambda\).  A function
\(\epsilon:\mathbb N\rightarrow\mathbb R_{\geq 0}\) is
\emph{negligible}, written \(\epsilon(\lambda)=\negl(\lambda)\), if for
every positive polynomial \(p\) there exists \(\lambda_0\) such that
\(\epsilon(\lambda)<1/p(\lambda)\) for every
\(\lambda\geq\lambda_0\).  Two distribution ensembles
\(\{X_\lambda\}_{\lambda\in\mathbb N}\) and
\(\{Y_\lambda\}_{\lambda\in\mathbb N}\) are computationally
indistinguishable, written \(X\approx_c Y\), if no probabilistic
polynomial-time distinguisher separates them with more than negligible
advantage.

For a finite set \(S\), the notation \(x\leftarrow S\) means that \(x\)
is sampled uniformly from \(S\).  More generally,
\(x\leftarrow\mathcal D\) denotes sampling from a distribution
\(\mathcal D\), and \(y\leftarrow\mathsf{Alg}(x)\) denotes the output of
a randomized algorithm.  Concatenation of bit strings is written
\(u\|v\).  The notation \([n]=\{1,\ldots,n\}\) is used for server
indices.  We assume throughout that \(1\leq t\leq n\), where \(n\) is
the number of servers and \(t\) is the reconstruction threshold.

Let \(q\) be a prime and let
\(\Zq=\mathbb Z/q\mathbb Z\).  Server identifiers
\(1,\ldots,n\) are interpreted as distinct nonzero elements of \(\Zq\),
in particular, we require \(n<q\).  This condition ensures that the
denominators occurring in Lagrange interpolation are nonzero and hence
invertible in \(\Zq\).  Unless explicitly stated otherwise, all scalar
addition, subtraction, multiplication, inversion, and polynomial
evaluation are performed in \(\Zq\).  An element of \(\Zq\) written as a
plain integer denotes the unique representative of
that residue class lying in \(\{0,\ldots,q-1\}\).

We use calligraphic letters such as \(\mathcal A\), \(\mathcal S\), and
\(\mathcal F\) for adversaries, simulators, and ideal functionalities,
respectively.  Bold symbols such as
\(\mathbf A=(A_0,\ldots,A_{t-1})\) denote vectors.  A statement--witness
pair for an NP relation \(\mathcal R\) is written
\((\mathsf{stmt};\mathsf{wit})\), with the semicolon separating public
and private data.

\subsection{Protocol identities, sessions, epochs, and certificates}
\label{sec:identifiers}

The protocol is intended to support a long-lived distributed service.
It therefore distinguishes the identity of the application, the identity
of an individual evaluation session, and the epoch to which the active
server shares belong.

The string
\[
  \ctx\in\{0,1\}^*
\]
is an application-specific context string.  It identifies the logical
OPRF instance and serves as a domain separator.  For example, two
applications operated by the same server committee may use distinct
contexts such as \texttt{password-vault-v1} and
\texttt{private-lookup-v1}.  The context is included in the hash-to-orbit
computation and in the final output hash, so an OPRF value derived in one
application cannot be reused as a valid value in another application.

Every evaluation is associated with a session identifier
\[
  \sid\in\{0,1\}^*.
\]
The identifier must be fresh for the relevant protocol instance, and
servers maintain sufficient replay state to reject a second request with
the same identifier.  The session identifier binds together the client
request, the chosen quorum, the sequence of server actions, the NIZK
proofs, and the corresponding signatures.  It therefore prevents a
message produced in one evaluation from being silently inserted into a
different evaluation.

For malicious-client security, freshness alone is not always sufficient
when \(\sid\) is also used to determine the evaluation quorum.  A client
that can try arbitrarily many candidate identifiers may grind over them
until a preferred quorum is selected.  When unbiased quorum selection is
required by the security or liveness analysis, \(\sid\) must therefore
contain an unpredictable contribution from the servers, a public random
beacon, or another source that the client cannot choose adaptively.  The
correctness of the OPRF evaluation does not depend on quorum
unpredictability, it requires only that the selected indices are distinct
and that all participating servers agree on the same quorum.

Time is divided into epochs \(e\in\mathbb N\).  The system begins in
epoch \(0\), immediately after the distributed key-generation procedure,
and advances from epoch \(e\) to epoch \(e+1\) after a successful
proactive refresh.  During epoch \(e\), the servers hold shares
\[
  s_i^{(e)}=F_e(i)
\]
of a polynomial \(F_e(X)\) satisfying \(F_e(0)=k\), where \(k\) is the
long-lived master key.  A refresh replaces \(F_e\) with a new polynomial
\(F_{e+1}\) having the same constant term.

The public state of epoch \(e\) is summarized by an epoch certificate
\[
  \cert_e=
  \bigl(
    e,\ctx,\pk,\mathbf A^{(e)},\mathcal Q_e,
    \mathsf{digest}_e,\ldots
  \bigr),
\]
where \(\mathbf A^{(e)}\) is the vector of commitments to the
coefficients of \(F_e\), \(\mathcal Q_e\) records the active or qualified
server set, and \(\mathsf{digest}_e\) binds the relevant DKG, complaint,
or refresh transcript.  The exact certificate fields depend on the
sub-protocol, but every certificate must bind the epoch number, public
key, and active commitment vector.  An evaluation request includes the
epoch number and a collision-resistant hash of \(\cert_e\).  Each server
checks these values before applying its share.  This check is necessary
because shares from different epochs generally lie on different
polynomials and must never be combined in one interpolation.

We use
\[
  H:\{0,1\}^*\longrightarrow\{0,1\}^{\kappa}
\]
for a collision-resistant hash function used to derive transcript
digests, certificate hashes, and public seeds.  This function is distinct
from the two OPRF-related hash functions \(\Hone\) and \(\Htwo\)
introduced below.

\paragraph{Deterministic quorum selection.}
The public algorithm
\[
  \mathsf{SelectQuorum}(\sid,e,n,t)
\]
returns an ordered \(t\)-element subset
\(I=(i_1,\ldots,i_t)\) of \([n]\).  A convenient implementation derives
a public seed from \(H(\sid\|e\|\ctx)\), expands it into a permutation of
\([n]\), and selects the first \(t\) indices.  The output is ordered
because the sequential evaluation transcript must identify which server
acts at each position.  Every party recomputes the same ordered quorum
and rejects a request containing a different set or ordering.  As noted
above, resistance to quorum grinding additionally requires an
unbiasable source in the derivation of \(\sid\).

\subsection{Execution and communication model}
\label{sec:communication-model}

The protocol is described in a hybrid model that provides authenticated
point-to-point channels, authenticated broadcast, a public-key
infrastructure for signatures, a commitment functionality, a NIZK
functionality, and secure erasure.  These abstractions isolate the main
group-action and proactive-sharing ideas from lower-level network
mechanisms.

An authenticated private channel guarantees that the receiver learns the
identity of the sender and that the message is neither modified nor read
by parties other than the designated sender and receiver.  Such channels
are used for distributing VSS shares and refresh contributions.
Authenticated broadcast guarantees that all honest parties receive the
same message attributed to the same sender.  It is used for coefficient
commitments, complaints, qualification decisions, epoch certificates,
and transition messages.  Digital signatures make protocol evidence
portable, a signed partial evaluation or refresh contribution can later
be included in a publicly verifiable blame certificate.

The protocol does not require all honest servers to change epochs at the
same physical instant.  Instead, correctness is formulated through
\emph{epoch-consistent evaluation}, every successful evaluation must be
bound to one epoch certificate, and every server in the chain must act
using the share certified by that certificate.  If servers temporarily
disagree about the active epoch during a transition, a request that
crosses the boundary is rejected or retried rather than completed using
a mixture of old and new shares.

\subsection{Effective isogeny group actions}
\label{sec:group-action}

The OPRF is built from an efficiently computable action of a finite
abelian group on a set of elliptic-curve objects.  We first present the
abstract structure needed by the protocol and then explain its relation
to CSIDH-style actions.

Let \(G=\langle g\rangle\) be a cyclic group of prime order \(q\), and
let \(\E\) be a finite set.  An action of \(G\) on \(\E\) is a map
\[
  \star:G\times\E\longrightarrow\E
\]
satisfying
\[
  1_G\star E=E
  \qquad\text{and}\qquad
  (uv)\star E=u\star(v\star E)
\]
for all \(u,v\in G\) and \(E\in\E\).  We write
\[
  [a]E:=g^a\star E
  \qquad\text{for }a\in\Zq.
\]
Under this notation, the action law becomes
\begin{equation}
  [0]E=E,
  \qquad
  [a]([b]E)=[a+b]E.
  \label{eq:action-law}
\end{equation}
The inverse action is represented by \([-a]E\), and therefore
\[
  [-a]([a]E)=E.
\]
These identities are the algebraic reason that the client can blind and
later unblind an input and that the servers can accumulate
Lagrange-weighted shares through a sequential chain of actions.

The action is \emph{free} if
\[
  [a]E=E
  \quad\Longrightarrow\quad
  a=0\pmod q,
\]
and it is \emph{transitive} if, for every \(E,E'\in\E\), there exists
\(a\in\Zq\) such that \(E'=[a]E\).  A free and transitive action makes
\(\E\) a principal homogeneous space, or torsor, for \(G\).  In
particular, for every ordered pair \((E,E')\in\E^2\), there is a unique
scalar \(a\in\Zq\) satisfying \(E'=[a]E\).

In the intended isogeny setting, \(\E\) consists of canonical
representatives of isomorphism classes of supersingular elliptic curves
within one class-group orbit, and the action is induced by ideal classes.
The protocol requires the following operations to be efficient,

\begin{enumerate}[nosep,label=(\roman*)]
  \item Testing whether an encoded object is a valid member of \(\E\),
  \item Computing \([a]E\) for \(a\in\Zq\) and \(E\in\E\),
  \item Computing inverse actions \([-a]E\),
  \item Canonically encoding the resulting curve object.
\end{enumerate}

The associated \emph{vectorization problem} is the following, given
\(E,E'\in\E\) with \(E'=[a]E\), recover \(a\).  A hard homogeneous space
is, informally, a free and transitive action for which the action is
efficient but vectorization is computationally infeasible.  The formal
vectorization assumption used by \scheme{} is stated in
Section~\ref{sec:security}.

\begin{remark}[Prime-order abstraction and CSIDH]
\label{rem:prime-order-abstraction}
The scalar notation above assumes a publicly specified cyclic
prime-order action.  This abstraction is convenient because Shamir
sharing, Lagrange interpolation, and all proof witnesses then live in the
same field \(\Zq\).  The full ideal class group used in standard CSIDH
descriptions is generally an arbitrary finite abelian group rather than
a known cyclic group of prime order.  A concrete instantiation must
therefore either identify a suitable prime-order cyclic subgroup with an
efficiently computable action or generalize the sharing layer from field
scalars to an appropriate product or module representation of the class
group.  The protocol and proofs in this paper apply directly to the
prime-order abstraction stated above, adapting them to a full
vector-exponent CSIDH representation requires an explicit parameter and
encoding treatment.
\end{remark}

\subsection{Hashing to the action set and canonical encodings}
\label{sec:hashing}

We fix a public base element \(E_0\in\E\).  The long-lived public key is
\[
  \pk=[k]E_0,
\]
where \(k\in\Zq\) is the master secret shared among the servers.

The function
\[
  \Hone:\{0,1\}^*\longrightarrow\E
\]
maps arbitrary strings to valid elements of the action set.  We model
\(\Hone\) as a random oracle into \(\E\), or equivalently assume a
hash-to-orbit procedure whose output distribution is computationally
indistinguishable from uniform over the relevant orbit.  The context is
included in every call,
\[
  X=\Hone(\ctx\|x).
\]
This domain separation prevents the same raw input from being interpreted
as the same OPRF point under unrelated applications.

The function
\[
  \mathsf{enc}:\E\longrightarrow\{0,1\}^*
\]
is a canonical injective encoding.  Canonicality is important because
elliptic curves may admit several mathematically equivalent
representations.  The protocol hashes an encoding of the resulting curve
object, so two parties acting on the same element of \(\E\) must obtain
the same byte string.  The injectivity requirement is with respect to the
canonical representatives used by the protocol.

The output hash
\[
  \Htwo:\{0,1\}^*\longrightarrow\{0,1\}^{\ell}
\]
is modelled as a random oracle and is domain-separated from \(\Hone\).
For a fixed public string \(\mathsf{PIVOT-out}\), the keyed
function evaluated by the protocol is
\begin{equation}
  F_k(x)=
  \Htwo\!\left(
    \mathsf{PIVOT-out}\|
    \ctx\|
    \pk\|
    x\|
    \mathsf{enc}\bigl([k]\Hone(\ctx\|x)\bigr)
  \right).
  \label{eq:prf-prelim}
\end{equation}
Including \(\ctx\), \(\pk\), and \(x\) in the final hash binds the output
to the application, the public key, and the exact client input.  The
epoch number is deliberately omitted, proactive refresh changes the
sharing polynomial but not \(k\), and therefore the OPRF output should
remain stable across epochs.

\subsection{Shamir secret sharing}
\label{sec:shamir}

Shamir secret sharing distributes a field element among \(n\) parties so
that any \(t\) shares reconstruct the secret, whereas fewer than \(t\)
shares reveal no information about it.

To share a secret \(k\in\Zq\), sample coefficients
\(a_1,\ldots,a_{t-1}\leftarrow\Zq\) and define
\[
  f(X)=k+\sum_{\ell=1}^{t-1}a_\ell X^\ell.
\]
The share of server \(S_i\) is
\[
  s_i=f(i).
\]
The polynomial has degree at most \(t-1\), its degree may be smaller if
one or more leading coefficients happen to be zero.

Let \(I\subseteq[n]\) contain at least \(t\) distinct indices.  For each
\(i\in I\), define the Lagrange coefficient for interpolation at zero by
\begin{equation}
  \lambda_i^I
  =
  \prod_{\substack{j\in I\\j\neq i}}
  \frac{-j}{i-j}
  \pmod q.
  \label{eq:lagrange-coefficient}
\end{equation}
Then
\begin{equation}
  f(0)
  =
  \sum_{i\in I}\lambda_i^I f(i)
  =
  \sum_{i\in I}\lambda_i^I s_i.
  \label{eq:lagrange}
\end{equation}
In the evaluation protocol, \(I\) has exactly \(t\) elements, but the
identity holds for any larger set as well when the coefficients are
computed for that set.

The privacy of Shamir sharing is information-theoretic.  For any set
\(J\subseteq[n]\) with \(|J|<t\), the joint distribution of
\(\{f(j)\}_{j\in J}\) is uniform over \(\Zq^{|J|}\) for every fixed
secret \(k\).  Equivalently, the observed shares can be extended to a
degree-at-most-\((t-1)\) polynomial having any desired constant term.
This property is the basis for threshold key privacy and for the
simulation of corrupted servers.

The following consequence connects Shamir interpolation to the group
action used by the OPRF.

\begin{lemma}[Interpolation through the group action]
\label{lem:interpolation-action}
Let \(s_i=f(i)\) be Shamir shares of \(k=f(0)\), and let
\(I\subseteq[n]\) be a reconstruction set.  Then, for every
\(E\in\E\),
\[
  \left[
    \sum_{i\in I}\lambda_i^I s_i
  \right]E
  =
  [k]E.
\]
Moreover, if the actions are applied sequentially in any order, then
\[
  [\lambda_{i_t}^I s_{i_t}]
  \cdots
  [\lambda_{i_2}^I s_{i_2}]
  [\lambda_{i_1}^I s_{i_1}]E
  =
  [k]E.
\]
\end{lemma}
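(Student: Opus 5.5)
The plan is to reduce both claims to two facts already in hand: the Lagrange identity \eqref{eq:lagrange} for the first claim, and the action law \eqref{eq:action-law} for the second. The scalar identity carries all the content; the group action contributes only the fact that composing actions adds exponents.

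\emph{First claim.} Since \(I\) contains at least \(t\) distinct indices and \(f\) has degree at most \(t-1\), equation~\eqref{eq:lagrange} gives \(\sum_{i\in I}\lambda_i^I s_i=k\) as elements of \(\Zq\). The Lagrange coefficients are well defined because \(n<q\), so the denominators \(i-j\) in \eqref{eq:lagrange-coefficient} are nonzero. The notation \([a]E\) means \(g^a\star E\), and \(g\) has order \(q\). Hence \(g^a\) depends only on the residue of \(a\) modulo \(q\), and equal scalars in \(\Zq\) give the same action. Therefore \([\sum_{i\in I}\lambda_i^I s_i]E=[k]E\).

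\emph{Second claim.} Write the chosen order as \(i_1,\dots,i_m\), where \(m=|I|\). Set \(Q_0=E\) and \(Q_h=[\lambda_{i_h}^I s_{i_h}]Q_{h-1}\). I would prove by induction on \(h\) that
\[
  Q_h=\Bigl[\sum_{u=1}^{h}\lambda_{i_u}^I s_{i_u}\Bigr]E .
\]
The base case is \(Q_0=[0]E=E\), by \eqref{eq:action-law}. For the inductive step, apply \([a]([b]E)=[a+b]E\) to \(Q_{h-1}\). At \(h=m\) the exponent is \(\sum_{i\in I}\lambda_i^I s_i\), and the first claim turns this into \([k]E\).

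\emph{Independence of order.} Any permutation of \(I\) yields the same sum in the commutative group \(\Zq\), so the result does not depend on the order in which the actions are applied. Equivalently, \(G\) is abelian, so the factors \(g^{\lambda_i s_i}\) commute.

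The only delicate point is a mismatch in notation. The statement writes the chain with \(t\) factors, while the lemma allows \(|I|\ge t\). I would take the chain length to be \(|I|\), with coefficients computed for \(I\); this equals \(t\) in the evaluation protocol. Beyond this, the argument is routine: it is just the observation that \(a\mapsto[a]E\) turns addition in \(\Zq\) into composition of actions.
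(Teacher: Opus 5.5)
Your proposal is correct and follows the same route as the paper: the Lagrange identity \eqref{eq:lagrange} gives the scalar equality, and repeated use of the action law \eqref{eq:action-law}, which you make explicit as an induction on the chain, turns the sequential actions into one action by the summed exponent. Your remarks on well-definedness modulo \(q\), on order independence, and on reading the chain length as \(|I|\) fill in details the paper leaves implicit.
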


\begin{proof}
Equation~\eqref{eq:lagrange} gives
\(\sum_{i\in I}\lambda_i^I s_i=k\).  Repeated application of
Equation~\eqref{eq:action-law} shows that sequential actions add their
exponents.  The claimed identities follow immediately.
\end{proof}

\paragraph{Zero-sharing polynomials.}
A polynomial
\[
  z(X)=\sum_{\ell=1}^{t-1}b_\ell X^\ell
\]
is called a zero-sharing polynomial because \(z(0)=0\).  If \(f\)
shares \(k\), then
\[
  f'(X)=f(X)+z(X)
\]
also shares \(k\), since \(f'(0)=f(0)\).  The refreshed shares are
\(s_i'=s_i+z(i)\).  This elementary identity is the algebraic foundation
of proactive refresh.

\subsection{Additively homomorphic coefficient commitments}
\label{sec:commitment}

The protocol uses commitments to certify the coefficients of sharing
polynomials without publishing those coefficients.  We stress that this
is a vector of ordinary commitments to coefficients, not a succinct
polynomial-commitment scheme in the sense of KZG-style commitments.

A commitment scheme consists of algorithms
\[
  \Com:\Zq\times\R\longrightarrow\C
  \qquad\text{and}\qquad
  \mathsf{Open},
\]
where \(\R\) is the randomness space and \(\C\) is the commitment space.
We write \(\Com(a;\rho)\) for a commitment to \(a\in\Zq\) using
randomness \(\rho\in\R\).  The scheme must satisfy the following
properties.

\begin{enumerate}[label=(\roman*)]
  \item \textbf{Correctness.}
        An honestly generated commitment opens successfully to the
        committed value and randomness.

  \item \textbf{Hiding.}
        Commitments to any two values are computationally
        indistinguishable, or statistically indistinguishable when a
        statistically hiding instantiation is used.

  \item \textbf{Binding.}
        No probabilistic polynomial-time adversary can produce one
        commitment together with valid openings to two distinct field
        elements, except with negligible probability.

  \item \textbf{Additive homomorphism.}
        There is a public operation \(\oplus\) on commitments such that
        \[
          \Com(a;\rho)\oplus\Com(b;\eta)
          =
          \Com(a+b;\rho+\eta).
        \]

  \item \textbf{Public scalar multiplication.}
        For every public \(c\in\Zq\), there is a public operation
        \(\odot\) satisfying
        \[
          c\odot\Com(a;\rho)
          =
          \Com(ca;c\rho).
        \]
\end{enumerate}

Post-quantum instantiations may be obtained from standard lattice-based
commitment techniques, provided that the chosen scheme supports the
required additive operations over the scalar domain used by the sharing
scheme.  The security proof treats the commitment layer modularly and
uses only the properties listed above.

Let
\[
  f(X)=\sum_{\ell=0}^{t-1}a_\ell X^\ell
\]
and let
\[
  A_\ell=\Com(a_\ell;\rho_\ell)
  \qquad\text{for }0\leq\ell<t.
\]
The vector
\[
  \mathbf A=(A_0,\ldots,A_{t-1})
\]
is called the coefficient-commitment vector of \(f\).  For a public
index \(i\in[n]\), define
\begin{equation}
  \EvalCom(\mathbf A,i)
  :=
  \bigoplus_{\ell=0}^{t-1}
  i^\ell\odot A_\ell.
  \label{eq:evalcom-definition}
\end{equation}
By homomorphism,
\begin{align}
  \EvalCom(\mathbf A,i)
  &=
  \Com\!\left(
    \sum_{\ell=0}^{t-1}a_\ell i^\ell;
    \sum_{\ell=0}^{t-1}\rho_\ell i^\ell
  \right) \notag\\
  &=
  \Com(f(i);\omega_i),
  \label{eq:evalcom}
\end{align}
where
\[
  \omega_i=
  \sum_{\ell=0}^{t-1}\rho_\ell i^\ell.
\]
Thus anyone can derive a commitment to the share that server \(S_i\)
should hold, while only \(S_i\) needs to know the opening
\((f(i),\omega_i)\).

This mechanism is used repeatedly.  During the DKG, each dealer
commits to its polynomial and receivers verify the shares they obtain.
After aggregation, the public coefficient commitments add to a
commitment vector for the aggregate sharing polynomial.  During
proactive refresh, commitments to zero-polynomial coefficients are added
to the active vector.  During evaluation, a server proves that the share
used in its group action opens the publicly derived commitment
\(\EvalCom(\mathbf A^{(e)},i)\).

\subsection{NP relations and witness consistency}
\label{sec:np-relations}

The proof obligations appearing in \scheme{} are expressed as
\emph{NP relations}.  Introducing this abstraction explicitly is useful
because the protocol does not merely require a party to prove knowledge
of an isolated secret.  Instead, a prover must often demonstrate that the
same private value is consistent with several public objects produced in
different algebraic domains.

An NP relation is a polynomial-time decidable relation
\[
  \mathcal R\subseteq\mathcal X\times\mathcal W,
\]
where \(\mathcal X\) is the statement space and \(\mathcal W\) is the
witness space.  A pair \((x,w)\in\mathcal R\) means that the public
statement \(x\) is true with respect to the private witness \(w\).
Membership can be checked efficiently by a deterministic polynomial-time
verification algorithm
\[
  \mathsf{Check}_{\mathcal R}(x,w)\in\{0,1\}.
\]
The language associated with \(\mathcal R\) is
\[
  L_{\mathcal R}
  =
  \bigl\{
    x\in\mathcal X:
    \exists\,w\in\mathcal W
    \text{ such that }(x,w)\in\mathcal R
  \bigr\}.
\]
A zero-knowledge proof convinces the verifier that
\(x\in L_{\mathcal R}\) without revealing the witness \(w\).

In the present construction, the public statement normally contains
protocol metadata together with algebraic objects such as commitments,
curve encodings, server indices, Lagrange coefficients, epoch
certificates, and input/output curves.  The witness contains private
field elements and commitment randomness, such as a Shamir share
\(s_i^{(e)}\), an opening \(\omega_i^{(e)}\), a blinding scalar \(r\),
or the constant coefficient of a resharing polynomial.

It is important that every relation include all public values needed to
identify the precise protocol execution in which the proof is valid.
Accordingly, a concrete implementation should bind the statement to the
protocol name, relation identifier, context \(\ctx\), session identifier
\(\sid\), epoch \(e\), certificate hash, server identity, quorum, and
position in the evaluation chain whenever these values are relevant.
This prevents a proof generated for one session, epoch, or sub-protocol
from being replayed as a valid proof in another.

\paragraph{Simple and joint relations.}
Some NP relations assert one algebraic property.  For example, a
commitment-opening relation may be written as
\[
  \mathcal R_{\mathsf{open}}
  =
  \bigl\{
    (C;\,a,\rho):
    C=\Com(a;\rho)
  \bigr\}.
\]
Similarly, an action relation may assert
\[
  \mathcal R_{\mathsf{act}}
  =
  \bigl\{
    (Q,Q';\,a):
    Q'=[a]Q
  \bigr\}.
\]
The relations required by \scheme{}, however, are generally
\emph{joint relations}.  A joint relation requires one witness to satisfy
several conditions simultaneously.  A representative example is
\[
  \mathcal R_{\mathsf{joint}}
  =
  \bigl\{
    (C,Q,Q';\,a,\rho):
    C=\Com(a;\rho)
    \ \wedge\
    Q'=[a]Q
  \bigr\}.
\]
The significance of this formulation is that the scalar opening the
commitment must be exactly the scalar used in the group action.  It is
not enough to prove separately that \(C\) opens to some value \(a\) and
that \(Q'\) is obtained from \(Q\) using some possibly different value
\(a'\).  The correctness and security of the protocol depend on
witness equality across the two conditions.

\paragraph{Relations used by \scheme{}.}
The protocol later defines four concrete relations.

\begin{enumerate}[label=(\roman*)]
  \item The \emph{link relation}
        \(\mathcal R_{\mathsf{link}}\) binds a DKG dealer's committed
        constant coefficient to the exponent used in its public-key
        contribution.

  \item The \emph{evaluation relation}
        \(\mathcal R_{\mathsf{eval}}\) proves that a server's partial
        group action was computed with the same Shamir share that opens
        the share commitment derived from the active epoch commitment
        vector.

  \item The \emph{blinding relation}
        \(\mathcal R_{\mathsf{blind}}\) proves that a client request is a
        correctly blinded hash-derived input rather than an arbitrary
        curve chosen independently of an input.

  \item The \emph{resharing relation}
        \(\mathcal R_{\mathsf{reshare}}\) binds the constant term of an
        old server's resharing polynomial to its certified
        Lagrange-weighted share.
\end{enumerate}

They are logically distinct and must
be domain-separated at the proof-system level.  In particular, an
accepting proof for \(\mathcal R_{\mathsf{eval}}\) must never be
interpretable as a proof for \(\mathcal R_{\mathsf{link}}\), even when
some of their public inputs have similar encodings.

\paragraph{Relation completeness and soundness.}
For every honest protocol execution, the prescribed witness must satisfy
the corresponding relation.  This is the relation-level correctness
condition underlying NIZK completeness.  Conversely, if a statement is
false, for example, if a server applies an action using a scalar
different from its certified share then no valid witness should exist.
NIZK soundness ensures that such a false statement cannot be accepted
except with negligible probability.

\paragraph{Relation efficiency.}
An NP relation is useful only when its verification predicate can be
evaluated efficiently.  Commitment equations are usually inexpensive,
whereas the predicate
\[
  Q'=[a]Q
\]
may require evaluating the complete isogeny group action inside the
proof system.  The abstract protocol therefore treats proofs for these
relations as modular cryptographic building blocks.  A concrete
implementation must specify how the action predicate is represented and
must account for its proof-generation and verification costs.

The distinction between defining a relation and instantiating a proof
system for that relation is important.  The protocol and its security
proof require the four relations to be well defined and efficiently
decidable.  They do not, by themselves, imply that a practically
efficient circuit or proof system for the isogeny-action predicate is
already available.

\subsection{Non-interactive zero-knowledge proofs}
\label{sec:nizk}

Several protocol steps require a party to prove that one private scalar
is used consistently in two different algebraic domains.  For example,
an evaluation server must show both that \(s_i^{(e)}\) opens its
certified share commitment and that the same scalar determines the
action applied to the incoming curve.  These statements are expressed as
NP relations and proved using non-interactive zero-knowledge proofs.

For an NP relation
\[
  \mathcal R\subseteq\mathcal X\times\mathcal W,
\]
a common-reference-string NIZK system consists of algorithms
\[
  \bigl(
    \NIZK.\Setup,
    \NIZK.\Prove,
    \NIZK.\Verify
  \bigr).
\]
The setup algorithm generates a common reference string
\(\mathsf{crs}\).  Given a statement \(x\in\mathcal X\) and a witness
\(w\in\mathcal W\) satisfying \((x,w)\in\mathcal R\), the prover
computes
\[
  \pi\leftarrow
  \NIZK.\Prove_{\mathcal R}(\mathsf{crs},x;w).
\]
The verifier outputs
\[
  b\leftarrow
  \NIZK.\Verify_{\mathcal R}(\mathsf{crs},x,\pi).
\]

We require the following properties.

\begin{enumerate}[label=(\roman*)]
  \item \textbf{Completeness.}
        An honestly generated proof for a true statement is accepted,
        except with negligible probability.  Perfect completeness may
        be assumed when provided by the selected proof system.

  \item \textbf{Soundness.}
        No probabilistic polynomial-time prover can produce an accepting
        proof for a false statement, except with negligible probability.

  \item \textbf{Zero knowledge.}
        There exists a simulator \(\NIZK.\Sim\) that, given an
        appropriate simulation trapdoor and a statement, produces a
        proof computationally indistinguishable from a real proof without
        knowing a witness.

  \item \textbf{Simulation extractability.}
        There exists an extractor \(\NIZK.\Ext\) such that, even after
        observing simulated proofs, an adversary that produces a fresh
        accepting proof for a new statement yields a corresponding valid
        witness, except with negligible probability.
\end{enumerate}

The full malicious-security analysis uses simulation extractability to
recover the witnesses underlying adversarially generated proofs.
A semi-honest analysis does not require this extraction property because
all parties are assumed to follow the prescribed algorithms.  We retain
the stronger primitive in the protocol specification because \scheme{}
also aims to provide public verifiability and identifiable
misbehaviour.

We use the notation
\[
  \NIZK.\Prove_{\mathcal R}(\mathsf{stmt};\mathsf{wit})
\]
and
\[
  \NIZK.\Verify_{\mathcal R}(\mathsf{stmt},\pi)
\]
when the common reference string is understood.  The simulator and
extractor are written \(\NIZK.\Sim\) and \(\NIZK.\Ext\), respectively.

The protocol uses four domain-separated proof relations,
\(\mathcal R_{\mathsf{link}}\),
\(\mathcal R_{\mathsf{eval}}\),
\(\mathcal R_{\mathsf{blind}}\), and
\(\mathcal R_{\mathsf{reshare}}\).   Domain separation means that a proof
created for one relation cannot be interpreted as a proof for another
relation, even if their public inputs have similar encodings.

\begin{remark}[Joint-relation requirement]
The relations used by \scheme{} are not merely independent conjunctions
of a commitment-opening statement and a group-action statement.  They
require the \emph{same} scalar witness to satisfy both components.
Accordingly, a concrete NIZK instantiation must prove the complete joint
relation and enforce equality of the witness across the commitment and
isogeny domains.  Independent proofs of the two components are
insufficient unless they are connected by an explicit witness-equality
mechanism.
\end{remark}

\subsection{Verifiable secret sharing}
\label{sec:vss}

A verifiable secret-sharing protocol allows a dealer to distribute
evaluations of a polynomial while enabling each receiver to verify that
its share is consistent with one common committed polynomial.  VSS is
required because ordinary Shamir sharing offers no mechanism for
detecting a dealer that sends unrelated values to different receivers.

In the commitment-based VSS used by \scheme{}, a dealer samples
\[
  f(X)=\sum_{\ell=0}^{t-1}a_\ell X^\ell
\]
and publishes the coefficient commitments
\[
  A_\ell=\Com(a_\ell;\rho_\ell).
\]
For receiver \(S_i\), it computes
\[
  u_i=f(i),
  \qquad
  \varrho_i=\sum_{\ell=0}^{t-1}i^\ell\rho_\ell
\]
and sends \((u_i,\varrho_i)\) through an authenticated private channel.
The receiver checks
\begin{equation}
  \Com(u_i;\varrho_i)
  =
  \EvalCom(\mathbf A,i).
  \label{eq:vss-verification}
\end{equation}
If the equation holds, the share is consistent with the committed
coefficients.  If it fails, the receiver issues a complaint according to
the complaint-resolution procedure of the protocol.

Complaint resolution must satisfy two goals.  First, all honest parties
must reach the same decision about whether the dealer remains qualified.
Second, the resolution transcript must not disclose enough valid shares
to violate the threshold privacy guarantee.  The protocol records a
collision-resistant digest of the resolution transcript in the relevant
certificate so that later parties can verify which dealers were included
in the aggregate state.

We write \(\mathcal Q\) for the set of dealers that remain qualified
after VSS verification and complaint resolution.  A dealer is qualified
if and only if its shares are accepted by at least \(t\) honest receivers
after complaint resolution.  In an all-honest
execution, every dealer is qualified.

\subsection{Dealerless distributed key generation}
\label{sec:dkg-prelim}

A distributed key-generation protocol produces a sharing of a randomly
generated key without appointing a trusted dealer.  In \scheme{}, every
server acts as a VSS dealer.  Server \(S_j\) samples a polynomial
\(f_j(X)\), distributes its evaluations, and publishes commitments to its
coefficients.  After the qualification phase, each server adds all
accepted contributions.  If fewer than \(t\) dealers qualify, the DKG
aborts and is restarted with a fresh session identifier, while the public
state remains uninitialised until a successful run completes.
Otherwise, the qualified set \(\mathcal Q\) is nonempty and each server
computes
\[
  s_i=
  \sum_{j\in\mathcal Q}f_j(i).
\]
The resulting aggregate polynomial is
\[
  F(X)=
  \sum_{j\in\mathcal Q}f_j(X),
\]
and its constant term
\[
  k=F(0)=
  \sum_{j\in\mathcal Q}f_j(0)
\]
is the distributed master key.  No server needs to reconstruct this
sum.  The coefficient commitments aggregate homomorphically,
\[
  A_\ell=
  \bigoplus_{j\in\mathcal Q}A_{j,\ell}.
\]

The DKG additionally constructs the public key
\(\pk=[k]E_0\).  Since the commitment space and the group-action space
are distinct algebraic domains, the protocol uses link proofs to show
that each public-key contribution is formed with the same constant
coefficient committed by the corresponding dealer.  This link is what
binds the Shamir sharing to the public isogeny key.

\subsection{Digital signatures and public-key infrastructure}
\label{sec:signatures}

Each server \(S_i\) holds a post-quantum digital-signature key pair
\[
  (\mathsf{sk}_i^{\mathsf{sig}},
   \mathsf{vk}_i^{\mathsf{sig}}).
\]
Verification keys are authenticated through a public-key infrastructure
and are known to all protocol participants.  We require correctness and
existential unforgeability under chosen-message attack
(EUF--CMA).

Signatures serve three purposes.  They authenticate partial evaluation
messages and refresh contributions, prevent one server from being framed
for a message generated by another party, and make blame evidence
transferable to third parties.  Every signed message includes sufficient
domain separation, including the protocol name, context, session
identifier, epoch, message type, and relevant public values.  This
prevents a signature issued in one sub-protocol from being replayed as a
valid signature in another.

\subsection{Secure erasures and proactive security}
\label{sec:erasures}

Proactive security is meaningful only if obsolete local state can be
removed.  After a successful transition from epoch \(e\) to epoch
\(e+1\), an honest server erases its old share \(s_i^{(e)}\), the
corresponding commitment-opening randomness, received refresh
contributions, temporary VSS randomness, and any other state from which
the old share could be reconstructed.

We model erasure through an ideal functionality
\(\F_{\mathsf{erase}}\).  Once a value has been erased, a later
corruption of the server does not reveal that value.  This assumption is
standard in proactive secret-sharing models.  Without secure erasure, a
mobile adversary could compromise one server in each epoch and recover
all historical shares stored on disk, eventually collecting enough
same-epoch information to reconstruct the master key.

Erasure is invoked only after the next epoch state has been accepted.
Erasing the old state too early could destroy availability if the epoch
transition later aborts.  Conversely, retaining the old state after
activation weakens proactive security.  The transition protocol
therefore separates preparation of the next shares from their activation
and performs erasure only after the next certificate has been committed.

\subsection{Summary of the maintained public and private state}
\label{sec:state-summary}

At the beginning of an active epoch \(e\), server \(S_i\) holds the
private state
\[
  \mathsf{st}_i^{(e)}
  =
  \bigl(
    s_i^{(e)},
    \omega_i^{(e)},
    \mathsf{sk}_i^{\mathsf{sig}},
    \mathsf{ReplayState}_i
  \bigr),
\]
while the public state contains
\[
  \mathsf{pst}^{(e)}
  =
  \bigl(
    \ctx,e,\pk,\mathbf A^{(e)},\cert_e,
    \{\mathsf{vk}_i^{\mathsf{sig}}\}_{i=1}^{n}
  \bigr).
\]
The fundamental state invariant is that there exists a polynomial
\[
  F_e(X)=
  \sum_{\ell=0}^{t-1}a_\ell^{(e)}X^\ell
\]
such that
\[
  F_e(0)=k,
  \qquad
  s_i^{(e)}=F_e(i),
  \qquad
  A_\ell^{(e)}
  =
  \Com(a_\ell^{(e)};\rho_\ell^{(e)}),
  \qquad
  \pk=[k]E_0.
\]
The DKG establishes this invariant, threshold evaluation uses it,
proactive refresh preserves it while changing the nonconstant
coefficients, and committee resharing transfers it to a new committee.
The correctness analysis in Section~\ref{sec:correctness} proves these
claims formally.

\section{The \scheme\ Protocol}
\label{sec:protocol}

This section presents the complete \scheme{} construction.  Before giving
the individual algorithms, it is useful to describe how the different
components fit together.  The protocol maintains a long-lived master key
\(k\in\Zq\), but no server stores \(k\) directly.  Instead, during every
epoch \(e\), the servers hold evaluations of a degree-at-most-\((t-1)\)
Shamir polynomial
\[
  F_e(X)\in\Zq[X]
  \qquad\text{with}\qquad
  F_e(0)=k.
\]
Server \(S_i\) stores the share \(s_i^{(e)}=F_e(i)\).  The public
coefficient-commitment vector \(\mathbf A^{(e)}\) certifies the polynomial
that defines the current shares, while the public key
\(\pk=[k]E_0\) commits to the unchanged master secret in the isogeny
group-action domain.

The protocol can be viewed as a sequence of state-creation, state-use,
and state-maintenance procedures.  The dealerless DKG creates the initial
sharing and public key.  The threshold evaluation protocol uses any
qualified set of \(t\) current shares to evaluate the OPRF without
reconstructing \(k\).  Proactive refresh replaces \(F_e\) by a newly
randomized polynomial \(F_{e+1}\) having the same constant term, thereby
invalidating previously exposed shares.  The verification and blame
mechanism makes malformed contributions publicly attributable.  Finally,
committee resharing transfers the same master key to a new collection of
servers, possibly under a different threshold.

Three forms of consistency are maintained throughout the construction.
First, the shares held by the servers must be evaluations of the polynomial
represented by the public commitment vector.  Second, the constant
coefficient of that polynomial must correspond to the exponent used to
form \(\pk\).  Third, all servers participating in one evaluation must use
the same epoch certificate.  The NIZK relations enforce the first two forms of
consistency, while explicit epoch and session identifiers enforce the
third.

We describe \scheme{} in the
\((\F_{\mathsf{auth}},\F_{\mathsf{broadcast}},\F_{\mathsf{com}},
\F_{\mathsf{nizk}},\F_{\mathsf{sig}},\F_{\mathsf{erase}})\)-hybrid
model, following the universal-composability framework of
Canetti~\citep{canetti2001}.  \(\F_{\mathsf{auth}}\) provides
authenticated private channels for share delivery,
\(\F_{\mathsf{broadcast}}\) guarantees that every honest party receives
identical copies of public announcements such as commitments, complaints,
and certificates, \(\F_{\mathsf{com}}\) realizes the homomorphic
commitment scheme of \Cref{sec:commitment}, \(\F_{\mathsf{nizk}}\)
realizes the simulation-extractable NIZK of \Cref{sec:nizk}, and
\(\F_{\mathsf{sig}}\) provides EUF-CMA signatures.  Secure erasure is
handled by \(\F_{\mathsf{erase}}\) and is invoked after a successful
epoch transition.

The setup phase fixes the algebraic environment and the auxiliary
cryptographic mechanisms used by every later sub-protocol.  These
parameters are system-wide rather than epoch-specific.  In particular,
the group-action parameters, base curve, hash functions, commitment
scheme, and NIZK reference string remain unchanged when the shares are
refreshed.  This stability is important because proactive maintenance is
intended to protect a long-lived service without changing its public key
or invalidating previously derived OPRF outputs.

The setup also establishes the authenticated identities of the servers.
Signatures do not hide any data, their role is to bind each public
message to its sender and to make later blame certificates independently
verifiable.  The epoch counter begins at zero and is advanced only after
the refresh transition has completed.

\begin{enumerate}[nosep,leftmargin=*,label=\textbf{P\arabic*.}]
  \item A prime $q$ and base curve $E_0\in\E$ are fixed for the target
        security level.
  \item Hash functions $\Hone:\{0,1\}^*\to\E$ and
        $\Htwo:\{0,1\}^*\to\{0,1\}^{\ell}$ are fixed (random oracles).
  \item An additively homomorphic commitment scheme $\Com:\Zq\times\R\to\C$
        is fixed.
  \item A CRS for the four SE-NIZK relations is generated and published.
  \item Each server $S_i$ generates a signing keypair and verification keys are distributed via a PKI.
  \item The epoch counter $e\leftarrow 0$ is initialised.
\end{enumerate}

\subsection{Protocol 1: Dealerless distributed key generation}
\label{sec:prot1}

The first task is to create the master key without appointing a trusted
dealer.  A conventional trusted-dealer construction would sample a
polynomial \(F_0\), distribute \(F_0(i)\) to server \(S_i\), and publish
the corresponding verification information.  Such a dealer would,
however, learn the complete master key and become a permanent point of
trust.

\scheme{} avoids this problem by allowing every server to
contribute an independently sampled polynomial.  The sum of all
qualified contributions becomes the initial sharing polynomial.

More precisely, each server \(S_j\) acts as a VSS dealer for a polynomial
\(f_j(X)\).  The servers verify the received evaluations against
homomorphic commitments to the coefficients.  Once invalid dealers, if
any, have been excluded, server \(S_i\) adds all accepted evaluations and
obtains
\[
  s_i^{(0)}
  =\sum_{j\in\mathcal Q}f_j(i)
  =F_0(i),
  \qquad
  F_0(X)=\sum_{j\in\mathcal Q}f_j(X).
\]
The resulting master key is the constant term,
$$k=F_0(0)=\sum_{j\in\mathcal Q}a_{j,0}$$  No server needs to assemble
this sum explicitly.

The coefficient commitments certify the Shamir sharing, but they do not
by themselves show that the public isogeny key was formed from the same
constant coefficients.  The link-proof chain closes this gap.

Each
qualified dealer applies its constant-term action to the current curve
and proves, with one common witness, that the action exponent is the
value committed in \(A_{j,0}\).  The final curve is therefore
\(\pk=[k]E_0\).  The epoch-zero certificate records the public key,
aggregate coefficient commitments, qualified set, and supporting
transcript as the initial public state of the service.

\begin{figure}[H]
\centering
\fbox{\begin{minipage}{0.92\textwidth}
\small
\textbf{Protocol 1: Dealerless Distributed Key Generation}\\[4pt]
\noindent\textbf{Parties}: Servers $S_1,\dots,S_n$.\quad
\textbf{Output}: Shamir shares $\{s_i^{(0)}\}$ of master key $k$,
public key $\pk=[k]E_0$, epoch certificate $\cert_0$.\\[2pt]
\rule{\textwidth}{0.4pt}\\[2pt]
\begin{enumerate}[nosep,leftmargin=*,label=\textbf{D\arabic*.}]
  \item \textbf{Polynomial generation.}  Each $S_j$ samples
        $a_{j,0},\dots,a_{j,t-1}\leftarrow\Zq$ and sets $f_j(X)=
        \sum_{\ell=0}^{t-1}a_{j,\ell}X^\ell$.

  \item \textbf{Coefficient commitment broadcast.}  $S_j$ samples
        $\rho_{j,\ell}\leftarrow\R$ and broadcasts
        $A_{j,\ell}=\Com(a_{j,\ell};\rho_{j,\ell})$ for all $\ell$.

  \item \textbf{Private share distribution.}  For each receiver $S_i$,
        $S_j$ privately sends
        $u_{j,i}=f_j(i)$ and $\varrho_{j,i}=
        \sum_{\ell=0}^{t-1}i^\ell\rho_{j,\ell}$.

  \item \textbf{Share verification.}  $S_i$ accepts if and only if
        $\Com(u_{j,i};\varrho_{j,i})=
        \bigoplus_{\ell=0}^{t-1}i^\ell\odot A_{j,\ell}$.
        Complaints are resolved via broadcast, and disqualified dealers are
        removed from the qualified set $\mathcal{Q}$.

  \item \textbf{Share aggregation.}  Each $S_i$ computes
        \begin{align*}
          s_i^{(0)} &= \sum_{j\in\mathcal{Q}} u_{j,i} \pmod q,\qquad
          \omega_i^{(0)} = \sum_{j\in\mathcal{Q}} \varrho_{j,i}.
        \end{align*}
        Public aggregate coefficients:
        $A_\ell^{(0)}=\bigoplus_{j\in\mathcal{Q}}A_{j,\ell}$.
        The implicit master key is $k=\sum_{j\in\mathcal{Q}}a_{j,0}$.

  \item \textbf{Link proofs.}  Order $\mathcal{Q}=(j_1,\dots,j_m)$
        arbitrarily.  Set $P_0=E_0$.  For $h=1,\dots,m$, server~$S_{j_h}$
        computes $P_h=[a_{j_h,0}]P_{h-1}$,
        generates $\pi_h^{\mathsf{link}}\leftarrow
        \NIZK.\Prove_{\mathcal{R}_{\mathsf{link}}}
        (A_{j_h,0},P_{h-1},P_h;\, a_{j_h,0},\rho_{j_h,0})$,
        and broadcasts $(P_h,\pi_h^{\mathsf{link}})$.
        A dealer whose proof fails is removed.  The public key is
        $\pk=P_m=[k]E_0$.

  \item \textbf{Epoch-zero certificate.}  Servers sign
        $\cert_0 = (e,\ctx,\pk,\mathbf A^{(0)},\mathcal{Q},
                     \{\pi_h^{\mathsf{link}}\}_{h=1}^{|\mathcal{Q}|},
                     \mathsf{digest}_{\mathsf{VSS}})$.
\end{enumerate}
\end{minipage}}
\caption{Dealerless distributed key generation for \scheme.}
\label{fig:prot-dkg}
\end{figure}

\subsection{Protocol 2: Threshold evaluation}
\label{sec:prot2}

Once the initial state has been established, clients can evaluate the
keyed function without revealing their inputs and without causing the
servers to reconstruct the master key.  The central idea is to combine
input blinding with Lagrange interpolation in the exponent of the group
action.

For an input \(x\), the client first maps the domain-separated input to a
curve
\[
  X=\Hone(\ctx\|x).
\]
It then chooses a fresh scalar \(r\) and sends the blinded curve
\(B=[r]X\).  Because the action is free and transitive, a uniformly
sampled blinding action hides the original curve from the servers.

The
blinding proof is included to show that the submitted curve is a
well-formed blinding of a hash-derived input, rather than an arbitrary
curve selected to turn the committee into an unrestricted action oracle.

The selected quorum \(I=\{i_1,\ldots,i_t\}\) does not reconstruct \(k\)
as a scalar.  Instead, server \(S_i\) applies the action associated with
its Lagrange-weighted share
\(\lambda_i^I s_i^{(e)}\).  Since group actions compose additively, the
sequential chain accumulates the sum
\[
  \sum_{i\in I}\lambda_i^I s_i^{(e)}=k.
\]
The final server therefore returns \([k]B\).  The client applies
\([-r]\) and obtains \([k]X\), from which the final OPRF output is derived.

The order of the servers is operational rather than algebraically
significant because the underlying group is abelian.  Nevertheless, a
fixed order is included in the session transcript so that every partial
action has an unambiguous predecessor and successor.  Each server signs
its transition and proves that it used the share certified for the
current epoch.  As a result, the client can verify not only the final
curve, but the complete sequence by which that curve was produced.
\begin{figure}[H]
\centering
\fbox{\begin{minipage}{0.92\textwidth}
\small
\textbf{Protocol 2: Threshold Evaluation}\\[4pt]
\noindent\textbf{Parties}: Client $C$ with private input $x$,
quorum $I=\{i_1,\dots,i_t\}$ in epoch~$e$.\quad
\textbf{Output}: Client learns $y = F_k(x)$.\\[2pt]
\rule{\textwidth}{0.4pt}\\[2pt]
\begin{enumerate}[nosep,leftmargin=*,label=\textbf{E\arabic*.}]
  \item \textbf{Client blinding.}  $C$ verifies $\cert_e$, computes
        $X=\Hone(\ctx\|x)$, samples $r\leftarrow\Zq$, sets $B=[r]X$,
        and generates $\pi^{\mathsf{blind}}\leftarrow
        \NIZK.\Prove_{\mathcal{R}_{\mathsf{blind}}}
        (\ctx,\sid,B;\,x,r)$.  The client sends
        $\mathsf{req}=(\sid,e,I,H(\cert_e),B,\pi^{\mathsf{blind}})$ to $S_{i_1}$.
        The quorum $I\subseteq[n]$ of size $|I|=t$ is selected via the
        deterministic, publicly verifiable function
        $\mathsf{SelectQuorum}(\sid,e,n,t)$, instantiated as a
        pseudorandom permutation over $[n]$ seeded by a
        collision-resistant hash $H(\sid\|e\|\ctx)$.  Every party can
        independently recompute $I$ and verify its correctness,
        preventing a malicious client or coordinator from biasing server
        selection across retries.

  \item \textbf{Request validation.}  $S_{i_1}$ verifies $\pi^{\mathsf{blind}}$,
        checks that $\sid$ is fresh, $e$ is the current epoch, $H(\cert_e)$ matches its active epoch certificate, and $B\in\E$.
        It rejects on failure and sets $Q_0\leftarrow B$.

  \item \textbf{Sequential partial evaluation.}  For $h=1,\dots,t$,
        server~$S_{i_h}$,
        \begin{enumerate}[nosep]
          \item Computes $\lambda_{i_h}^I=\prod_{j\in I,\,j\neq i_h}
                (-j)/(i_h-j)\pmod q$.
          \item Computes $Q_h=[\lambda_{i_h}^I\cdot
                s_{i_h}^{(e)}]\,Q_{h-1}$.
          \item Generates $\pi_h^{\mathsf{eval}}\leftarrow
                \NIZK.\Prove_{\mathcal{R}_{\mathsf{eval}}}
                (\mathbf A^{(e)},i_h,Q_{h-1},Q_h,\lambda_{i_h}^I;\,
                s_{i_h}^{(e)},\omega_{i_h}^{(e)})$.
          \item Signs $M_h=(\sid,e,I,h,Q_{h-1},Q_h,\pi_h^{\mathsf{eval}})$
                as $\sigma_h$.
          \item Forwards $(\sid,e,I,Q_0,\{Q_\ell,\pi_\ell^{\mathsf{eval}},\sigma_\ell\}_{\ell=1}^{h})$
                to $S_{i_{h+1}}$ (or to $C$ if $h=t$).
          \item Before acting, $S_{i_{h+1}}$ checks the epoch and certificate hash and verifies all prior signatures
                and proofs.
        \end{enumerate}

  \item \textbf{Client unblinding.}  $C$ receives
        $\mathsf{resp}=(Q_t,\{Q_h,\pi_h^{\mathsf{eval}},\sigma_h\}_{h=1}^{t},
        \cert_e)$.  The client recomputes all $\lambda_i^I$, verifies every
        $\pi_h^{\mathsf{eval}}$ and $\sigma_h$, recomputes each
        share commitment $\EvalCom(\mathbf A^{(e)},i_h)$ from the
        coefficient vector $\mathbf A^{(e)}$ contained in $\cert_e$,
        and computes,
        \begin{align*}
          Y &= [-r]Q_t,\\
          y &= \Htwo(\mathsf{PIVOT-out}\|\ctx\|\pk\|x\|
                     \mathsf{enc}(Y)).
        \end{align*}
        The epoch number is excluded from the output hash so that OPRF
        values remain invariant across refresh boundaries.
\end{enumerate}
\end{minipage}}
\caption{Threshold evaluation protocol for \scheme.}
\label{fig:prot-eval}
\end{figure}
\begin{figure}[H]
\centering
\begin{tikzpicture}[
  node distance=1.8cm,
  server/.style={rectangle, draw, rounded corners, minimum height=1cm, minimum width=1.2cm, fill=blue!8},
  client/.style={rectangle, draw, rounded corners, minimum height=1cm, minimum width=1.5cm, fill=green!8},
  curve/.style={circle, draw, minimum size=0.7cm, fill=yellow!10, font=\small},
  arr/.style={-{Stealth[length=2.5mm]}, thick},
  lbl/.style={font=\footnotesize, midway}
]
  \node[client] (C) {Client $C$};
  
  \node[curve, right=1.2cm of C] (Q0) {$Q_0$};
  \node[server, above=0.4cm of Q0] (S1) {$S_{i_1}$};
  \node[curve, right=1.5cm of Q0] (Q1) {$Q_1$};
  \node[server, above=0.4cm of Q1] (S2) {$S_{i_2}$};
  \node[right=1cm of Q1] (dots) {$\cdots$};
  \node[curve, right=1cm of dots] (Qt1) {$Q_{t{-}1}$};
  \node[server, above=0.4cm of Qt1] (St) {$S_{i_t}$};
  \node[curve, right=1.5cm of Qt1] (Qt) {$Q_t$};
  
  \draw[arr] (C) -- node[lbl, above] {$B{=}[r]X$} (Q0);
  \draw[arr] (Q0) -- node[lbl, above] {$[\lambda_1 s_1]$} (Q1);
  \draw[arr] (Q1) -- (dots);
  \draw[arr] (dots) -- (Qt1);
  \draw[arr] (Qt1) -- node[lbl, above] {$[\lambda_t s_t]$} (Qt);
  \draw[arr, dashed] (Qt) -- ++(0,-1.2) -| node[lbl, below, pos=0.25] {$Y{=}[-r]Q_t{=}[k]X$} (C);
  
  \draw[arr, dotted] (S1) -- (Q0);
  \draw[arr, dotted] (S2) -- (Q1);
  \draw[arr, dotted] (St) -- (Qt1);
\end{tikzpicture}
\caption{Sequential evaluation chain in \scheme.  The client sends a blinded
input $Q_0=B=[r]X$ to the first server.  Each server $S_{i_h}$ applies its
Lagrange-weighted share to produce $Q_h=[\lambda_{i_h}^I s_{i_h}]Q_{h-1}$,
accompanied by a NIZK proof $\pi_h^{\mathsf{eval}}$.  The accumulated
result $Q_t=[k]B$ is returned to the client, who unblinds to recover
$Y=[k]X$.}
\label{fig:eval-chain}
\end{figure}

\begin{lemma}[Evaluation correctness]
For any epoch $e$, if all servers in quorum $I$ honestly apply their
certified shares, the client's output equals $F_k(x)$ as defined in
Equation~\ref{eq:prf}.
\end{lemma}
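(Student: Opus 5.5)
The plan is to trace the evaluation chain of Protocol~2 step by step and reduce the claim to Lemma~\ref{lem:interpolation-action} together with the action law~\eqref{eq:action-law}.

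First, fix the epoch $e$ and the state invariant of Section~\ref{sec:state-summary}. It gives a polynomial $F_e$ of degree at most $t-1$ with $F_e(0)=k$ and $s_i^{(e)}=F_e(i)$. The quorum $I=(i_1,\ldots,i_t)$ returned by $\mathsf{SelectQuorum}$ consists of $t$ distinct nonzero elements of $\Zq$, since $n<q$. Hence every Lagrange coefficient $\lambda_{i_h}^I$ of~\eqref{eq:lagrange-coefficient} is well defined, and each honest server computes the same value. Because all servers check $e$ and $H(\cert_e)$ before acting, every server in the chain uses a share from the same polynomial $F_e$. This epoch-consistency point is the only genuinely non-algebraic step, and I expect it to be the main place where care is needed. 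I would settle it by appealing to the certificate check in E2 and in step (f) of E3, together with the hypothesis that all servers honestly apply their \emph{certified} shares.

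Second, show by induction on $h$ that
\[
  Q_h=\Bigl[\sum_{\ell=1}^{h}\lambda_{i_\ell}^I s_{i_\ell}^{(e)}\Bigr]Q_0 .
\]
The base case is $Q_0=B$, and each step uses $[a]([b]E)=[a+b]E$. Taking $h=t$, Lemma~\ref{lem:interpolation-action} applied to $f=F_e$ gives $Q_t=[k]B$. Substituting $B=[r]X$ and applying the action law again, $Q_t=[k+r]X$. Then
\[
  Y=[-r]Q_t=[-r+k+r]X=[k]X=[k]\Hone(\ctx\|x),
\]
with the arithmetic in $\Zq$ and commutativity of the exponent group used implicitly.

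Finally, the client's verification in E4 succeeds. Honest proofs satisfy $\mathcal R_{\mathsf{eval}}$, because $\Com(s_{i_h}^{(e)};\omega_{i_h}^{(e)})=\EvalCom(\mathbf A^{(e)},i_h)$ by~\eqref{eq:evalcom}; NIZK completeness and signature correctness then ensure acceptance. The client therefore outputs $y=\Htwo(\mathsf{PIVOT-out}\|\ctx\|\pk\|x\|\mathsf{enc}(Y))$. Since $\mathsf{enc}$ is canonical, $\mathsf{enc}(Y)=\mathsf{enc}([k]\Hone(\ctx\|x))$. The client uses the $\pk$ from $\cert_e$, which equals $[k]E_0$ and is independent of $e$. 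So $y=F_k(x)$ as in~\eqref{eq:prf}.
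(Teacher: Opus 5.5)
Your proposal is correct and follows essentially the same route as the paper. Both arguments use induction on $h$ via $[a]([b]E)=[a+b]E$ to get $Q_t=[k]B$, then substitute $B=[r]X$ and unblind with $[-r]$ to obtain $Y=[k]X$ and hence $y=F_k(x)$. Your extra bookkeeping (well-defined Lagrange coefficients, epoch consistency, acceptance of the proofs and signatures by NIZK completeness and signature correctness, canonical encoding, and $\pk=[k]E_0$ being epoch-independent) goes beyond the paper's terse proof at this point, though the completeness part matches what the fuller restatement in Lemma~\ref{lem:eval-correct} adds.
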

\begin{proof}
By induction on $h$, $$Q_h=\left[\sum_{j=1}^{h}\lambda_{i_j}^I
s_{i_j}^{(e)}\right]B$$ using $[a]([b]E)=[a+b]E$ from Section~\ref{sec:group-action}.
For $h=t$, $$Q_t=\left[\sum_{i\in I}\lambda_i^I s_i^{(e)}\right]B=[k]B$$
Since $B=[r]X$, the composition law $[a]([b]E)=[a+b]E$ gives
$Q_t=[k]([r]X)=[k+r]X$. Then $Y=[-r]Q_t=[-r]([k+r]X)=[{-r}+k+r]X=[k]X$ by the same law.
Hence $$y=\Htwo(\mathsf{PIVOT-out}\|\ctx\|\pk\|x\|\mathsf{enc}([k]X))
=F_k(x)$$
\end{proof}

\subsection{Protocol 3: Proactive share refresh}
\label{sec:prot3}

Threshold sharing protects the master key only as long as the adversary
does not collect \(t\) shares belonging to the same sharing polynomial.
In a long-lived service, this requirement cannot be enforced merely by
assuming that one fixed set of servers remains honest forever.  A mobile
adversary may compromise different servers at different times and retain
the shares that it learns.  Proactive refresh addresses this threat by
periodically replacing the current shares with fresh shares of the same
secret.

The refresh mechanism relies on a simple algebraic observation.  Adding a
polynomial \(z(X)\) satisfying \(z(0)=0\) to the current sharing
polynomial changes every nonconstant coefficient but leaves the secret
unchanged,
\[
  F_{e+1}(X)=F_e(X)+z(X),
  \qquad
  F_{e+1}(0)=F_e(0)=k.
\]
To avoid placing trust in a single refresh dealer, every server
contributes its own random zero polynomial.  The qualified contributions
are added together, so that one honest contribution is sufficient to
re-randomize the nonconstant coefficients from the adversary's
perspective. The coefficient commitments evolve in parallel with the shares.
There is deliberately no commitment to a refresh constant term, zero is enforced syntactically by defining the refresh polynomial only
from powers \(X,\ldots,X^{t-1}\).  Consequently, the commitment
\(A_0^{(e)}\) remains unchanged, whereas the higher-degree commitments
are updated homomorphically.

Refresh also requires a careful transition rule.  It is not sufficient
for servers to compute correct next-epoch shares independently, they
must agree on the same commitment vector and must not combine shares
from different epochs in one evaluation. The PROPOSE, PREPARE, COMMIT,
and ACTIVATE stages establish a common certificate for the next state.

During the transition, the old state remains the active state.  A server
begins using \(s_i^{(e+1)}\) only after accepting the common next-epoch
certificate.  Evaluation requests are bound to an epoch number and
certificate, so a mixed-epoch chain is rejected rather than interpolated. Finally, after activation, honest servers erase the old share, old
opening randomness, received zero-shares, and temporary refresh state.
The algebraic refresh creates independence between successive sharing
polynomials and secure erasure ensures that a later corruption cannot
recover the obsolete local state. At the boundary from epoch \(e\) to epoch \(e+1\), the servers execute
the following procedure,

\begin{figure}[H]
\centering
\fbox{\begin{minipage}{0.92\textwidth}
\small
\textbf{Protocol 3: Proactive Share Refresh}\\[4pt]
\noindent\textbf{Parties}: All $n$ servers at epoch boundary $e \to e{+}1$.\quad
\textbf{Output}: Fresh shares $\{s_i^{(e+1)}\}$ of the same master key~$k$.\\[2pt]
\rule{\textwidth}{0.4pt}\\[2pt]
\begin{enumerate}[nosep,leftmargin=*,label=\textbf{R\arabic*.}]
  \item \textbf{Zero-polynomial generation.}  Each refresh dealer $S_j$
        samples $b_{j,1},\dots,b_{j,t-1}\leftarrow\Zq$ and constructs
        $z_j(X)=\sum_{\ell=1}^{t-1}b_{j,\ell}X^\ell$.
        The constant term is syntactically zero, no $D_{j,0}$ is published.
        $S_j$ broadcasts $D_{j,\ell}=\Com(b_{j,\ell};\eta_{j,\ell})$
        for $\ell=1,\dots,t-1$.

  \item \textbf{Zero-share distribution.}  $S_j$ sends to each $S_i$:
        $\delta_{j,i}=z_j(i)$ and $\nu_{j,i}=
        \sum_{\ell=1}^{t-1}i^\ell\eta_{j,\ell}$.

  \item \textbf{Zero-share verification.}  $S_i$ accepts if and only if
        \begin{equation}\label{eq:zero-verify}
          \Com(\delta_{j,i};\nu_{j,i}) =
          \bigoplus_{\ell=1}^{t-1} i^\ell\odot D_{j,\ell}.
        \end{equation}
        Complaints are resolved as in the DKG.  Let $\mathcal{R}_e$ denote the
        qualified refresh-dealer set.

  \item \textbf{Share and commitment update.}
        \begin{align*}
          s_i^{(e+1)} &= s_i^{(e)} + \sum_{j\in\mathcal{R}_e}\delta_{j,i}
                        \pmod q,\qquad
          \omega_i^{(e+1)} = \omega_i^{(e)} + \sum_{j\in\mathcal{R}_e}
                             \nu_{j,i},\\
          A_0^{(e+1)} &= A_0^{(e)}\quad\text{(unchanged)},\qquad
          A_\ell^{(e+1)} = A_\ell^{(e)} \oplus
                           \bigoplus_{j\in\mathcal{R}_e} D_{j,\ell},
                           \quad 1\le\ell<t.
        \end{align*}

  \item \textbf{Coordinated epoch transition.}
        A designated epoch leader $S_{\ell}$ (e.g., $S_{(e\bmod n)+1}$,
        rotating per epoch) coordinates the transition,
        \begin{description}[nosep,leftmargin=1em]
          \item[PROPOSE.] $S_{\ell}$ computes the proposed
                certificate $\cert'_{e+1}=(\pk,\mathbf A^{(e+1)},e{+}1,
                \mathcal{R}_e,\mathsf{digest}_{\mathsf{complaints}})$
                and broadcasts it with a signature.  Honest servers
                independently verify $\cert'_{e+1}$ against their local
                updates.
          \item[PREPARE.] Each server that verifies
                $\cert'_{e+1}$ broadcasts a signed
                $\mathsf{PREPARE}(e,e{+}1,H(\cert'_{e+1}))$.
                During the PREPARE state, servers continue responding to
                evaluation requests with epoch-$e$ shares.
                If a server detects inconsistency, it broadcasts a
                complaint and the leader is replaced.
          \item[COMMIT.] Upon receiving $t$ valid PREPARE
                messages for the same certificate hash,
                each server broadcasts a signed
                $\mathsf{COMMIT}(e,e{+}1,h)$ and enters COMMIT-pending
                state, still responding with epoch-$e$ shares.
          \item[ACTIVATE.] Upon receiving $t$ valid COMMIT messages,
                each server advances $\mathsf{epoch}\leftarrow e{+}1$,
                signs $\cert_{e+1}$, and begins responding with
                epoch-$(e{+}1)$ shares.  No server uses epoch-$(e{+}1)$
                shares before this point, preventing mixed-epoch quorums.
        \end{description}

  \item \textbf{Secure erasure.}  Honest servers securely erase
        $s_i^{(e)}$, $\omega_i^{(e)}$, all received $\delta_{j,i}$,
        and all refresh randomness.
\end{enumerate}
\end{minipage}}
\caption{Proactive share refresh protocol for \scheme.}
\label{fig:prot-refresh}
\end{figure}

\begin{figure}[H]
  \centering
  \begin{tikzpicture}[
    node distance=1.55cm and 3.10cm,
    state/.style={
      rectangle,
      rounded corners=5pt,
      draw=blue!65!black,
      line width=0.9pt,
      fill=blue!7,
      minimum width=2.55cm,
      minimum height=1.05cm,
      align=center,
      font=\small\bfseries
    },
    active state/.style={
      state,
      draw=teal!70!black,
      fill=teal!9
    },
    arrow/.style={
      -{Stealth[length=2.8mm,width=2mm]},
      line width=0.9pt,
      draw=blue!65!black
    },
    label/.style={
      font=\scriptsize,
      align=center,
      text=black!80,
      fill=white,
      inner sep=2pt
    },
    note/.style={
      font=\footnotesize\itshape,
      align=center,
      text=black!70
    }
  ]
    \node[active state] (active) {ACTIVE\\[-1pt]
      {\footnotesize\mdseries epoch $e$}};
    \node[state, right=of active] (propose) {PROPOSE};
    \node[state, right=of propose] (prepare) {PREPARE};
    \node[state, below=of prepare] (commit) {COMMIT};
    \node[active state, left=of commit] (next) {ACTIVE\\[-1pt]
      {\footnotesize\mdseries epoch $e+1$}};

    \draw[arrow] (active) -- node[label, above] {
      Leader broadcasts\\$\cert'_{e+1}$} (propose);

    \draw[arrow] (propose) -- node[label, above] {
      Verify certificate\\and sign PREPARE} (prepare);

    \draw[arrow] (prepare) -- node[label, right] {
      Receive $\geq t$\\PREPARE messages} (commit);

    \draw[arrow] (commit) -- node[label, above=3pt] {
      $\geq t$ COMMIT messages\\erase epoch-$e$ shares} (next);

    \draw[
      decorate,
      decoration={brace, amplitude=6pt, mirror},
      draw=teal!70!black,
      line width=0.8pt
    ]
      ([yshift=-0.58cm]next.south west) --
      ([yshift=-0.58cm]commit.south east)
      node[note, midway, below=9pt] {
        Servers continue responding with epoch-$e$ shares during the transition.};
  \end{tikzpicture}
\caption{Epoch transition state machine for proactive refresh.  Servers
continue responding with epoch-$e$ shares during the PROPOSE, PREPARE,
and COMMIT phases, switching to epoch-$(e{+}1)$ shares only after
receiving $t$ valid COMMIT messages.}
\label{fig:epoch-fsm}
\end{figure}

\begin{lemma}[Refresh correctness]
After Protocol~3, $F_{e+1}(0)=k$ and $A_0^{(e+1)}=A_0^{(e)}$.
\end{lemma}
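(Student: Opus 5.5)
The plan is to define the new polynomial explicitly and then check the two claims directly from the update rules in step R4. Set
\[
  F_{e+1}(X) := F_e(X) + Z(X), \qquad Z(X) := \sum_{j\in\mathcal R_e} z_j(X) = \sum_{\ell=1}^{t-1}\Bigl(\sum_{j\in\mathcal R_e} b_{j,\ell}\Bigr)X^\ell .
\]
This is a polynomial of degree at most $t-1$. By the state invariant of \Cref{sec:state-summary}, $F_e(0)=k$. Since every $z_j$ is built only from the monomials $X,\dots,X^{t-1}$ (step R1), we have $Z(0)=0$. Hence $F_{e+1}(0)=F_e(0)+0=k$, exactly as in the zero-sharing paragraph of \Cref{sec:shamir}.

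Next I will confirm that $F_{e+1}$ really is the polynomial the servers hold after refresh, since otherwise the statement about $F_{e+1}(0)$ would be vacuous. For a qualified dealer $j\in\mathcal R_e$, the verification equation \eqref{eq:zero-verify}, together with the binding property of $\Com$, gives $\delta_{j,i}=z_j(i)$ for the receivers that accepted. So R4 yields
\[
  s_i^{(e+1)} = F_e(i) + \sum_{j} z_j(i) = F_{e+1}(i).
\]
For the commitments, additive homomorphism gives, for $1\le \ell<t$,
\[
  A_\ell^{(e+1)} = \Com\Bigl(a_\ell^{(e)}+\sum_j b_{j,\ell};\ \rho_\ell^{(e)}+\sum_j\eta_{j,\ell}\Bigr),
\]
which is a commitment to the $\ell$-th coefficient of $F_{e+1}$. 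The openings satisfy $\omega_i^{(e+1)}=\sum_\ell \rho^{(e+1)}_\ell i^\ell$ by the same computation as \eqref{eq:evalcom}, so the invariant is restored for epoch $e+1$.

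The claim $A_0^{(e+1)}=A_0^{(e)}$ holds by definition in R4. I will also check that it is consistent: the constant coefficient of $F_{e+1}$ is $a_0^{(e)}+Z(0)=a_0^{(e)}=k$, so the unchanged commitment with unchanged randomness $\rho_0^{(e)}$ is still a valid commitment to the new constant term. The absence of any published $D_{j,0}$ matches this, because no randomness is added at index $0$.

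The only delicate step is the qualification and binding argument. One must argue that the shares honest servers add come from the committed $z_j$ rather than from arbitrary values. This holds because dealers whose shares fail \eqref{eq:zero-verify} are excluded from $\mathcal R_e$ by complaint resolution, and binding rules out accepted shares that are inconsistent with the $D_{j,\ell}$. If this is treated only as a correctness statement for an honest execution, the step reduces to the identity $\delta_{j,i}=z_j(i)$.
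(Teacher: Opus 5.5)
Your proposal is correct and follows essentially the same route as the paper. Both arguments write $F_{e+1}=F_e+\sum_{j\in\mathcal R_e}z_j$, use the zero-share verification equation together with binding of $\Com$ to tie every accepted $\delta_{j,i}$ to the committed polynomial (which has no constant coefficient), and read off $A_0^{(e+1)}=A_0^{(e)}$ directly from step R4. The only difference is the order of the argument. The paper starts from the accepted shares and uses uniqueness of interpolation through at least $t$ accepted points to identify the effective $z_j$ with the committed $\hat z_j$. You instead define $z_j$ as the committed polynomial from the outset and apply binding receiver by receiver; you also check the commitment and opening invariant, which the paper defers to Lemma~\ref{lem:refresh-correct}.
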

\begin{proof}
$F_{e+1}(X)=F_e(X)+\sum_{j\in\mathcal{R}_e}z_j(X)$.  We must show that
$z_j(0)=0$ for every qualified dealer $j\in\mathcal{R}_e$.

The verification check (\Cref{eq:zero-verify}) requires
$\Com(\delta_{j,i};\nu_{j,i})=\bigoplus_{\ell=1}^{t-1}i^\ell\odot
D_{j,\ell}$.  By the homomorphic property, the right-hand side equals
$\Com(\sum_{\ell=1}^{t-1}b_{j,\ell}i^\ell;\;\cdot\;)$.  The binding
property of $\Com$  then forces
$\delta_{j,i}=\sum_{\ell=1}^{t-1}b_{j,\ell}i^\ell$ for every receiver
$S_i$ that accepts.  Since every accepted share lies on the polynomial
$\hat z_j(X)=\sum_{\ell=1}^{t-1}b_{j,\ell}X^\ell$ and at least $t$
receivers accept (otherwise $S_j$ is disqualified), the unique polynomial
of degree at most $t-1$ through these $t$ points is $\hat z_j$ itself.
In particular, $z_j(0)=\hat z_j(0)=0$.

Hence $F_{e+1}(0)=F_e(0)+\sum_{j}z_j(0)=k+0=k$, and
$A_0^{(e+1)}=A_0^{(e)}$ since the zero-polynomial constant terms
contribute nothing.
\end{proof}

\subsection{Protocol 4: Verification, blame, and robust restart}
\label{sec:prot4}

The preceding protocols describe the computations performed during an
honest execution.  In a distributed deployment, however, an evaluation
or refresh may fail because a party submits a malformed proof, applies an
incorrect action, distributes an inconsistent share, or simply refuses
to continue.  Verifiability is useful only when such failures can be
attributed to a specific signed message and when honest parties have a
well-defined recovery procedure.

\scheme{} therefore separates \emph{detection}, \emph{attribution}, and
\emph{recovery}.  Detection is local, the next server in an evaluation
chain, the client, or a refresh recipient checks the relevant NIZK,
signature, or commitment equation.  Attribution is public, the detecting
party packages the signed offending message together with the failed
verification instance.  Any observer can then check the complaint
without learning an honest share.  Recovery is handled by removing the
identified server from the eligible pool and restarting the session with
a fresh identifier and a newly selected quorum.

A restart uses a fresh \(\sid\) because signatures, proofs, and quorum
selection are session-bound.  Reusing the failed identifier would make
the new execution difficult to distinguish from a continuation or replay
of the old one.  The wrapper provides progress as long as at least \(t\)
eligible servers remain.  It does not guarantee availability once the
active pool falls below the threshold, in that case, abort is the only
correct outcome.

\begin{figure}[H]
\centering
\fbox{\begin{minipage}{0.92\textwidth}
\small
\textbf{Protocol 4: Verification, Blame, and Robust Restart}\\[4pt]
\noindent\textbf{Purpose}: Identify and remove misbehaving servers and
ensure liveness through retry.\\[2pt]
\rule{\textwidth}{0.4pt}\\[2pt]
\begin{enumerate}[nosep,leftmargin=*,label=\textbf{V\arabic*.}]
  \item \textbf{Invalid blinding proof.}  A server detecting an invalid
        $\pi^{\mathsf{blind}}$ broadcasts a signed rejection
        $(\sid,\pi^{\mathsf{blind}},\text{``invalid blinding''})$.
        No secret material is involved.

  \item \textbf{Invalid server evaluation.}  If $\pi_h^{\mathsf{eval}}$
        for $S_{i_h}$ fails verification (detected by $S_{i_{h+1}}$ or
        the client), the detecting party produces a blame certificate
        $\mathsf{blame}=(\sid,e,I,h,Q_{h-1},Q_h,
        \pi_h^{\mathsf{eval}},\sigma_h)$.  Any third party can verify
        that $\sigma_h$ is valid yet $\pi_h^{\mathsf{eval}}$ is invalid.
        The blamed server is removed from the active pool.

  \item \textbf{Invalid refresh contribution.}  A complaint against a
        refresh dealer includes the signed
        $(\delta_{j,i},\nu_{j,i})$ and the public $\{D_{j,\ell}\}$, any
        third party verifies \Cref{eq:zero-verify}.

  \item \textbf{Robust wrapper.}  Maintain a pool of $n$ certified
        servers.  On evaluation failure, remove the blamed server,
        generate a fresh $\sid$, and retry with a new quorum via
        $\mathsf{SelectQuorum}(\sid,e,n,t)$.  If fewer than $t$ servers
        remain, abort.  Otherwise, the wrapper completes within at most
        $n-t+1$ retries.  The wrapper assumes that the total number of
        servers satisfies $n \ge 2t$, so that removing up to $t-1$
        misbehaving servers still leaves at least $t$ honest servers
        available for a successful evaluation.
\end{enumerate}
\end{minipage}}
\caption{Verification, blame, and robust restart for \scheme.}
\label{fig:prot-blame}
\end{figure}
\subsection{Protocol 5: Committee resharing}
\label{sec:prot5}
Proactive refresh protects a fixed committee over time, but it does not
address changes in membership.  Long-lived services may need to replace
failed machines, rotate administrative domains, increase the committee
size, or adopt a new threshold.  Re-running the DKG would generate a new
master key and would therefore change the public key and all OPRF
outputs.  Committee resharing instead transfers the existing secret
\(k\) to a new Shamir sharing without ever reconstructing \(k\).

Let \(I\) be a qualified old-committee quorum.  The Lagrange-weighted
values \(\{\lambda_i^I s_i^{(e)}\}_{i\in I}\) satisfy
\[
  \sum_{i\in I}\lambda_i^I s_i^{(e)}=k.
\]
Each old server \(S_i\) treats its weighted value as the constant term of
a fresh degree-at-most-\((t'-1)\) polynomial \(g_i(X)\) and distributes
that polynomial to the new committee using VSS.  The new servers add the
received evaluations.  Their aggregate polynomial
\[
  G(X)=\sum_{i\in I}g_i(X)
\]
has constant term \(G(0)=k\), while its nonconstant coefficients are
freshly randomized for the new threshold \(t'\).

The reshare proof is necessary because a commitment to the constant term
of \(g_i\) must be linked to the old server's certified share.  Without
this link, an old server could distribute a perfectly consistent
polynomial having an unrelated constant term, thereby changing the
aggregate secret.  Once the proofs and VSS checks have been accepted,
the new committee signs a certificate for its aggregate commitment
vector.  The public key remains \(\pk=[k]E_0\), and previously computed
OPRF outputs remain valid. Protocol~5 therefore migrates the sharing of \(k\) from the old committee
to a new committee \((S'_1,\dots,S'_{n'})\) with threshold \(t'\), without
revealing or reconstructing the master key.

\begin{figure}[H]
\centering
\fbox{\begin{minipage}{0.92\textwidth}
\small
\textbf{Protocol 5: Committee Resharing}\\[4pt]
\noindent\textbf{Parties}: Old committee $(S_1,\dots,S_n)$ with threshold~$t$,
new committee $(S'_1,\dots,S'_{n'})$ with threshold~$t'$.\quad
\textbf{Output}: New Shamir shares $\{s_j'\}$ of the same master key~$k$.\\[2pt]
\rule{\textwidth}{0.4pt}\\[2pt]
\begin{enumerate}[nosep,leftmargin=*,label=\textbf{M\arabic*.}]
  \item \textbf{Quorum selection.} Select a quorum $I$ of size $\ge t$ from
        the old committee.

  \item \textbf{Reshare VSS.} Each old-committee server $S_i$ ($i\in I$)
        acts as a VSS dealer and 
        samples $g_i(X)=\lambda_i^I s_i^{(e)}+\sum_{\ell=1}^{t'-1}
        c_{i,\ell}X^\ell$ (constant term $\lambda_i^I s_i^{(e)}$), publishes
        coefficient commitments $B_{i,\ell}$ with sampled randomness, and
        generates $\pi_i^{\mathsf{reshare}}\leftarrow
        \NIZK.\Prove_{\mathcal{R}_{\mathsf{reshare}}}
        (\mathbf A^{(e)},i,B_{i,0},\lambda_i^I;\,
        s_i^{(e)},\omega_i^{(e)},\rho_{i,0})$.

  \item \textbf{New share aggregation.} Each new server $S_j'$ receives
        $g_i(j)$ and its commitment randomness from every $i\in I$, verifies against $\{B_{i,\ell}\}$,
        and sets $s_j'=\sum_{i\in I} g_i(j)$.

  \item \textbf{Correctness.} The aggregate polynomial
        $G(X)=\sum_{i\in I}g_i(X)$ satisfies
        $G(0)=\sum_{i\in I}\lambda_i^I s_i^{(e)}=k$.  The new committee
        signs $\cert_{e+1}$ after confirmation.
\end{enumerate}
\end{minipage}}
\caption{Committee resharing protocol for \scheme.}
\label{fig:prot-reshare}
\end{figure}

\section{Correctness Analysis}
\label{sec:correctness}

This section establishes the functional correctness of \scheme{} under
honest execution.  The purpose of the analysis is to show that the
distributed key-generation, evaluation, proactive-refresh, and
committee-resharing procedures preserve a single well-defined master key
throughout the lifetime of the system.  In particular, every successful
evaluation must produce the same value as a direct evaluation of the
underlying function \(F_k\) with the master key \(k\).

Correctness is distinct from security.  The arguments below assume that
all participating parties follow the prescribed algorithms and that all
messages required by a successful execution are delivered without
modification.  Privacy against corrupted parties, simulation of protocol
transcripts, robustness against malformed contributions, and
identifiable aborts are addressed separately in the security analysis.

We use the following standard assumptions throughout this section.

\begin{enumerate}[nosep,label=(\roman*)]
  \item The server identifiers \(1,\ldots,n\) are distinct nonzero
        elements of \(\Zq\), and \(n<q\).  Hence every denominator in the
        Lagrange coefficients is invertible in \(\Zq\).
  \item The group action satisfies
        \([0]E=E\) and \([a]([b]E)=[a+b]E\) for all
        \(a,b\in\Zq\) and \(E\in\E\).
  \item The commitment scheme is correct and additively homomorphic.
  \item The NIZK proof system is complete, and the signature scheme is
        correct.
  \item Every evaluation session is executed using shares and
        coefficient commitments belonging to one common epoch.
\end{enumerate}

The final condition is essential.  Shares taken from two different
epochs generally lie on two different Shamir polynomials, even though
both polynomials have the same constant term.  Consequently, Lagrange
interpolation over a mixed-epoch quorum need not recover \(k\).  The
protocol must therefore require every server in an evaluation chain to
check the epoch number and the hash of the corresponding epoch
certificate before applying its share.

The correctness of the complete construction is most naturally expressed
through an invariant that is maintained by the DKG, refresh, and
resharing procedures.

\begin{definition}[Valid epoch state]
\label{def:valid-epoch-state}
An epoch-\(e\) state is said to be valid for threshold \(t\) and master
key \(k\) if there exist coefficients
\(a_0^{(e)},\ldots,a_{t-1}^{(e)}\in\Zq\) and commitment randomness
\(\rho_0^{(e)},\ldots,\rho_{t-1}^{(e)}\) such that, for
\[
  F_e(X)=\sum_{\ell=0}^{t-1}a_\ell^{(e)}X^\ell,
\]
the following conditions hold,
\begin{enumerate}[nosep,label=(\roman*)]
  \item \(F_e(0)=a_0^{(e)}=k\),
  \item Every server \(S_i\) holds
        \(s_i^{(e)}=F_e(i)\),
  \item The public coefficient commitments satisfy
        \[
          A_\ell^{(e)}
          =\Com(a_\ell^{(e)};\rho_\ell^{(e)})
          \qquad\text{for }0\leq \ell<t
        \]
  \item For every server index \(i\),
        \[
          \EvalCom(\mathbf A^{(e)},i)
          =\Com(s_i^{(e)};\omega_i^{(e)}),
          \qquad
          \omega_i^{(e)}
          =\sum_{\ell=0}^{t-1}i^\ell\rho_\ell^{(e)}
        \]
  \item The public key is \(\pk=[k]E_0\).
\end{enumerate}
\end{definition}

The polynomial \(F_e\) may have degree strictly smaller than \(t-1\) if
its highest coefficients cancel.  Thus, throughout this section,
``degree-\((t-1)\) Shamir sharing'' means a sharing defined by a
polynomial of degree at most \(t-1\).

\subsection{Correctness of dealerless distributed key generation}

Protocol~1 must produce an initial valid epoch state without any party
explicitly reconstructing the master key.

\begin{lemma}[Correctness of the DKG]
\label{lem:dkg-correct}
Assume that all servers execute Protocol~1 honestly and let
\(\mathcal Q\) denote the resulting qualified dealer set.  Define
\[
  f_j(X)=\sum_{\ell=0}^{t-1}a_{j,\ell}X^\ell
  \qquad\text{for each }j\in\mathcal Q.
\]
Then Protocol~1 produces a valid epoch-\(0\) state with
\[
  F_0(X)=\sum_{j\in\mathcal Q}f_j(X)
  \quad\text{and}\quad
  k=F_0(0)=\sum_{j\in\mathcal Q}a_{j,0}.
\]
Moreover, if \(\mathcal Q\neq\varnothing\), then \(k\) is uniformly
distributed in \(\Zq\).
\end{lemma}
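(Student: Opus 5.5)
The plan is to verify each condition of Definition~\ref{def:valid-epoch-state} directly for the aggregate data produced in step D5. I will set
\[
  a_\ell^{(0)}=\sum_{j\in\mathcal Q}a_{j,\ell},
  \qquad
  \rho_\ell^{(0)}=\sum_{j\in\mathcal Q}\rho_{j,\ell}
  \qquad(0\le\ell<t),
\]
so that \(F_0(X)=\sum_{j\in\mathcal Q}f_j(X)=\sum_\ell a_\ell^{(0)}X^\ell\) has degree at most \(t-1\), and \(F_0(0)=a_0^{(0)}=\sum_{j\in\mathcal Q}a_{j,0}=k\). This gives condition~(i).

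For condition~(ii), I note that in an honest run each \(u_{j,i}=f_j(i)\). By linearity of evaluation, \(s_i^{(0)}=\sum_{j\in\mathcal Q}f_j(i)=F_0(i)\).

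For condition~(iii), I apply the additive homomorphism \(\oplus\) repeatedly to \(A_\ell^{(0)}=\bigoplus_{j\in\mathcal Q}\Com(a_{j,\ell};\rho_{j,\ell})\), which gives \(\Com(a_\ell^{(0)};\rho_\ell^{(0)})\).

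For condition~(iv), Equation~\eqref{eq:evalcom} applied to \(\mathbf A^{(0)}\) yields
\[
  \EvalCom(\mathbf A^{(0)},i)=\Com\bigl(F_0(i);\textstyle\sum_\ell i^\ell\rho_\ell^{(0)}\bigr).
\]
I then need \(\omega_i^{(0)}=\sum_j\varrho_{j,i}\) to equal \(\sum_\ell i^\ell\rho_\ell^{(0)}\). This follows by exchanging the two finite sums, since \(\varrho_{j,i}=\sum_\ell i^\ell\rho_{j,\ell}\). In passing I should observe that honest shares satisfy the check in D4, so no honest dealer is disqualified and \(\mathcal Q=[n]\). The argument does not actually need this, though; any fixed \(\mathcal Q\) works.

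For condition~(v), I argue by induction on the link chain in D6 that \(P_h=[\sum_{g\le h}a_{j_g,0}]E_0\), using \([a]([b]E)=[a+b]E\) as in the proof of Lemma~\ref{lem:interpolation-action}. Completeness of the NIZK ensures that no honest dealer is removed in D6, so the chain runs over all of \(\mathcal Q\) and \(P_m=[k]E_0=\pk\).

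The uniformity claim is the only non-algebraic step, and I expect it to be the main obstacle, mostly because I must be careful about what ``uniform'' conditions on. In the honest setting each \(a_{j,0}\) is sampled independently and uniformly in D1. The set \(\mathcal Q\) is determined by the verification outcomes, which in an honest run are all accepts, so \(\mathcal Q\) does not depend on the sampled values. Conditioned on \(\mathcal Q\neq\varnothing\), I pick any \(j^*\in\mathcal Q\) and write \(k=a_{j^*,0}+c\), where \(c\) is independent of \(a_{j^*,0}\). Translating a uniform element of \(\Zq\) by an independent value gives a uniform element, so \(k\) is uniform in \(\Zq\).

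I would add a remark that this fixed-\(\mathcal Q\) argument is exactly what breaks when malicious dealers can choose their contributions after seeing others. That case is deferred to the security analysis and is not part of this correctness lemma.
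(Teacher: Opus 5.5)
Your proposal is correct and follows essentially the same route as the paper's proof. Both check each condition of the valid-state definition via homomorphic aggregation of the coefficient commitments, run an induction along the link chain with NIZK completeness, and derive uniformity of $k$ from the independent uniform constant terms. You are slightly more careful than the paper in two places. First, you exchange the sums explicitly to show that the D5 value $\omega_i^{(0)}=\sum_{j}\varrho_{j,i}$ coincides with $\sum_\ell i^\ell\rho_\ell^{(0)}$ from Definition~\ref{def:valid-epoch-state}. Second, you observe that $\mathcal Q$ does not depend on the sampled constants, which the paper's ``sum of independent uniforms'' step tacitly assumes.
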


\begin{proof}
For each qualified dealer \(S_j\), the polynomial
\(f_j(X)\) has degree at most \(t-1\).  Dealer \(S_j\) sends
\[
  u_{j,i}=f_j(i)
  \quad\text{and}\quad
  \varrho_{j,i}
  =\sum_{\ell=0}^{t-1}i^\ell\rho_{j,\ell}
\]
to server \(S_i\).  Because all parties are honest, the share-verification
equation holds,
\begin{align*}
  \Com(u_{j,i};\varrho_{j,i})
  &=
  \bigoplus_{\ell=0}^{t-1}
      i^\ell\odot A_{j,\ell}                                      \\
  &=
  \bigoplus_{\ell=0}^{t-1}
      i^\ell\odot\Com(a_{j,\ell};\rho_{j,\ell})                    \\
  &=
  \Com\!\left(
      \sum_{\ell=0}^{t-1}a_{j,\ell}i^\ell;
      \sum_{\ell=0}^{t-1}\rho_{j,\ell}i^\ell
      \right)                                                      \\
  &=
  \Com(f_j(i);\varrho_{j,i}).
\end{align*}
Thus every accepted contribution is the evaluation of the polynomial
committed to by the dealer.

Each server aggregates its received values as
\[
  s_i^{(0)}
  =\sum_{j\in\mathcal Q}u_{j,i}
  =\sum_{j\in\mathcal Q}f_j(i)
  =F_0(i).
\]
Since a sum of polynomials of degree at most \(t-1\) again has degree at
most \(t-1\), the values
\(\{s_i^{(0)}\}_{i=1}^{n}\) form a valid threshold-\(t\) Shamir sharing
of
\[
  F_0(0)=\sum_{j\in\mathcal Q}f_j(0)
        =\sum_{j\in\mathcal Q}a_{j,0}
        =k.
\]

We next verify consistency of the aggregate coefficient commitments.
For every \(\ell\),
\[
  A_\ell^{(0)}
  =\bigoplus_{j\in\mathcal Q}A_{j,\ell}
  =\Com\!\left(
      \sum_{j\in\mathcal Q}a_{j,\ell};
      \sum_{j\in\mathcal Q}\rho_{j,\ell}
      \right).
\]
Let
\[
  a_\ell^{(0)}=\sum_{j\in\mathcal Q}a_{j,\ell},
  \qquad
  \rho_\ell^{(0)}=\sum_{j\in\mathcal Q}\rho_{j,\ell}.
\]
Then \(A_\ell^{(0)}=\Com(a_\ell^{(0)};\rho_\ell^{(0)})\), and
\begin{align*}
  \EvalCom(\mathbf A^{(0)},i)
  &=
  \bigoplus_{\ell=0}^{t-1}
     i^\ell\odot A_\ell^{(0)}                                    \\
  &=
  \Com\!\left(
     \sum_{\ell=0}^{t-1}a_\ell^{(0)}i^\ell;
     \sum_{\ell=0}^{t-1}\rho_\ell^{(0)}i^\ell
     \right)                                                      \\
  &=
  \Com(s_i^{(0)};\omega_i^{(0)}).
\end{align*}
This establishes the commitment component of the epoch invariant.

It remains to verify the public key.  Let
\(\mathcal Q=(j_1,\ldots,j_m)\) be the ordering used in the link-proof
chain.  Protocol~1 sets \(P_0=E_0\) and
\[
  P_h=[a_{j_h,0}]P_{h-1}
  \qquad\text{for }1\leq h\leq m.
\]
Repeated application of the group-action law gives
\[
  P_h=
  \left[\sum_{u=1}^{h}a_{j_u,0}\right]E_0.
\]
Consequently,
\[
  \pk=P_m
  =\left[\sum_{j\in\mathcal Q}a_{j,0}\right]E_0
  =[k]E_0.
\]
Every link proof verifies by completeness of the NIZK system.

Finally, each constant term \(a_{j,0}\) is sampled independently and
uniformly from \(\Zq\).  The sum of one or more independent uniform
elements of \(\Zq\) is itself uniform.  Hence, when
\(\mathcal Q\neq\varnothing\), the resulting master key \(k\) is
uniformly distributed in \(\Zq\).
\end{proof}

\subsection{Correctness of threshold evaluation}

The evaluation protocol must reproduce the action of the master key even
though no server individually holds \(k\).

\begin{lemma}[Correctness of threshold evaluation]
\label{lem:eval-correct}
Let epoch \(e\) satisfy Definition~\ref{def:valid-epoch-state}, and let
\(I=\{i_1,\ldots,i_t\}\) be a quorum of \(t\) distinct servers.  Assume
that every server in \(I\) uses its certified epoch-\(e\) share and that
the client follows Protocol~2 honestly.  Then the client obtains
\[
  Y=[k]\Hone(\ctx\|x)
\]
and outputs \(F_k(x)\) as defined in Equation~\ref{eq:prf}.
\end{lemma}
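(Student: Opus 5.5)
The plan is a direct computation that chains three earlier facts: the Lagrange identity \eqref{eq:lagrange}, Lemma~\ref{lem:interpolation-action}, and the action law \eqref{eq:action-law}. The first step applies Definition~\ref{def:valid-epoch-state}. Condition (ii) says each server \(S_{i_h}\) holds \(s_{i_h}^{(e)}=F_e(i_h)\) for a single polynomial \(F_e\) of degree at most \(t-1\). Condition (i) gives \(F_e(0)=k\). The indices in \(I\) are \(t\) distinct nonzero elements of \(\Zq\) with \(n<q\), so the coefficients \(\lambda_{i_h}^I\) of \eqref{eq:lagrange-coefficient} are well defined. Then \eqref{eq:lagrange} yields \(\sum_{h=1}^{t}\lambda_{i_h}^I s_{i_h}^{(e)}=k\). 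The epoch-consistency assumption (v) guarantees that all \(t\) shares lie on this same \(F_e\); this is the only place it is used.

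The second step follows the chain. Set \(Q_0=B=[r]X\) with \(X=\Hone(\ctx\|x)\). By induction on \(h\), using \([a]([b]E)=[a+b]E\),
\[
  Q_h=\Bigl[\textstyle\sum_{u=1}^{h}\lambda_{i_u}^I s_{i_u}^{(e)}\Bigr]B .
\]
This is precisely the sequential form of Lemma~\ref{lem:interpolation-action}, so that lemma can be cited directly to obtain \(Q_t=[k]B\). Applying the action law again gives \(Q_t=[k]([r]X)=[k+r]X\), and the client's unblinding produces \(Y=[-r]Q_t=[k]X=[k]\Hone(\ctx\|x)\).

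The third step deals with the verification checks the client and servers perform, so that the honest run actually reaches the output step. Each \(\pi_h^{\mathsf{eval}}\) is produced with witness \((s_{i_h}^{(e)},\omega_{i_h}^{(e)})\). By condition (iv), \(\EvalCom(\mathbf A^{(e)},i_h)=\Com(s_{i_h}^{(e)};\omega_{i_h}^{(e)})\), and by construction \(Q_h=[\lambda_{i_h}^I s_{i_h}^{(e)}]Q_{h-1}\). So the witness satisfies \(\mathcal R_{\mathsf{eval}}\), and NIZK completeness makes every proof verify. Signature correctness handles every \(\sigma_h\). The client's blinding proof is valid by completeness with witness \((x,r)\). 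The certificate-hash checks pass because all parties use \(\cert_e\). The client therefore computes \(y=\Htwo(\mathsf{PIVOT-out}\|\ctx\|\pk\|x\|\mathsf{enc}(Y))\). Since \(\mathsf{enc}\) is canonical and condition (v) gives \(\pk=[k]E_0\), this string is exactly the one in \eqref{eq:prf}, so \(y=F_k(x)\).

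No step is genuinely hard. The point needing the most care is the third step: the conclusion must hold for the actual protocol execution and not only for the algebra, which is why it checks that \(\mathcal R_{\mathsf{eval}}\) is satisfied using condition (iv) and notes where the epoch-consistency assumption enters.
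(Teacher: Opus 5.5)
Your proposal is correct and follows essentially the same route as the paper: Lagrange interpolation of the valid epoch polynomial $F_e$ at zero, induction along the chain to get $Q_t=[k]B=[k+r]X$, unblinding to $Y=[k]X$, and NIZK completeness plus signature correctness to ensure the honest transcript is accepted. Your explicit check, via condition (iv), that the witness $(s_{i_h}^{(e)},\omega_{i_h}^{(e)})$ satisfies $\mathcal R_{\mathsf{eval}}$ is a slightly more detailed version of the paper's closing completeness remark.
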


\begin{proof}
Let
\[
  X=\Hone(\ctx\|x)
  \quad\text{and}\quad
  B=[r]X,
\]
where \(r\leftarrow\Zq\) is the client's blinding value.  The servers
construct the evaluation chain
\[
  Q_0=B,
  \qquad
  Q_h=
  [\lambda_{i_h}^{I}s_{i_h}^{(e)}]Q_{h-1}
  \quad\text{for }1\leq h\leq t.
\]

We first prove by induction on \(h\) that
\[
  Q_h=
  \left[
    \sum_{u=1}^{h}
      \lambda_{i_u}^{I}s_{i_u}^{(e)}
  \right]B.
\]
The claim is immediate for \(h=0\), since the empty sum is \(0\) and
\([0]B=B\).  Suppose that it holds for \(h-1\).  Then
\begin{align*}
  Q_h
  &=
  [\lambda_{i_h}^{I}s_{i_h}^{(e)}]Q_{h-1}                         \\
  &=
  [\lambda_{i_h}^{I}s_{i_h}^{(e)}]
  \left[
    \sum_{u=1}^{h-1}
      \lambda_{i_u}^{I}s_{i_u}^{(e)}
  \right]B                                                        \\
  &=
  \left[
    \sum_{u=1}^{h}
      \lambda_{i_u}^{I}s_{i_u}^{(e)}
  \right]B,
\end{align*}
where the last equality follows from
\([a]([b]E)=[a+b]E\).

For \(h=t\), Lagrange interpolation of the sharing polynomial \(F_e\)
at zero gives
\[
  \sum_{i\in I}\lambda_i^I s_i^{(e)}
  =
  \sum_{i\in I}\lambda_i^I F_e(i)
  =
  F_e(0)
  =
  k.
\]
Therefore,
\[
  Q_t=[k]B.
\]
Since \(B=[r]X\), another application of the group-action law yields
\[
  Q_t=[k]([r]X)=[k+r]X.
\]
The client removes the blinding action,
\[
  Y=[-r]Q_t=[-r]([k+r]X)=[k]X.
\]
It consequently computes
\[
  y=
  \Htwo(
    \mathsf{PIVOT-out}\|\ctx\|\pk\|x\|
    \mathsf{enc}([k]\Hone(\ctx\|x))
  )
  =
  F_k(x).
\]

Because all statements and witnesses used by honest parties satisfy the
corresponding proof relations, every blinding proof and evaluation proof
is accepted by NIZK completeness.  Similarly, all signatures verify by
correctness of the signature scheme.  Hence an honest client accepts the
transcript and obtains the claimed output.
\end{proof}

\subsection{Correctness of proactive share refresh}

A proactive refresh must change the sharing polynomial while preserving
its constant term.  Therefore, the individual shares and their public
commitments may change, but the master key and public key must remain
unchanged.

\begin{lemma}[Correctness of proactive refresh]
\label{lem:refresh-correct}
Suppose that the epoch-\(e\) state is valid for threshold \(t\) and
master key \(k\).  Assume that all parties execute Protocol~3 honestly.
Then the resulting epoch-\((e+1)\) state is also valid for threshold
\(t\) and the same master key \(k\).  In particular,
\[
  F_{e+1}(0)=k,
  \qquad
  A_0^{(e+1)}=A_0^{(e)},
  \qquad
  \pk=[k]E_0.
\]
\end{lemma}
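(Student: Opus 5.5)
The plan is to exhibit explicit coefficients and commitment randomness for epoch \(e+1\) and then check the five conditions of Definition~\ref{def:valid-epoch-state} one at a time. Let \(a_\ell^{(e)},\rho_\ell^{(e)}\) witness validity of epoch \(e\). Under honest execution every dealer qualifies, so \(\mathcal R_e\) is the set of all honest refresh dealers. Set \(b_{j,0}:=0\) and \(\eta_{j,0}:=0\) as bookkeeping conventions; they are not published values. Then define
\[
  a_\ell^{(e+1)}=a_\ell^{(e)}+\sum_{j\in\mathcal R_e}b_{j,\ell},
  \qquad
  \rho_\ell^{(e+1)}=\rho_\ell^{(e)}+\sum_{j\in\mathcal R_e}\eta_{j,\ell},
\]
so that \(F_{e+1}(X)=F_e(X)+\sum_{j\in\mathcal R_e}z_j(X)\). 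This polynomial has degree at most \(t-1\).

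\textbf{Conditions (i), (ii) and (v).} Since \(z_j(0)=0\) syntactically, \(a_0^{(e+1)}=a_0^{(e)}=k\), which gives condition (i). In an honest run, \(\delta_{j,i}=z_j(i)\). Step R4 therefore gives
\[
  s_i^{(e+1)}=F_e(i)+\sum_j z_j(i)=F_{e+1}(i),
\]
which is condition (ii). Protocol~3 never modifies \(\pk\), and \(k\) is unchanged, so condition (v) \(\pk=[k]E_0\) carries over directly.

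\textbf{Condition (iii).} For \(\ell=0\) we have \(A_0^{(e+1)}=A_0^{(e)}=\Com(a_0^{(e)};\rho_0^{(e)})=\Com(a_0^{(e+1)};\rho_0^{(e+1)})\), using the convention \(\eta_{j,0}=0\). For \(\ell\ge1\), additive homomorphism gives
\[
  A_\ell^{(e)}\oplus\bigoplus_j D_{j,\ell}=\Com(a_\ell^{(e+1)};\rho_\ell^{(e+1)}).
\]

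\textbf{Condition (iv).} We compute \(\omega_i^{(e+1)}=\omega_i^{(e)}+\sum_j\nu_{j,i}\). Expanding \(\omega_i^{(e)}=\sum_\ell i^\ell\rho_\ell^{(e)}\) and \(\nu_{j,i}=\sum_{\ell\ge1}i^\ell\eta_{j,\ell}\) gives \(\sum_\ell i^\ell\rho_\ell^{(e+1)}\). Then Equation~\eqref{eq:evalcom}, applied to \(\mathbf A^{(e+1)}\), yields \(\EvalCom(\mathbf A^{(e+1)},i)=\Com(s_i^{(e+1)};\omega_i^{(e+1)})\).

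\textbf{Main obstacle.} The only delicate point is the constant-term commitment. There is no published \(D_{j,0}\), so I must make sure the randomness bookkeeping for \(\ell=0\) is consistent; setting \(\eta_{j,0}=0\) handles this. I also need to confirm that the homomorphism identities hold as exact equalities in the commitment space, not merely up to opening, which is what property (iv) of Section~\ref{sec:commitment} provides. The remaining steps are routine, but they also rely on two further assumptions: the epoch-consistency assumption, so that R4 updates the shares of a single epoch, and honest delivery during the coordinated transition, so that all servers adopt the same \(\mathbf A^{(e+1)}\).
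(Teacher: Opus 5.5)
Your proposal is correct and follows essentially the same route as the paper. Both define \(F_{e+1}=F_e+\sum_{j\in\mathcal R_e}z_j\) with \(a_\ell^{(e+1)}=a_\ell^{(e)}+\sum_j b_{j,\ell}\), \(\rho_0^{(e+1)}=\rho_0^{(e)}\), \(\rho_\ell^{(e+1)}=\rho_\ell^{(e)}+\sum_j\eta_{j,\ell}\), and check the clauses of Definition~\ref{def:valid-epoch-state} via additive homomorphism, with \(\pk\) unchanged because the constant term is unchanged. Your explicit check that the R4 update \(\omega_i^{(e)}+\sum_j\nu_{j,i}\) equals \(\sum_\ell i^\ell\rho_\ell^{(e+1)}\) fills in a step the paper only asserts with ``therefore.'' The extra epoch-consistency and transition-delivery assumptions you list are already implied by honest execution and play no role in the algebra.
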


\begin{proof}
For each qualified refresh dealer \(S_j\), let
\[
  z_j(X)=\sum_{\ell=1}^{t-1}b_{j,\ell}X^\ell.
\]
By construction, \(z_j\) has degree at most \(t-1\) and satisfies
\(z_j(0)=0\).  Define the aggregate refresh polynomial
\[
  Z_e(X)=\sum_{j\in\mathcal R_e}z_j(X)
\]
and the next sharing polynomial
\[
  F_{e+1}(X)=F_e(X)+Z_e(X).
\]
Both \(F_e\) and \(Z_e\) have degree at most \(t-1\), so
\(F_{e+1}\) also has degree at most \(t-1\).  Its constant term is
\[
  F_{e+1}(0)
  =
  F_e(0)+Z_e(0)
  =
  k+\sum_{j\in\mathcal R_e}z_j(0)
  =
  k.
\]

For every server \(S_i\), Protocol~3 computes
\begin{align*}
  s_i^{(e+1)}
  &=
  s_i^{(e)}
  +
  \sum_{j\in\mathcal R_e}\delta_{j,i}                              \\
  &=
  F_e(i)
  +
  \sum_{j\in\mathcal R_e}z_j(i)                                   \\
  &=
  F_{e+1}(i).
\end{align*}
Thus the refreshed values form a valid sharing of the same secret.

We next verify the public commitments.  Write
\[
  F_e(X)=\sum_{\ell=0}^{t-1}a_\ell^{(e)}X^\ell.
\]
Then
\[
  F_{e+1}(X)
  =
  a_0^{(e)}
  +
  \sum_{\ell=1}^{t-1}
  \left(
    a_\ell^{(e)}
    +
    \sum_{j\in\mathcal R_e}b_{j,\ell}
  \right)X^\ell.
\]
Hence
\[
  a_0^{(e+1)}=a_0^{(e)}=k
\]
and, for \(1\leq\ell<t\),
\[
  a_\ell^{(e+1)}
  =
  a_\ell^{(e)}
  +
  \sum_{j\in\mathcal R_e}b_{j,\ell}.
\]
By the additive homomorphism of the commitment scheme,
\begin{align*}
  A_\ell^{(e+1)}
  &=
  A_\ell^{(e)}
  \oplus
  \bigoplus_{j\in\mathcal R_e}D_{j,\ell}                           \\
  &=
  \Com\!\left(
    \left(a_\ell^{(e)}
    +
    \sum_{j\in\mathcal R_e}b_{j,\ell}\right)\text{ };\text{ }\left(
    \rho_\ell^{(e)}
    +
    \sum_{j\in\mathcal R_e}\eta_{j,\ell}\right)
  \right)
\end{align*}
for \(1\leq\ell<t\), while
\[
  A_0^{(e+1)}=A_0^{(e)}
\]
because no refresh dealer contributes a constant coefficient.

Let
\[
  \rho_0^{(e+1)}=\rho_0^{(e)}
\]
and, for \(1\leq\ell<t\),
\[
  \rho_\ell^{(e+1)}
  =
  \rho_\ell^{(e)}
  +
  \sum_{j\in\mathcal R_e}\eta_{j,\ell}.
\]
Then \(A_\ell^{(e+1)}
=\Com(a_\ell^{(e+1)};\rho_\ell^{(e+1)})\), and therefore
\[
  \EvalCom(\mathbf A^{(e+1)},i)
  =
  \Com(s_i^{(e+1)};\omega_i^{(e+1)})
\]
for every server \(S_i\).

Finally, the public key depends only on the constant term,
\[
  \pk=[F_{e+1}(0)]E_0=[k]E_0.
\]
Thus all components of the epoch invariant are preserved.
\end{proof}

\subsection{Correctness of the epoch transition}

The algebraic refresh procedure and the activation procedure serve
different purposes.  The refresh procedure constructs a correct next
sharing, whereas the transition procedure determines when that sharing
may be used.

\begin{lemma}[Epoch-consistent activation]
\label{lem:epoch-activation}
Assume an honest execution of the PROPOSE, PREPARE, COMMIT, and ACTIVATE
phases.  Suppose additionally that every evaluation request is bound to
an epoch certificate and that each server applies a share only when the
request epoch and certificate hash match its active local state.  Then
every successful evaluation chain uses shares from one common epoch.
\end{lemma}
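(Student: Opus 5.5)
The plan is to argue directly from the chain structure of Protocol~2 together with the binding of each request to a single pair \((e,H(\cert_e))\). Consider a successful evaluation chain with request \(\mathsf{req}=(\sid,e,I,H(\cert_e),B,\pi^{\mathsf{blind}})\). The first step is to observe that the epoch number and certificate hash are fixed by the client in \(\mathsf{req}\). They are then carried unchanged through every forwarded message \((\sid,e,I,Q_0,\ldots)\) and every signed message \(M_h\), and step E3(f) requires each \(S_{i_{h+1}}\) to check the epoch and certificate hash before acting. By the hypothesis of the lemma, each server \(S_{i_h}\) applies a share only if \(e\) and \(H(\cert_e)\) match its \emph{active} local state. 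So the share it uses is the one certified by the certificate whose hash is \(H(\cert_e)\).

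The second step is to show that, for honest servers, the active state determines exactly one share per certificate. I would use the state machine of Protocol~3. During PROPOSE, PREPARE and COMMIT a server's active state is still epoch \(e\), and it answers only with \(s_i^{(e)}\). Only after receiving \(t\) COMMIT messages for the common hash \(h=H(\cert'_{e+1})\) does it set \(\mathsf{epoch}\leftarrow e+1\), adopt \(\cert_{e+1}\) and switch to \(s_i^{(e+1)}\). By collision resistance of \(H\), the hash \(H(\cert_e)\) in the request identifies a unique certificate, and that certificate contains the epoch number. Hence the map from (epoch, certificate hash) to the share a server will apply is well defined. Any two servers that both accept the request are therefore acting with the share certified by that same certificate, namely their epoch-\(e\) shares \(s_{i}^{(e)}=F_e(i)\).

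The third step closes the argument by contradiction. Suppose a successful chain used shares from two epochs \(e\neq e'\). Some server \(S_{i_h}\) would then have applied \(s_{i_h}^{(e')}\) while the request carried \((e,H(\cert_e))\), so its active state would have had to match \((e,H(\cert_e))\). But its active state was \((e',H(\cert_{e'}))\), and \(e'\neq e\), so it would have rejected the request. This contradicts the request-matching hypothesis. Servers that have not yet activated, or have already activated, simply reject and cause a retry; they never interpolate. I would also note that after ACTIVATE and erasure (R6) an honest server no longer holds \(s_i^{(e)}\), so it cannot serve old-epoch requests either.

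The main obstacle I expect is the second step: ruling out two different certificates for the same epoch number, since that would allow two honest servers to both hold ``epoch \(e+1\)'' states on different polynomials. I would handle this using the honest-execution assumption together with the broadcast functionality \(\F_{\mathsf{broadcast}}\). All honest servers receive the same PROPOSE, so they compute identical local updates \(\mathbf A^{(e+1)}\) (as in Lemma~\ref{lem:refresh-correct}), and the COMMIT quorum is for one hash \(h\). Collision resistance of \(H\) then rules out two distinct certificates sharing a hash.
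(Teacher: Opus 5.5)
Your proposal is correct and follows the paper's own argument: each server checks the request's epoch number and certificate hash against its active local state before applying a share. So every server in a completed chain has the same active epoch, and a server that has already activated, or not yet activated, epoch \(e+1\) rejects the request instead of mixing shares. Your extra step, which uses broadcast agreement and collision resistance of \(H\) to rule out two distinct certificates for the same epoch, is not needed for the literal conclusion, since matching epoch numbers alone give one common epoch; it does, however, justify the stronger ``one common sharing polynomial'' reading that Theorem~\ref{thm:overall-correct} later draws from this lemma, which the paper's short proof leaves implicit.
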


\begin{proof}
An evaluation request contains an epoch number \(e\) and is bound to the
corresponding certificate \(\cert_e\).  Before acting, each server checks
that \(e\) equals its active epoch and that the certificate hash equals
the hash stored in its local active state.  Therefore, a server that has
already activated epoch \(e+1\) will not apply an epoch-\(e\) share, and
a server that remains in epoch \(e\) will not apply an epoch-\((e+1)\)
share.

Consequently, a chain can complete only if every participating server
accepts the same epoch identifier and certificate.  Otherwise the
evaluation aborts or is retried after the transition.  Hence every
successful chain is epoch-consistent.
\end{proof}

\begin{remark}[Asynchronous activation]
Receiving \(t\) COMMIT messages does not, by itself, imply that all
honest servers activate the next epoch simultaneously in an asynchronous
network.  The correctness claim needed here is therefore not simultaneous
activation, but epoch-consistent evaluation, a successful evaluation
must use one certificate and one sharing polynomial throughout the
chain.  If the intended model requires uninterrupted availability during
asynchronous transitions, the protocol should additionally specify a
grace-period, final-certificate, or synchronized-activation mechanism.
\end{remark}

\subsection{Correctness of committee resharing}

Committee resharing transfers the sharing of \(k\) from an old committee
with threshold \(t\) to a new committee with threshold \(t'\).

\begin{lemma}[Correctness of committee resharing]
\label{lem:reshare-correct}
Let \(I\) be an old-committee quorum of size at least \(t\), and suppose
that the old epoch state is valid with sharing polynomial \(F_e\) and
master key \(k\).  Assume that all old and new committee members execute
Protocol~5 honestly.  Then the new committee obtains evaluations of a
polynomial \(G(X)\) of degree at most \(t'-1\) satisfying
\[
  G(0)=k.
\]
The new shares and the aggregate resharing commitments are mutually
consistent.
\end{lemma}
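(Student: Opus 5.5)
The plan follows the DKG correctness argument (Lemma~\ref{lem:dkg-correct}), with the constant terms \(\lambda_i^I s_i^{(e)}\) playing the role of the dealers' constant coefficients.

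\textbf{Step 1: the aggregate polynomial.} For each \(i\in I\), write
\[
  g_i(X)=\sum_{\ell=0}^{t'-1}c_{i,\ell}X^\ell,
  \qquad
  c_{i,0}=\lambda_i^I s_i^{(e)},
\]
and let \(B_{i,\ell}=\Com(c_{i,\ell};\beta_{i,\ell})\) be its published coefficient commitments. Set
\[
  G(X)=\sum_{i\in I}g_i(X),
  \qquad
  \gamma_\ell=\sum_{i\in I}c_{i,\ell}.
\]
Each \(g_i\) has degree at most \(t'-1\), so \(G\) does too, since a finite sum of such polynomials stays in that space. The new share is computed as
\[
  s_j'=\sum_{i\in I}g_i(j),
\]
and linearity of evaluation gives \(s_j'=G(j)\). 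So the new committee holds evaluations of \(G\).

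\textbf{Step 2: the constant term.} Since \(G(0)=\sum_{i\in I}\lambda_i^I s_i^{(e)}\), it remains to show this sum equals \(k\). Validity of the epoch-\(e\) state (Definition~\ref{def:valid-epoch-state}) gives \(s_i^{(e)}=F_e(i)\) with \(\deg F_e\le t-1\) and \(F_e(0)=k\). Lagrange interpolation (Equation~\eqref{eq:lagrange}), with coefficients computed for the set \(I\), then gives \(\sum_{i\in I}\lambda_i^I F_e(i)=F_e(0)=k\).

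The one point needing care is \(|I|>t\). There I would note that interpolation over \(|I|\) distinct nonzero points recovers any polynomial of degree at most \(|I|-1\), and \(F_e\) has degree at most \(t-1\le |I|-1\). So the identity still holds, as remarked after Equation~\eqref{eq:lagrange}. This uses the standing assumption that identifiers are distinct nonzero elements of \(\Zq\) with \(n<q\), so all Lagrange denominators are invertible.

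\textbf{Step 3: commitment consistency.} Define the aggregate commitments
\[
  B_\ell=\bigoplus_{i\in I}B_{i,\ell}=\Com(\gamma_\ell;\textstyle\sum_{i\in I}\beta_{i,\ell}).
\]
Each new server \(S_j'\) also aggregates the received randomness into
\[
  \omega_j'=\sum_{i\in I}\sum_{\ell}j^\ell\beta_{i,\ell}.
\]
Repeating the homomorphic calculation of Lemma~\ref{lem:dkg-correct} then gives
\[
  \EvalCom(\mathbf B,j)=\Com(s_j';\omega_j'),
\]
which is the claimed mutual consistency.

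\textbf{Honest-execution checks and the main obstacle.} Every individual VSS verification in step M3 passes by the same argument. Every \(\pi_i^{\mathsf{reshare}}\) is accepted by NIZK completeness, because the honest witness \((s_i^{(e)},\omega_i^{(e)},\rho_{i,0})\) satisfies the relation: \(\EvalCom(\mathbf A^{(e)},i)=\Com(s_i^{(e)};\omega_i^{(e)})\) by condition (iv) of the valid state, and \(B_{i,0}\) opens to \(\lambda_i^I s_i^{(e)}\).

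I expect the main obstacle to be bookkeeping rather than mathematics. The protocol leaves the constant-term randomness and the aggregated opening \(\omega_j'\) implicit, so I will fix them explicitly as above. I will also note that \(\pk\) is untouched because \(G(0)=k\), mirroring the last step of Lemma~\ref{lem:refresh-correct}.
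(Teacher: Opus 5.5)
Your proposal is correct and follows the paper's proof essentially step for step. Like the paper, you define $G=\sum_{i\in I}g_i$, read off $s_j'=G(j)$ and $\deg G\le t'-1$, obtain $G(0)=F_e(0)=k$ by Lagrange interpolation at zero, and use the additive homomorphism for commitment consistency. The differences are cosmetic: you aggregate the coefficient commitments before applying $\EvalCom$, whereas the paper sums the per-dealer values $\EvalCom((B_{i,0},\ldots,B_{i,t'-1}),j)$ (the two coincide by the homomorphism), and you spell out the $|I|>t$ case and the completeness of $\pi_i^{\mathsf{reshare}}$, which the paper leaves implicit.
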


\begin{proof}
Each old server \(S_i\), for \(i\in I\), samples a polynomial
\[
  g_i(X)
  =
  \lambda_i^I s_i^{(e)}
  +
  \sum_{\ell=1}^{t'-1}c_{i,\ell}X^\ell.
\]
This polynomial has degree at most \(t'-1\) and constant term
\[
  g_i(0)=\lambda_i^I s_i^{(e)}.
\]
Define
\[
  G(X)=\sum_{i\in I}g_i(X).
\]
Since each summand has degree at most \(t'-1\), so does \(G\).  Its
constant term is
\begin{align*}
  G(0)
  &=
  \sum_{i\in I}g_i(0)                                             \\
  &=
  \sum_{i\in I}\lambda_i^I s_i^{(e)}                              \\
  &=
  \sum_{i\in I}\lambda_i^I F_e(i)                                 \\
  &=
  F_e(0)                                                          \\
  &=
  k.
\end{align*}
The fourth equality is the Lagrange interpolation formula evaluated at
zero.

Each new server \(S_j'\) receives \(g_i(j)\) from every old server
\(S_i\in I\) and computes
\[
  s_j'
  =
  \sum_{i\in I}g_i(j)
  =
  G(j).
\]
Therefore, the values
\(\{s_j'\}_{j=1}^{n'}\) form a threshold-\(t'\) Shamir sharing of the
same master key \(k\).

For commitment consistency, let
\[
  B_{i,\ell}=\Com(c_{i,\ell};\rho_{i,\ell})
\]
for \(\ell\geq 1\), and let \(B_{i,0}\) commit to
\(\lambda_i^I s_i^{(e)}\).  The reshare proof establishes that this
constant-term commitment is linked to the certified old share.  By
homomorphic evaluation of the coefficient commitments,
\[
  \EvalCom(
    (B_{i,0},\ldots,B_{i,t'-1}),j
  )
  =
  \Com(g_i(j);\omega_{i,j})
\]
for the corresponding opening randomness \(\omega_{i,j}\).  Summing
over all \(i\in I\) gives
\[
  \bigoplus_{i\in I}
  \EvalCom(
    (B_{i,0},\ldots,B_{i,t'-1}),j
  )
  =
  \Com\!\left(
    \sum_{i\in I}g_i(j);
    \sum_{i\in I}\omega_{i,j}
  \right)
  =
  \Com(s_j';\omega_j').
\]
Hence the new shares are consistent with the aggregate resharing
commitments.
\end{proof}

\begin{remark}
The statement that no party learns \(k\) during resharing is a privacy
claim rather than a correctness claim.  It should therefore be proved in
the security section under the relevant corruption threshold and erasure
assumptions.  Correctness establishes only that the protocol computes a
new sharing whose constant term is \(k\).
\end{remark}

\subsection{Overall correctness}

We can now combine the preceding lemmas into a global correctness
theorem.

\begin{theorem}[Overall correctness of \scheme]
\label{thm:overall-correct}
Assume that all parties execute the protocol honestly, that every
successful evaluation is epoch-consistent, and that the primitive
correctness assumptions stated at the beginning of this section hold.
Let Protocol~1 generate the initial master key \(k\), public key
\(\pk\), and server shares.  Then the following statements hold,

\begin{enumerate}[label=(\roman*)]
  \item For every valid epoch \(e\), every input
        \(x\in\{0,1\}^*\), and every epoch-\(e\) quorum \(I\) of size
        \(t\), Protocol~2 outputs
        \[
          y=F_k(x).
        \]

  \item After any finite sequence of proactive refresh operations, the
        active shares form a valid threshold sharing of the same master
        key \(k\), and the public key remains
        \[
          \pk=[k]E_0.
        \]

  \item After any valid committee resharing from threshold \(t\) to
        threshold \(t'\), the new committee holds a valid
        threshold-\(t'\) sharing of the same master key \(k\).

  \item For every fixed input \(x\), the value \(F_k(x)\) is invariant
        across all refreshes and resharings.
\end{enumerate}
\end{theorem}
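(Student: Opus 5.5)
The proof is an induction on the sequence of maintenance operations, with the valid-epoch-state invariant of Definition~\ref{def:valid-epoch-state} as the inductive hypothesis. The preceding lemmas supply each step.

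\textbf{Base case and statement (ii).} Lemma~\ref{lem:dkg-correct} gives that the epoch-$0$ state produced by Protocol~1 is valid for threshold $t$ and master key $k=\sum_{j\in\mathcal Q}a_{j,0}$, with $\pk=[k]E_0$. For the inductive step, Lemma~\ref{lem:refresh-correct} says that if the epoch-$e$ state is valid for $(t,k)$, then after an honest run of Protocol~3 the epoch-$(e+1)$ state is valid for the same $(t,k)$, with $A_0^{(e+1)}=A_0^{(e)}$. Since $\pk$ depends only on the constant term, it is unchanged. Induction on the number of refreshes then proves statement (ii).

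\textbf{Resharing, statement (iii).} Lemma~\ref{lem:reshare-correct} applies to any valid old state and gives a new polynomial $G$ of degree at most $t'-1$ with $G(0)=k$, together with shares $s_j'=G(j)$ and consistent aggregate commitments. To restart the induction on the new committee, I also need the new state to be valid in the sense of Definition~\ref{def:valid-epoch-state}. I will set the new coefficient commitments to $\bigoplus_{i\in I}B_{i,\ell}$. I will keep $\pk$ unchanged, since $G(0)=k$ and $\pk$ is never recomputed. Refreshes and resharings may then be interleaved in any finite order, because every operation maps a valid state for key $k$ to a valid state for the same key $k$, possibly with a new threshold.

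\textbf{Evaluation and invariance, statements (i) and (iv).} For (i), take any epoch reached by such a sequence. Its state is valid by the above, and the evaluation is epoch-consistent by hypothesis, which Lemma~\ref{lem:epoch-activation} justifies. Lemma~\ref{lem:eval-correct} then gives $y=F_k(x)$; for a resharing-produced state it applies with $t'$ in place of $t$. Statement (iv) follows because $F_k(x)$ in Equation~\eqref{eq:prf} depends only on $\ctx$, $x$, $k$ and $\pk=[k]E_0$. None of these changes, and the epoch number is deliberately excluded from the hash.

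\textbf{Main obstacle.} The delicate step is the hand-off in (iii). Lemma~\ref{lem:reshare-correct} is stated as a correctness claim about $G$, not explicitly in the format of Definition~\ref{def:valid-epoch-state}. I will therefore check each clause for the new committee directly. The coefficients are $a_\ell'=\sum_{i\in I}c_{i,\ell}$ for $\ell\ge 1$ and $a_0'=k$. The randomness is summed coordinatewise. The $\EvalCom$ identity follows from homomorphism. Finally, Lagrange interpolation must be taken over the new index set, which requires the new identifiers to be distinct, nonzero, and less than $q$.
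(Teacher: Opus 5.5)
Your proposal is correct and follows the paper's proof essentially step for step. Both argue by induction over the state-changing operations with the valid-epoch-state invariant: Lemma~\ref{lem:dkg-correct} gives the base case, Lemmas~\ref{lem:refresh-correct} and~\ref{lem:reshare-correct} give the maintenance steps, Lemmas~\ref{lem:epoch-activation} and~\ref{lem:eval-correct} give evaluation, and invariance follows because the $\Htwo$ input depends only on $\ctx$, $x$, $k$ and $\pk$. The only difference is that you explicitly check the clauses of Definition~\ref{def:valid-epoch-state} for the new committee (aggregate commitments $\bigoplus_{i\in I}B_{i,\ell}$, summed randomness, and distinct nonzero new identifiers below $q$), whereas the paper asserts this hand-off in a single sentence.
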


\begin{proof}
The proof proceeds by induction over the sequence of state-changing
protocol executions.

\paragraph{Initial state.}
By Lemma~\ref{lem:dkg-correct}, Protocol~1 establishes a valid
epoch-\(0\) state for a uniformly distributed master key \(k\), with
public key \(\pk=[k]E_0\).

\paragraph{Evaluation.}
Assume the currently active state is valid.  By
Lemma~\ref{lem:epoch-activation}, every successful evaluation uses a
single epoch certificate and shares from one common sharing polynomial.
Lemma~\ref{lem:eval-correct} then implies that the client recovers
\[
  [k]\Hone(\ctx\|x)
\]
and outputs \(F_k(x)\).

\paragraph{Refresh step.}
Assume epoch \(e\) is valid.  Lemma~\ref{lem:refresh-correct} shows that
Protocol~3 produces a valid epoch-\((e+1)\) state whose sharing
polynomial has the same constant term \(k\).  The public key therefore
remains \([k]E_0\).  This proves preservation of the invariant across
every proactive refresh.

\paragraph{Resharing step.}
Assume the old committee holds a valid sharing of \(k\).  By
Lemma~\ref{lem:reshare-correct}, Protocol~5 constructs a new sharing
polynomial \(G\) with \(G(0)=k\).  Thus the new committee satisfies the
same invariant, with threshold \(t'\) and its corresponding commitment
vector.

\paragraph{Output invariance.}
The OPRF value is
\[
  F_k(x)
  =
  \Htwo(
    \mathsf{PIVOT-out}\|\ctx\|\pk\|x\|
    \mathsf{enc}([k]\Hone(\ctx\|x))
  ).
\]
Neither proactive refresh nor committee resharing changes \(k\),
\(\pk=[k]E_0\), \(\ctx\), or \(x\).  Therefore the input to \(\Htwo\)
is unchanged, and so the resulting OPRF output is identical throughout
the lifetime of the protocol.
\end{proof}

\begin{corollary}[Correctness after arbitrary maintenance operations]
\label{cor:maintenance-correct}
Let the system undergo any finite sequence consisting of proactive
refreshes and valid committee resharings.  If the resulting active
committee completes an epoch-consistent evaluation on input \(x\), then
the client obtains the same value \(F_k(x)\) that it would have obtained
immediately after the original DKG.
\end{corollary}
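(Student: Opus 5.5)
The plan is an induction on the length of the maintenance sequence, using the valid-epoch-state invariant of Definition~\ref{def:valid-epoch-state} as the hypothesis. Write the sequence of state-changing operations as $\mathsf{op}_1,\ldots,\mathsf{op}_m$, each either a run of Protocol~3 or a valid run of Protocol~5. Let $\mathsf{St}_0$ be the state produced by Protocol~1 and $\mathsf{St}_h$ the state after $\mathsf{op}_h$. The claim to prove for every $h$ is:
\begin{center}
$\mathsf{St}_h$ is a valid epoch state, for the threshold of the currently active committee, with the same master key $k$ and the same public key $\pk=[k]E_0$.
\end{center}

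The base case $h=0$ is Lemma~\ref{lem:dkg-correct}. For the inductive step, split on the type of $\mathsf{op}_h$. If it is a refresh, Lemma~\ref{lem:refresh-correct} gives validity of the next state with the same $k$ and unchanged $A_0$. If it is a resharing, Lemma~\ref{lem:reshare-correct} gives a new polynomial $G$ of degree at most $t'-1$ with $G(0)=k$ and consistent aggregate commitments. Because resharing does not touch $k$, the public key is still $[k]E_0$. Since the new committee's threshold may differ, the invariant has to be stated relative to the active committee and threshold, not a fixed $t$.

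Next I apply the evaluation results to the final state $\mathsf{St}_m$. By hypothesis the evaluation is epoch-consistent, as in Lemma~\ref{lem:epoch-activation}, so all shares in the chain lie on the single polynomial certified for $\mathsf{St}_m$. Lemma~\ref{lem:eval-correct} then shows the client obtains $Y=[k]\Hone(\ctx\|x)$ and outputs $\Htwo(\mathsf{PIVOT-out}\|\ctx\|\pk\|x\|\mathsf{enc}(Y))$. Every argument of $\Htwo$ here is identical to the value it would have been immediately after the DKG: $\ctx$, $x$ and $k$ are unchanged, $\pk=[k]E_0$ is unchanged, and $\mathsf{enc}$ is canonical. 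Hence the output equals $F_k(x)$ computed just after Protocol~1. This is essentially item~(iv) of Theorem~\ref{thm:overall-correct}, iterated along the sequence.

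The main obstacle is the resharing step. Lemma~\ref{lem:reshare-correct} establishes $G(0)=k$ and share–commitment consistency, but it does not state the full Definition~\ref{def:valid-epoch-state} for the new committee. For the induction to continue through later refreshes and resharings, I have to check the missing clauses explicitly. First, a coefficient vector $\mathbf A'$ with $A'_\ell=\bigoplus_{i\in I}B_{i,\ell}$ must commit to the coefficients of $G$; this follows from the additive homomorphism. Second, $A'_0$ must commit to $k$ with known aggregate randomness; this follows from $\sum_i\lambda_i^I s_i^{(e)}=k$. With these in place, later Protocol~3 runs and evaluations are applied with $t'$ in place of $t$. I would also note that the new committee's identifiers must be distinct nonzero elements of $\Zq$, so that Lagrange coefficients remain well defined for $t'$.
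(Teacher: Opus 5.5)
Your proof is correct and follows essentially the same route as the paper. The paper's proof cites Theorem~\ref{thm:overall-correct}, whose proof is your induction over refresh and resharing steps via Lemmas~\ref{lem:dkg-correct}, \ref{lem:refresh-correct} and~\ref{lem:reshare-correct}, followed by the evaluation lemma and the observation that every input to $\Htwo$ is unchanged. Your explicit check that resharing re-establishes the full invariant of Definition~\ref{def:valid-epoch-state} for threshold $t'$ fills in a step that the theorem's proof only asserts: the aggregated coefficient commitments, a constant-term commitment to $k$, the unchanged $\pk$, and distinct nonzero new identifiers.
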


\begin{proof}
Each refresh and resharing operation preserves the master key \(k\) by
Theorem~\ref{thm:overall-correct}.  The conclusion therefore follows
from the correctness of threshold evaluation.
\end{proof}

\section{Security Analysis}
\label{sec:security}

This section proves the security of \scheme\ against a mobile
\emph{semi-honest} (honest-but-curious) adversary.  A corrupted party
executes the prescribed algorithms faithfully, uses correctly distributed
randomness, sends all required messages, and does not abort or modify a
message.  The adversary may nevertheless retain and jointly analyse the
complete internal states and transcripts of the parties that it corrupts.
Consequently, the proof in this section addresses confidentiality and
privacy. Robustness, blame, and resistance to malformed proofs belong to
the malicious-security analysis.

\subsection{Adversarial model and ideal functionality}
\label{sec:sh-model}

Time is divided into epochs.  In epoch $e$, the adversary corrupts a set
$\mathcal C_e\subseteq[n]$ satisfying $|\mathcal C_e|<t$.  The set may
change between epochs.  Corruptions are \emph{epoch respecting} and after an
epoch transition, an honest server erases its epoch-$e$ share and refresh
randomness before a corruption in epoch $e+1$ reveals its state.  Thus, a
mobile adversary never obtains both the erased and current states of an
honest server.  This condition is necessary for every proactive secret
sharing protocol, without it, an adversary could sequentially collect $t$
shares from one epoch and reconstruct the key.

Clients may also be corrupted.  The corruption status of a client is fixed
before the beginning of each evaluation session and an honest client remains
uncorrupted for that session, whereas a corrupted client is passive from the
start.  This standard static-per-session restriction avoids the stronger
problem of explaining an already transmitted blinded curve after a later
client corruption. Supporting such post-session adaptive client corruption
would require a non-committing simulation mechanism or an explicit
simulator-only input interface.  A corrupted client forms
$B=[r]\Hone(\ctx\|x)$ with a fresh uniform $r$, submits a valid blinding
proof, verifies the server proofs, and computes the specified output.  Network
metadata including $\sid$, epoch number, quorum identity, message lengths,
and success of a session is treated as public leakage.

We compare the real execution with an ideal functionality
$\F^{\mathsf{sh}}_{\mathsf{pTVOPRF}}$ having the following external
behaviour.

\begin{itemize}[nosep]
  \item On $\mathsf{Setup}(n,t)$, it samples $k\leftarrow\Zq$, forms a
        uniformly random degree-$(t-1)$ Shamir sharing of $k$, and publishes
        $\pk=[k]E_0$.
  \item On $\mathsf{Eval}(x)$, it returns $F_k(x)$ to the requesting client
        and leaks only the public session metadata to the adversary.
  \item On $\mathsf{Refresh}$, it replaces the current sharing by an
        independently random degree-$(t-1)$ sharing of the same $k$ and
        erases the old sharing.
  \item On $\mathsf{Reshare}(n',t')$, it replaces the old sharing by an
        independently random degree-$(t'-1)$ sharing of the same $k$ among
        the new committee and erases the old-committee state.
  \item On corruption of server $S_i$ in epoch $e$, it reveals only the
        current share $s_i^{(e)}$ and the current local state.  On corruption
        of a client, it reveals that client's input, randomness, intermediate
        value $Y=[k]\Hone(\ctx\|x)$, and output, exactly as a real passive
        corruption would.
\end{itemize}

For proof convenience, the simulator and the ideal functionality jointly
sample the same uniform key and Shamir shares, the simulator never releases the
key to the adversary.  Equivalently, one may regard this as a private
simulation interface that does not change the external input/output behaviour
of the functionality.

\begin{definition}[Semi-honest realization]
Protocol \scheme\ securely realizes
$\F^{\mathsf{sh}}_{\mathsf{pTVOPRF}}$ if, for every PPT semi-honest
adversary $\A$ corrupting fewer than $t$ servers in each epoch, there exists
a PPT simulator $\Sc$ such that the joint distribution of the environment's
output and the adversary's view in the real execution is computationally
indistinguishable from the corresponding distribution in the ideal
execution with $\Sc$.
\end{definition}

\subsection{Elementary privacy lemmas}
\label{sec:sh-lemmas}

\begin{lemma}[Privacy of fewer than $t$ Shamir shares]
\label{lem:shamir-passive}
Let $f(X)=k+\sum_{j=1}^{t-1}a_jX^j$, where
$a_1,\ldots,a_{t-1}\leftarrow\Zq$ independently.  For every set
$C=\{i_1,\ldots,i_c\}$ with $c<t$, the vector
$(f(i_1),\ldots,f(i_c))$ is uniformly distributed over $\Zq^c$ and is
independent of $k$.
\end{lemma}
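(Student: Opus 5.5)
The plan is to show that the map from the random coefficient vector $(a_1,\ldots,a_{t-1})\in\Zq^{t-1}$ to the observed share vector is an affine surjection. Then I will use the elementary fact that an affine surjection between finite $\Zq$-vector spaces pushes the uniform distribution forward to the uniform distribution. Since that image distribution does not depend on $k$, independence from $k$ follows.

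Concretely, write
\[
  (f(i_1),\ldots,f(i_c))^{\top} = k\cdot\mathbf 1 + V\,(a_1,\ldots,a_{t-1})^{\top},
\]
where $V$ is the $c\times(t-1)$ matrix with entries $V_{u,j}=i_u^{\,j}$. Because $k\cdot\mathbf 1$ is a fixed translation for each fixed $k$, and translation is a bijection of $\Zq^c$ preserving the uniform distribution, it suffices to show that the linear map $V:\Zq^{t-1}\to\Zq^c$ is surjective.

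The main step is to show $V$ has rank $c$. Factor $V=V'D$, where $V'$ is the $c\times(t-1)$ matrix with entries $i_u^{\,j-1}$ and $D=\mathrm{diag}(i_1,\ldots,i_c)$ acts on the rows. The leftmost $c$ columns of $V'$ form a square Vandermonde matrix in the points $i_1,\ldots,i_c$; this uses $c\le t-1$. The points are distinct elements of $\Zq$ by the standing assumption $n<q$, so this Vandermonde matrix is invertible. The diagonal matrix $D$ is invertible because the identifiers are nonzero in $\Zq$. Hence $V$ has a $c\times c$ invertible minor and is surjective.

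Given surjectivity, every fibre $V^{-1}(\mathbf v)$ is a coset of $\ker V$ and so has exactly $q^{t-1-c}$ elements. Therefore $\Pr[V\mathbf a=\mathbf v]=q^{-c}$ for every $\mathbf v\in\Zq^c$, and the same holds after translating by $k\cdot\mathbf 1$. This proves uniformity and independence from $k$.

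The only genuine obstacle is the rank argument, and in particular remembering that nonzero identifiers are required: a share at $0$ would expose $k$ directly. The case $c=0$ is vacuous. As an equivalent alternative, one can argue by interpolation: for any target $\mathbf v$ and any $k$, Lagrange interpolation through the $c+1\le t$ points $(0,k)$ and $(i_u,v_u)$, padded with arbitrary values at further points, yields a polynomial of degree at most $t-1$ with constant term $k$ matching $\mathbf v$. This gives surjectivity directly.
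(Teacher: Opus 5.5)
Your proposal is correct and takes the same route as the paper: write the share vector as $k\mathbf 1 + V\mathbf a$, show the Vandermonde-type matrix $V$ has rank $c$ because the indices are distinct and nonzero, and conclude uniformity from surjectivity plus translation invariance. You also justify the rank claim by factoring out the diagonal, which the paper only asserts. One small slip: since $D$ scales the rows, the factorization should read $V = DV'$, not $V = V'D$; the latter is not dimensionally defined unless $c = t-1$, and even then it would scale columns rather than rows.
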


\begin{proof}
Write
\[
  \begin{pmatrix}f(i_1)\\ \vdots\\ f(i_c)\end{pmatrix}
  = k\mathbf 1 +
  \begin{pmatrix}
  i_1&i_1^2&\cdots&i_1^{t-1}\\
  \vdots&\vdots&&\vdots\\
  i_c&i_c^2&\cdots&i_c^{t-1}
  \end{pmatrix}
  \begin{pmatrix}a_1\\ \vdots\\ a_{t-1}\end{pmatrix}.
\]
The displayed Vandermonde submatrix has row rank $c$ because the indices are
distinct and non-zero in $\Zq$.  Hence the associated linear map from
$\Zq^{t-1}$ to $\Zq^c$ is surjective.  A uniform coefficient vector is
therefore mapped to a uniform vector in $\Zq^c$.  Adding the fixed vector
$k\mathbf 1$ only translates the uniform distribution and does not change
it.  Thus fewer than $t$ shares contain no information about $k$.
\end{proof}

\begin{lemma}[Perfect privacy of the blinded input]
\label{lem:blind-perfect}
For every two inputs $x_0,x_1$, the curve-valued portions of an honest
client's evaluation transcript have identical distributions.  After
including $\pi^{\mathsf{blind}}$, the complete request transcripts are
computationally indistinguishable under the zero-knowledge property of the
blinding NIZK.
\end{lemma}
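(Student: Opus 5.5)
The plan is to argue in two stages, matching the two sentences of the statement: first an exact distributional statement for the curve-valued data, then a hybrid step that replaces the real blinding proof by a simulated one. Fix $x_0,x_1$ and write $X_b=\Hone(\ctx\|x_b)\in\E$. The curve-valued part of an honest client's request is $B=[r]X_b$ with $r\leftarrow\Zq$ uniform and fresh. The curve-valued part of the returned data is the chain $(Q_1,\ldots,Q_t)$, where $Q_h=[\sum_{u\le h}\lambda_{i_u}^I s_{i_u}^{(e)}]B$ by the induction in Lemma~\ref{lem:eval-correct}. Every curve in the transcript is therefore a deterministic function of $B$ together with values (shares, Lagrange coefficients, quorum) that are independent of $x_b$ and of $r$. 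So it suffices to show that the distribution of $B$ does not depend on $b$.

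For that I will use the torsor structure from Section~\ref{sec:group-action}. Since the action is free and transitive, for every $E\in\E$ there is a unique $a\in\Zq$ with $E=[a]X_b$. Hence the map $r\mapsto[r]X_b$ is a bijection $\Zq\to\E$, and a uniform $r$ yields a uniform $B\in\E$ for both $b=0$ and $b=1$. Consequently the joint distribution of $(B,Q_1,\ldots,Q_t)$, together with any other $x$-independent data such as $\sid$, $e$, $I$, the certificate hash, signatures and server proofs, is identical in the two cases. The server proofs depend only on statements built from $B,Q_h$ and public data and on witnesses that are server shares, so they inherit the same property. This gives the perfect part of the statement.

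For the full request transcript, the only remaining component that depends on $(x,r)$ is $\pi^{\mathsf{blind}}\leftarrow\NIZK.\Prove_{\mathcal R_{\mathsf{blind}}}(\ctx,\sid,B;x,r)$. I will use a three-hybrid argument: $H_0$ is the real transcript with $x_0$; $H_1$ replaces $\pi^{\mathsf{blind}}$ by $\NIZK.\Sim(\ctx,\sid,B)$ and is otherwise unchanged; $H_2$ is the same as $H_1$ with $x_1$; and $H_3$ is the real transcript with $x_1$. $H_1\approx_c H_0$ and $H_2\approx_c H_3$ follow from the zero-knowledge property, because the statement is true and honestly generated in each case. $H_1$ and $H_2$ are identically distributed, because the simulator sees only the statement $(\ctx,\sid,B)$, whose distribution is independent of $b$ by the previous paragraph.

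The main obstacle is making sure that the statement of $\mathcal R_{\mathsf{blind}}$ and the other public metadata leak nothing about $x$. If the statement contained $X$ or a hash of $x$, the perfect-indistinguishability step would fail. I will therefore check against Protocol~2 that the statement is exactly $(\ctx,\sid,B)$ and that $\sid$ and the quorum choice are independent of $x$, treating them as public leakage as specified in the adversarial model. A secondary point is that the reduction must run the zero-knowledge distinguisher with a simulation trapdoor supplied by the $\F_{\mathsf{nizk}}$-hybrid, so the distinguisher has to be allowed to generate the server-side part of the view itself. That is possible because this part is $x$-independent.
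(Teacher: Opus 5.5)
Your proposal is correct and follows essentially the same route as the paper. You use the bijection $r\mapsto[r]X$, from freeness and transitivity, to make $B$ uniform on $\E$; you observe that the chain $(Q_0,\ldots,Q_t)$ and the server proofs and signatures depend only on $B$ and $x$-independent data; and you handle $\pi^{\mathsf{blind}}$ by zero knowledge, where your explicit hybrids $H_0\approx_c H_1\equiv H_2\approx_c H_3$ and your check that the statement of $\mathcal R_{\mathsf{blind}}$ is exactly $(\ctx,\sid,B)$ just spell out what the paper compresses into its final sentence.
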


\begin{proof}
Fix $x$ and write $X=\Hone(\ctx\|x)$.  Since the action is free and
transitive and $|G|=|\E|=q$, the map
\[
  \phi_X:\Zq\longrightarrow\E,\qquad r\longmapsto[r]X
\]
is a bijection.  Therefore, for uniform $r\leftarrow\Zq$, the blinded
curve $B=[r]X$ is uniform over $\E$, independently of $X$ and hence
independently of $x$.

For a fixed quorum $I=(i_1,\ldots,i_t)$, define
$\alpha_h=\sum_{j=1}^{h}\lambda_{i_j}^I s_{i_j}^{(e)}$.  The curve chain is
\[
  (Q_0,Q_1,\ldots,Q_t)
  =(B,[\alpha_1]B,\ldots,[\alpha_t]B).
\]
It is a deterministic function of the uniform curve $B$ and values that do
not depend on $x$.  Consequently, its joint distribution is identical for
$x_0$ and $x_1$, this is stronger than equality of the individual marginal
distributions.

The evaluation proofs and signatures are generated from the curve statements
and the servers' shares, all of which are independent of $x$ once $B$ is
fixed.  Finally, zero knowledge permits replacement of the honest client's
blinding proof by a simulated proof without revealing the witness $(x,r)$.
Thus the complete server-side view is computationally independent of the
client input.
\end{proof}

\begin{lemma}[Refresh re-randomizes the sharing]
\label{lem:refresh-randomizes}
Assume that at least one refresh dealer is honest.  Conditioned on the fixed
master key $k$, the refreshed polynomial $F_{e+1}$ is uniformly distributed
over
\[
  \mathcal P_k=\{f\in\Zq[X]:\deg(f)<t\text{ and }f(0)=k\},
\]
and is independent of $F_e$ from the adversary's perspective.
\end{lemma}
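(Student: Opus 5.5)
The plan is to reduce the statement to a one-time-pad argument over the vector space of nonconstant coefficients. Identify a polynomial \(f\in\mathcal P_k\) with its coefficient vector \((a_1,\ldots,a_{t-1})\in\Zq^{t-1}\); since \(f(0)=k\) is fixed, this is a bijection \(\mathcal P_k\leftrightarrow\Zq^{t-1}\). By Protocol~3 (and Lemma~\ref{lem:refresh-correct}) we have \(F_{e+1}=F_e+\sum_{j\in\mathcal R_e}z_j\), with \(a_\ell^{(e+1)}=a_\ell^{(e)}+\sum_{j\in\mathcal R_e}b_{j,\ell}\) for \(1\le\ell<t\), and \(a_0^{(e+1)}=k\). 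It therefore suffices to show that the vector \(\mathbf a^{(e+1)}=(a_1^{(e+1)},\ldots,a_{t-1}^{(e+1)})\) is uniform on \(\Zq^{t-1}\) and independent of \(\mathbf a^{(e)}\).

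First, I fix an honest dealer \(j^\ast\in\mathcal R_e\). Since the adversary is semi-honest, every dealer follows Protocol~3, so \(j^\ast\) is not disqualified. The honest dealer samples \(\mathbf b_{j^\ast}=(b_{j^\ast,1},\ldots,b_{j^\ast,t-1})\) uniformly and independently of everything else: of \(F_e\), of the other dealers' vectors \(\mathbf b_j\), and of the adversary's randomness. Write
\[
  \mathbf a^{(e+1)}=\mathbf b_{j^\ast}+\mathbf w,
  \qquad
  \mathbf w=\mathbf a^{(e)}+\sum_{j\in\mathcal R_e\setminus\{j^\ast\}}\mathbf b_j .
\]
Here \(\mathbf w\) is a random variable independent of \(\mathbf b_{j^\ast}\). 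For any fixed \(\mathbf w\), translation by \(\mathbf w\) is a bijection of \(\Zq^{t-1}\), so \(\mathbf a^{(e+1)}\) conditioned on \(\mathbf w\), and a fortiori conditioned on \(\mathbf a^{(e)}\) and on \(k\), is uniform. This gives both uniformity over \(\mathcal P_k\) and independence from \(F_e\) at the level of the underlying random variables.

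The main obstacle is the qualifier ``from the adversary's perspective.'' The adversary's view also includes the public commitments \(D_{j^\ast,\ell}\) and the zero-shares \(\delta_{j^\ast,i}\) sent to corrupted servers \(i\in\mathcal C_e\), and both depend on \(\mathbf b_{j^\ast}\). I would handle these in two steps:

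\begin{itemize}
  \item \textbf{Commitments.} Invoke the hiding property of \(\Com\) in a hybrid that replaces each \(D_{j^\ast,\ell}\) by a commitment to \(0\). This is where the statement becomes computational rather than perfect, unless the commitment scheme is statistically hiding.
  \item \textbf{Corrupted shares.} For \(|\mathcal C_e|<t\), the values \(\{z_{j^\ast}(i)\}_{i\in\mathcal C_e}\) are a linear image of \(\mathbf b_{j^\ast}\). Conditioning on them restricts \(\mathbf b_{j^\ast}\) to a coset of a subspace, exactly as in Lemma~\ref{lem:shamir-passive}.
\end{itemize}

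The precise claim to prove is therefore the one needed later: conditioned on the adversary's view, \(F_{e+1}\) is uniform over the set of polynomials in \(\mathcal P_k\) that are consistent with the revealed corrupted next-epoch shares, and this conditional distribution does not depend on \(F_e\) beyond what those shares reveal. The coset argument gives this, since \(\mathbf b_{j^\ast}\) remains uniform on its coset independently of \(\mathbf w\). The unconditional statement in the lemma is the special case with no corrupted observations. I expect the careful bookkeeping of this conditioning to be the delicate step; the algebra itself is routine.
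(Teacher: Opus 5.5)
Your main argument is the paper's argument written in coordinates. The paper notes that the honest dealer's zero-polynomial is uniform in the additive group $\mathcal Z_0$ of polynomials of degree $<t$ with zero constant term. Hence the aggregate $Z_e$ is uniform and independent of $F_e$, and translation by $F_e$ is a bijection $\mathcal Z_0\to\mathcal P_k$. Your one-time pad $\mathbf a^{(e+1)}=\mathbf b_{j^\ast}+\mathbf w$ on $\Zq^{t-1}$ is exactly this. (Uniformity given $\mathbf a^{(e)}$ does not follow ``a fortiori'' from uniformity given $\mathbf w$, because $\mathbf a^{(e)}$ is not a function of $\mathbf w$. Condition instead on the full tuple $(\mathbf a^{(e)},(\mathbf b_j)_{j\neq j^\ast})$, of which $\mathbf b_{j^\ast}$ is independent.) Your treatment of the corrupted zero-shares is sharper than the paper's, which only says that the $c<t$ evaluations of each honest zero-polynomial are marginally uniform over $\Zq^{c}$. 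Your coset argument gives the statement actually needed: conditioned on the view, $F_{e+1}$ is uniform over the members of $\mathcal P_k$ that agree with the revealed next-epoch shares, independently of $F_e$.

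The step that fails as written is the commitment hybrid. When $\mathcal C_e\neq\varnothing$, you cannot replace every $D_{j^\ast,\ell}$ by a commitment to $0$. A corrupted receiver $i\in\mathcal C_e$ holds $(\delta_{j^\ast,i},\nu_{j^\ast,i})$ and can evaluate Equation~\eqref{eq:zero-verify}. After your replacement the right-hand side commits to $0$, while the left-hand side commits to $\delta_{j^\ast,i}$, which is nonzero with probability $1-1/q$. By binding the check fails, so the hybrid is trivially distinguishable.

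The modified commitments must stay consistent with the $|\mathcal C_e|$ opened evaluations, so hiding can only be used in the $t-1-|\mathcal C_e|$ unopened directions. One way to do this:
\begin{itemize}
\item Parametrize $z_{j^\ast}$ by its values on $\mathcal C_e$ and on $t-1-|\mathcal C_e|$ auxiliary nonzero points, and commit to these point values with independent randomness.
\item Derive the $D_{j^\ast,\ell}$ from them through the inverse of the invertible evaluation matrix, using $\oplus$ and $\odot$. This reproduces the honest joint distribution of the $D_{j^\ast,\ell}$ and the openings $\nu_{j^\ast,i}$ when the randomness space is a $\Zq$-module, as the scalar-multiplication property presupposes.
\item Apply hiding only to the auxiliary-point commitments.
\end{itemize}
Alternatively, prove directly what the paper only asserts from statistical hiding: the commitment vector together with the $|\mathcal C_e|$ openings is statistically independent of $\mathbf b_{j^\ast}$ given its values on $\mathcal C_e$. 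With either repair, the rest of your proof goes through unchanged.
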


\begin{proof}
Let
\[
  \mathcal Z_0=\{z\in\Zq[X]:\deg(z)<t\text{ and }z(0)=0\}.
\]
An honest refresh dealer samples its non-constant coefficients uniformly,
so its polynomial is uniform in the additive group $\mathcal Z_0$.  The sum
of this uniform polynomial and any fixed or adversarially known collection
of other zero-polynomials is still uniform in $\mathcal Z_0$.  Therefore the
aggregate refresh polynomial $Z_e=\sum_j z_j$ is uniform in $\mathcal Z_0$
and independent of $F_e$.  Since translation by $F_e$ is a bijection from
$\mathcal Z_0$ to $\mathcal P_k$, the polynomial
$F_{e+1}=F_e+Z_e$ is uniform in $\mathcal P_k$.

A corrupted set of size $c<t$ receives only $c$ evaluations of each honest
zero-polynomial.  The same rank argument as in
Lemma~\ref{lem:shamir-passive}, now with the constant term fixed to zero,
shows that these evaluations are uniform over $\Zq^c$.  The broadcast
coefficient commitments reveal no additional information by statistical
hiding.  After activation, secure erasure removes $F_e$-shares and refresh
randomness, hence a later corruption cannot link the two independently
random sharings through erased state.
\end{proof}

\begin{lemma}[Privacy of committee resharing]
\label{lem:reshare-passive}
Suppose fewer than $t$ old servers and fewer than $t'$ new servers are
corrupted.  Protocol~5 gives the new committee a uniformly random
degree-$(t'-1)$ Shamir sharing of the same key $k$ and reveals no additional
information about the honest old shares.
\end{lemma}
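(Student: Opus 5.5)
We split the claim into two parts: the new sharing is uniform on the right set of polynomials, and the corrupted view is independent of the honest old shares. We argue with the aggregate polynomial $G=\sum_{i\in I}g_i$ from Lemma~\ref{lem:reshare-correct}, which already gives $\deg G\le t'-1$ and $G(0)=k$.

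\emph{Uniformity of $G$.} Write $G(X)=k+\sum_{\ell=1}^{t'-1}\bigl(\sum_{i\in I}c_{i,\ell}\bigr)X^\ell$. Since $|I|\ge t$ and fewer than $t$ old servers are corrupted, $I$ contains at least one honest dealer $i^\ast$. Its coefficients $c_{i^\ast,1},\ldots,c_{i^\ast,t'-1}$ are uniform and independent of everything else, including the adversarial or fixed coefficients of the other dealers and $F_e$. Repeating the argument of Lemma~\ref{lem:refresh-randomizes}, the nonconstant part of $G$ is then uniform in $\mathcal Z_0$ for threshold $t'$. So $G$ is uniform in $\mathcal P_k$ for degree $<t'$ and independent of $F_e$.

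\emph{View of the adversary.} The joint corrupted set sees four kinds of data.
\begin{enumerate}[label=(\alph*)]
\item The old shares $s_i^{(e)}$ for corrupted $i$. By Lemma~\ref{lem:shamir-passive}, these are fewer than $t$ values and uniform.
\item The broadcast commitments $B_{i,\ell}$ and proofs $\pi_i^{\mathsf{reshare}}$. These are replaced by commitments to $0$ via hiding and by simulated proofs via zero knowledge. This is a hybrid step costing negligible advantage.
\item For each honest dealer $i$, the evaluations $g_i(j)$ at the fewer than $t'$ corrupted new indices $j$.
\item The corrupted new servers' shares $G(j)$.
\end{enumerate}

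For (c), $g_i$ has hidden constant term $\lambda_i^I s_i^{(e)}$ and $t'-1$ uniform higher coefficients. Applying the Vandermonde rank argument of Lemma~\ref{lem:shamir-passive} to $g_i$, its $c'<t'$ evaluations are uniform over $\Zq^{c'}$ and independent of $g_i(0)$. Hence they reveal nothing about the honest old share $s_i^{(e)}$. The same applies to (d). The simulator can therefore sample all honest-dealer evaluations sent to corrupted new servers uniformly, together with simulated commitments and proofs, and this is perfectly distributed up to the hiding and ZK hybrids. Secure erasure of old-committee state after the new certificate is signed, as in $\F^{\mathsf{sh}}_{\mathsf{pTVOPRF}}$'s $\mathsf{Reshare}$, ensures later corruptions reveal only $G$-shares.

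\emph{Main obstacle.} The hard step is the dependence between (a), (c) and (d). A corrupted new server $j$ also knows corrupted dealers' $g_i(j)$, and so learns $\sum_{i\,\text{honest}}g_i(j)=G(j)-\sum_{i\,\text{corr}}g_i(j)$. We must check this does not combine with corrupted old shares to pin down honest old shares. The plan is to condition on the corrupted dealers' polynomials, which the adversary knows. The honest part $H(X)=\sum_{i\,\text{honest}}g_i(X)$ then has constant term $k-\sum_{i\,\text{corr}}\lambda_i^I s_i^{(e)}$ and uniform higher coefficients, so its fewer-than-$t'$ evaluations reveal nothing about that constant term. This includes no information about $k$ or about individual honest $\lambda_i^I s_i^{(e)}$. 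The one caveat is the case where a party is corrupted in both committees: we would require the combined count of distinct $F_e$-shares seen to stay below $t$, and invoke Lemma~\ref{lem:shamir-passive} once more.
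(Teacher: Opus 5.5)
Your route is the paper's. An honest dealer in $I$ (guaranteed by $|I|\ge t$) makes the higher coefficients of $G$ uniform. The Vandermonde argument applied to each honest $g_i$ at the $c'<t'$ corrupted new indices shows those evaluations are uniform and independent of $\lambda_i^I s_i^{(e)}$. Hiding and zero knowledge take care of the $B_{i,\ell}$ and $\pi_i^{\mathsf{reshare}}$.

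Your ``main obstacle'' needs no separate treatment. Honest dealers sample their coefficient vectors independently, so the whole block $\bigl((g_i(j))_{j\in C'}\bigr)_{i\ \text{honest}}$ is jointly uniform and independent of $F_e$. The partial sums and the $G(j)$ are functions of this block and of data the adversary already holds. Your caveat about parties in both committees is vacuous. A new share is a point of $G$, not of $F_e$, so the number of $F_e$-points seen is exactly the number of corrupted old servers.

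One step, however, fails as written. In step M3 each corrupted new server $S'_j$ also receives the opening randomness $\varrho_{i,j}$ and checks $\Com(g_i(j);\varrho_{i,j})=\EvalCom(\mathbf B_i,j)$. Your list (a)--(d) omits these openings. Suppose the simulator replaces $\mathbf B_i$ by commitments to $0$ while sampling $g_i(j)$ uniformly. Without an equivocation trapdoor it cannot produce valid $\varrho_{i,j}$, so the corrupted server's own check exposes the simulation. Moreover, hiding of individual commitments says nothing once linear combinations of them are opened. The paper's one-line appeal to statistical hiding skips the same point, but it does not commit to an inconsistent simulator.

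To repair it, let the simulator choose any $\tilde g_i$ of degree $<t'$ through the sampled values at $C'$, say with $\tilde g_i(0)=0$, commit to its coefficients, and hand out genuine openings. For indistinguishability, pick an invertible $M\in\Zq^{t'\times t'}$ whose first $c'$ rows are evaluation at the points of $C'$. By the homomorphism, commitments to the coordinates of $M$ times the coefficient vector are computable from $\mathbf B_i$, and conversely. The revealed openings open exactly the first $c'$ of these transformed commitments. When randomness is uniform in a $\Zq$-module (Pedersen-style), the other $t'-c'$ have randomness independent of everything revealed, so hiding applies to them alone. Perform the zero-knowledge hybrid for $\pi_i^{\mathsf{reshare}}$ first, since after the switch its statement about $B_{i,0}$ is false.

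Alternatively, follow the proof of Theorem~\ref{thm:semi-honest}, where $\Sc$ knows $F_e$ and runs honest dealers on their true constants. The lemma then becomes a purely distributional statement and no equivocation is needed.
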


\begin{proof}
For each old server $S_i\in I$, the resharing polynomial is
\[
  g_i(X)=\lambda_i^Is_i^{(e)}+
         \sum_{\ell=1}^{t'-1}c_{i,\ell}X^\ell.
\]
Consider an honest old server $S_i$ and a corrupted new-server set
$C'$ of size $c'<t'$.  By the Vandermonde-rank argument, the vector
$(g_i(j))_{j\in C'}$ is uniform over $\Zq^{c'}$ independently of the
constant term $\lambda_i^Is_i^{(e)}$.  Thus the individual messages sent by
an honest old dealer to corrupted new servers hide that dealer's old share.
The coefficient commitments are statistically hiding, and the reshare NIZK
is zero knowledge.

The aggregate polynomial $G(X)=\sum_{i\in I}g_i(X)$ has
\[
  G(0)=\sum_{i\in I}\lambda_i^Is_i^{(e)}=k.
\]
At least one old server in $I$ is honest because $|I|\ge t$ and fewer than
$t$ old servers are corrupted.  The higher coefficients contributed by that
honest dealer are uniform and adding the remaining dealers' coefficients leaves
the aggregate higher-coefficient vector uniform in $\Zq^{t'-1}$.  Hence
$G$ is a uniformly random element of the set of degree-$(t'-1)$ polynomials
with constant term $k$.  Erasure of the old committee's state prevents later
combination of obsolete and current local states.
\end{proof}

\begin{theorem}[Semi-honest security of \scheme]
\label{thm:semi-honest}
Assume that the commitment scheme is statistically hiding, the four NIZKs
are computationally zero knowledge, the isogeny action is free and
transitive, and honest servers securely erase obsolete state.  Then
\scheme\ securely realizes
$\F^{\mathsf{sh}}_{\mathsf{pTVOPRF}}$ against every PPT mobile
semi-honest adversary corrupting fewer than $t$ servers in each epoch and,
during committee migration, fewer than $t'$ new servers, with client
corruption fixed at the start of each evaluation session.

The simulation error is bounded by the sum of the zero-knowledge advantages
of the simulated NIZK proofs, the hiding advantages of any commitments that
are replaced in the simulation, and a negligible secure-erasure failure
probability.  In particular, simulation-extractability, signature
unforgeability, and complaint soundness are not needed for this passive
result.
\end{theorem}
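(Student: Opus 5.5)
The proof constructs a simulator \(\Sc\) and then moves from the real execution to the ideal one through a sequence of hybrids, organized epoch by epoch. Following the convention stated after the ideal functionality, \(\Sc\) jointly samples with \(\F^{\mathsf{sh}}_{\mathsf{pTVOPRF}}\) the key \(k\), the current sharing, and \(\pk=[k]E_0\). It never reveals \(k\), and it uses the sharing only to hand corrupted servers their true shares.

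\(\Sc\) simulates each sub-protocol as follows.

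\emph{DKG and refresh.} For each honest dealer, \(\Sc\) sends corrupted receivers uniformly random values on a polynomial. It then replaces the honest dealers' broadcast coefficient commitments by commitments to \(0\) and replaces link and reshare proofs by \(\NIZK.\Sim\) outputs on the true public statements. The link chain curves \(P_h\) are handled by choosing the honest contributions so that the final curve is \(\pk\). Since at least one dealer in \(\mathcal Q\) is honest, its intermediate curve can be set to a fresh uniform element of \(\E\), and freeness and transitivity make this exact.

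\emph{Evaluation with an honest client.} \(\Sc\) samples a uniform \(B\in\E\). By Lemma~\ref{lem:blind-perfect}, this is identically distributed to the real \(B\). Because \(\Sc\) knows the shares, it computes the chain \(Q_h=[\alpha_h]B\) honestly, and it simulates \(\pi^{\mathsf{blind}}\).

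\emph{Evaluation with a corrupted client.} \(\Sc\) learns \(x\) and \(r\) from the passive client and obtains \(Y=[k]X\) and \(F_k(x)\) from the functionality. It can also run the chain honestly using its shares, so no simulation gap arises beyond the proofs.

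\emph{Corruptions.} On corruption of a server, \(\Sc\) reveals the current share and local state. By epoch-respecting corruption and \(\F_{\mathsf{erase}}\), the erased state is never requested.

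The hybrid argument then proceeds in the following order.

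\begin{enumerate}[nosep,label=(\arabic*)]
  \item \(H_0\) is the real execution.
  \item In \(H_1\), every honest-party NIZK is replaced by a simulated proof. The distance is bounded by the sum of the zero-knowledge advantages, via a standard one-proof-at-a-time hybrid.
  \item In \(H_2\), honest commitments are replaced by commitments to \(0\). The distance is bounded by the hiding advantages; for a statistically hiding scheme it is statistical. This step is legitimate only after \(H_1\), because the simulated proofs no longer need valid openings.
  \item In \(H_3\), the honest parties' private messages to corrupted parties are replaced by values drawn from the ideal sharing. Lemma~\ref{lem:shamir-passive}, Lemma~\ref{lem:refresh-randomizes} and Lemma~\ref{lem:reshare-passive} show this is distributed identically to the real messages, given the joint view of fewer than \(t\) (respectively \(t'\)) corrupted shares per epoch.
  \item In \(H_4\), honest-client blinded inputs are replaced by uniform curves, which is exact by Lemma~\ref{lem:blind-perfect}.
  \item Finally, any residual erasure failure is charged as a negligible term.
\end{enumerate}

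Summing these terms gives the bound claimed in the theorem.

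The main obstacle is step (4) across epochs. I need the adversary's whole multi-epoch view to be consistent with the single ideal key \(k\), not just its view within one epoch. Concretely, corrupted shares from epochs \(e\) and \(e+1\), together with the refresh messages corrupted servers received, must be jointly distributed as in the ideal world. A server corrupted in both epochs sees \(s_i^{(e)}\), the increments \(\delta_{j,i}\), and \(s_i^{(e+1)}\), and these correlations must match.

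To handle this, I would have \(\Sc\) sample the ideal next sharing \(F_{e+1}\) first. It then defines the honest dealers' zero-share messages to corrupted servers as the differences \(F_{e+1}(i)-F_e(i)\) minus the corrupted dealers' contributions, which can be done because a single honest dealer suffices.

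Uniformity of these differences follows from Lemma~\ref{lem:refresh-randomizes}. Independence from the honest servers' erased shares follows from the erasure assumption. The epoch-respecting rule guarantees that the set of shares the adversary holds on any single polynomial always has size below \(t\). The same argument, with \(G\) in place of \(F_{e+1}\), covers resharing.
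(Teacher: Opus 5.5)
\paragraph{Gap in hybrid $H_2$.} Replacing the honest dealers' coefficient commitments by commitments to $0$ cannot be charged to hiding. The corrupted servers' own states contain openings of homomorphic combinations of exactly those commitments:
\begin{itemize}
\item each corrupted receiver $S_i$ holds $(u_{j,i},\varrho_{j,i})$ with $\Com(u_{j,i};\varrho_{j,i})=\EvalCom(\mathbf A_j,i)$;
\item it also holds the zero-share openings $(\delta_{j,i},\nu_{j,i})$ checked in Equation~\eqref{eq:zero-verify};
\item and it holds the aggregate opening $(s_i^{(e)},\omega_i^{(e)})$ of $\EvalCom(\mathbf A^{(e)},i)$, which it uses as the witness for its own $\mathcal R_{\mathsf{eval}}$ proofs.
\end{itemize}
Once $\mathbf A_j$ commits to $0$ while $u_{j,i}$ is a genuine share, these equations fail. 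A semi-honest corrupted receiver runs the check of Equation~\eqref{eq:vss-verification} and complains, and a corrupted chain server's witness no longer satisfies its relation. So $H_1$ and $H_2$ are trivially distinguishable.

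Restoring consistency would require equivocating the commitments, i.e.\ an equivocation trapdoor, which is not among the theorem's assumptions. Statistical hiding at best guarantees that such openings exist, and binding says they cannot be found efficiently. You justify the step by saying simulated proofs no longer need openings, but that overlooks that the openings also sit in the adversary's own view.

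There is a smaller error of the same kind in the link chain. Once $\pk$ is fixed, the curve of the last honest dealer $j_{h^*}$ cannot be a fresh uniform element. It is forced to equal $\bigl[-\sum_{u>h^*}a_{j_u,0}\bigr]\pk$, where the sum runs over the corrupted dealers that follow it.

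\paragraph{The repair.} Drop witness-free simulation altogether, which is what the paper does. Because $\Sc$ samples $k$ and the sharing jointly with the functionality, it can run every honest server with genuine witnesses.
\begin{itemize}
\item In the DKG, it samples the honest polynomials uniformly subject to $\sum_{j\in\mathsf{honest}}a_{j,0}=k-k_{\mathsf{cor}}$, which is exactly the real conditional distribution. It then publishes real commitments and VSS openings and produces real link proofs.
\item It executes refresh and resharing with fresh honest randomness and adopts the resulting $F_{e+1}$ or $G$ as the current sharing. This is a fresh uniform sharing of $k$ by Lemma~\ref{lem:refresh-randomizes} and Lemma~\ref{lem:reshare-passive}.
\item It computes evaluation chains and evaluation proofs honestly.
\end{itemize}
The only replaced objects are the honest client's $B$ (exact by Lemma~\ref{lem:blind-perfect}) and $\pi^{\mathsf{blind}}$ (by zero knowledge). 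The error bound then follows with no commitment term at all.

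This also dissolves what you called the main obstacle. Refresh is executed rather than back-solved from an independently sampled $F_{e+1}$, so the correlations among $s_i^{(e)}$, $\delta_{j,i}$ and $s_i^{(e+1)}$ in a corrupted server's view are the real ones by construction. Your back-solving would additionally have to produce honest zero-polynomial commitments consistent with the openings $\nu_{j,i}$. That runs into the same equivocation problem if those commitments are commitments to $0$.
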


\begin{proof}
We construct a simulator $\Sc$ that runs an internal copy of $\A$ and
simulates all messages sent by honest parties.  The simulator maintains a
uniform master key $k$, the current Shamir polynomial $F_e$, all current
shares and commitment openings, and the random-oracle tables.  This state is
private to $\Sc$ and is never released except to the extent prescribed by a
passive corruption.

\paragraph{Simulation of the DKG.}
For every corrupted dealer, $\Sc$ lets the internal adversary sample the
dealer's coefficients, commitment randomness, and VSS messages exactly as
in the real protocol.  Let $k_{\mathsf{cor}}$ be the sum of the corrupted
dealers' constant terms.  For the honest dealers, $\Sc$ samples their
polynomials uniformly subject only to
\[
  \sum_{j\in\mathsf{honest}}a_{j,0}=k-k_{\mathsf{cor}}.
\]
This is exactly the real conditional distribution.  Indeed, in a real DKG
all dealer constants are independent and uniform, and, conditioned on their
total being $k$ and on the corrupted constants, the honest constants are
uniform over the displayed affine hyperplane.  All non-constant coefficients
remain independent and uniform.

The simulator computes the honest-to-corrupted VSS shares and openings,
publishes honest coefficient commitments, and generates the link proofs and
signatures with the genuine witnesses it knows.  Since all parties are
semi-honest, no complaint occurs.  The aggregate polynomial has constant
term $k$, the public key is exactly $[k]E_0$, and every corrupted server's
share, received subshares, openings, random tape, and public transcript have
the same distribution as in the real DKG.  Honest dealers erase their DKG
polynomials and distribution randomness after the epoch-zero certificate is
finalized, later corruptions therefore reveal only the retained current
share and opening.

\paragraph{Evaluation with an honest client.}
The ideal functionality does not reveal the honest input $x$ to $\Sc$.
The simulator samples $\widehat B\leftarrow\E$ uniformly and produces a
simulated proof
$\widehat\pi^{\mathsf{blind}}$ for the public statement
$(\ctx,\sid,\widehat B)$.  By Lemma~\ref{lem:blind-perfect},
$\widehat B$ has exactly the same distribution as
$[r]\Hone(\ctx\|x)$, and zero knowledge makes the simulated proof
indistinguishable from the real one.

Starting with $\widehat Q_0=\widehat B$, the simulator uses the maintained
shares to compute
\[
  \widehat Q_h=[\lambda_{i_h}^Is_{i_h}^{(e)}]\widehat Q_{h-1}
\]
for every honest and corrupted position in the chain.  It creates genuine
evaluation proofs and signatures because it knows all corresponding shares
and openings.  The resulting chain is distributed exactly as a real chain
conditioned on its first curve.  The ideal functionality returns $F_k(x)$
to the honest client.  The server adversary learns neither $x$ nor the
client's blinding scalar, its complete view is therefore indistinguishable
from real by Lemma~\ref{lem:blind-perfect}.

Notice that no assumption that the simulator learns the honest client's
input is required.  For the actual hidden $X=\Hone(\ctx\|x)$, transitivity
implies that there exists a unique scalar $r$ satisfying
$\widehat B=[r]X$.  Thus the simulated curve chain is algebraically
consistent with some correctly distributed client randomness even though
$\Sc$ does not know that scalar.

\paragraph{Evaluation with a corrupted client.}
A semi-honest corrupted client exposes its input $x$ and randomness $r$ to
the internal adversary and forms the prescribed request.  The simulator
checks the request only as an honest server would, computes all honest-server
partial evaluations from the maintained shares, and produces genuine
proofs and signatures.  At the end,
\[
  Q_t=[k+r]\Hone(\ctx\|x),\qquad
  Y=[-r]Q_t=[k]\Hone(\ctx\|x),
\]
so the corrupted client's complete internal state and output are identical
to those in a real execution.  This argument also covers sessions in which
some servers and the client are corrupted simultaneously.

\paragraph{Simulation of proactive refresh.}
For corrupted refresh dealers, $\Sc$ lets $\A$ generate the prescribed
uniform zero-polynomials.  It samples every honest dealer's zero-polynomial
and commitment randomness exactly as in Protocol~3, sends the corresponding
zero-shares to corrupted receivers, and simulates the public PREPARE,
COMMIT, and ACTIVATE messages.  Because the adversary is passive, all checks
succeed and every honest party signs the same certificate.

The simulator updates
$F_{e+1}=F_e+\sum_jz_j$ and the associated openings.  By
Lemma~\ref{lem:refresh-randomizes}, the resulting sharing is a fresh uniform
sharing of the same key and is independent of the old sharing from the
adversary's perspective.  At activation, $\Sc$ deletes the old shares,
received zero-shares, and refresh randomness.  A corruption in epoch $e+1$
therefore reveals exactly the state prescribed by the ideal functionality
and no erased epoch-$e$ value.

\paragraph{Simulation of committee resharing.}
The simulator executes every corrupted old dealer according to its real
random tape and samples the honest resharing polynomials exactly as in
Protocol~5.  It generates all commitments, private evaluations, proofs, and
signatures, and computes the aggregate new polynomial $G$.  By
Lemma~\ref{lem:reshare-passive}, $G$ is a fresh, uniformly random
threshold-$t'$ sharing of $k$, and the messages received by fewer than $t'$
corrupted new servers hide the constants of the honest old dealers.  After
the new certificate is accepted, $\Sc$ erases the old committee's local
shares, matching the ideal execution.

\paragraph{Adaptive passive corruptions.}
When $\A$ corrupts server $S_i$ in epoch $e$, $\Sc$ reveals the already
sampled current share $s_i^{(e)}$, its opening, current certificate, and all
non-erased local data.  These values were generated before the corruption
with the correct real distribution, so no equivocation is required.  The
epoch-respecting corruption rule and secure erasure guarantee that obsolete
shares and refresh randomness are absent.  Lemma~\ref{lem:shamir-passive}
and Lemma~\ref{lem:refresh-randomizes} show that the collection of states
obtained in different epochs cannot be combined into $t$ points on a single
sharing polynomial.

All simulated values are therefore identically distributed to real values,
except for honest-client blinding proofs (and any optionally simulated
proofs or commitments), whose replacement is indistinguishable by NIZK zero
knowledge (and commitment hiding).  A standard hybrid replacing those
objects one at a time gives
\[
 \left|\Pr[\Real_{\A,\scheme}=1]-
       \Pr[\Ideal_{\Sc,\F^{\mathsf{sh}}_{\mathsf{pTVOPRF}}}=1]\right|
 \le q_{\mathsf{zk}}\Adv_{\mathsf{ZK}}(\lambda)
    +q_{\mathsf{com}}\Adv_{\mathsf{hide}}(\lambda)
    +\negl(\lambda),
\]
where $q_{\mathsf{zk}}$ and $q_{\mathsf{com}}$ are polynomial bounds on
the number of replaced proofs and commitments.  This quantity is negligible,
which completes the simulation proof.
\end{proof}

The realization theorem establishes that passive protocol transcripts reveal
no more than the ideal leakage.  The following corollaries state the main
cryptographic consequences explicitly.

\begin{corollary}[Client-input privacy]
For any two equal-length inputs $x_0,x_1$, any PPT coalition of fewer than
$t$ semi-honest servers has negligible advantage in distinguishing an
evaluation of $x_0$ from an evaluation of $x_1$.
\end{corollary}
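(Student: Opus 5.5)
The plan is to derive the corollary directly from the perfect blinding lemma (Lemma~\ref{lem:blind-perfect}), through a short hybrid argument. We fix a coalition $\mathcal C_e$ with $|\mathcal C_e|<t$ in the epoch where the evaluation takes place. Its view of one evaluation by an honest client consists of four parts: the request $(\sid,e,I,H(\cert_e),B,\pi^{\mathsf{blind}})$, the curve chain $(Q_0,\ldots,Q_t)$, the evaluation proofs and signatures, and the corrupted servers' own states (shares, openings, signing keys). Session metadata is public leakage and is independent of $x$, because the honest client picks $\sid$ and $\mathsf{SelectQuorum}$ depends only on $(\sid,e,\ctx)$.

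The hybrids run as follows.
\begin{enumerate}[nosep]
\item $\mathsf{H}_0$ is the real evaluation of $x_0$.
\item $\mathsf{H}_1$ replaces the honest client's $\pi^{\mathsf{blind}}$ with $\NIZK.\Sim(\ctx,\sid,B)$. This step costs $\Adv_{\mathsf{ZK}}$.
\item $\mathsf{H}_2$ replaces $B=[r]\Hone(\ctx\|x_0)$ with a uniform $\widehat B\leftarrow\E$.
\item $\mathsf{H}_3$ and $\mathsf{H}_4$ are the mirror images of $\mathsf{H}_2$ and $\mathsf{H}_1$ for $x_1$.
\end{enumerate}

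For $\mathsf{H}_1\to\mathsf{H}_2$, the key point is that the free and transitive action with $|\E|=q$ makes $r\mapsto[r]X$ a bijection, so $B$ is exactly uniform for every $X$. The whole remaining view is a deterministic or randomized function of $B$ together with server-side data that does not involve $x$:
\begin{itemize}[nosep]
\item the chain is $Q_h=[\alpha_h]B$, where the $\alpha_h$ depend only on the shares;
\item the evaluation proofs and signatures are computed from $(\mathbf A^{(e)},i_h,Q_{h-1},Q_h,\lambda^I_{i_h})$ and the server witnesses;
\item the coalition's states are independent of $x$.
\end{itemize}
So $\mathsf{H}_1\equiv\mathsf{H}_2$ holds perfectly, by the joint-distribution statement of Lemma~\ref{lem:blind-perfect}. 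The hybrids $\mathsf{H}_2$ and $\mathsf{H}_3$ are literally the same experiment. Summing gives a distinguishing advantage of at most $2\Adv_{\mathsf{ZK}}(\lambda)$, which is negligible.

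The step that needs care is whether the coalition can use its \emph{own} shares to link the view to $x$. Fewer than $t$ shares do not determine $k$ (Lemma~\ref{lem:shamir-passive}), but the coalition does know $\pk=[k]E_0$. It could try to test a candidate $x_b$ by checking a relation between $Q_t$ and $\Hone(\ctx\|x_b)$. Such a test, however, requires knowing $r$ or vectorizing. The hybrid argument avoids this issue altogether: once $B$ is independent uniform, every test fails to distinguish, whatever the coalition computes. I would state this point explicitly.

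Two further points need handling. The coalition may also observe later sessions, or hold outputs derived from $F_k$; I would handle this by invoking Theorem~\ref{thm:semi-honest}, since the ideal functionality leaks only metadata to the servers, so the ideal views for $x_0$ and $x_1$ coincide. If the coalition also corrupts the client, the corollary does not apply; it concerns an honest client only. Finally, for sequences across epochs, the mobile bound and erasure (Lemma~\ref{lem:refresh-randomizes}) ensure that the per-epoch argument composes.
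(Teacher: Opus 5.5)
Your proposal is correct and follows the paper's own argument: the curve transcript is identically distributed by Lemma~\ref{lem:blind-perfect}, and the zero-knowledge replacement of $\pi^{\mathsf{blind}}$ is the only source of a negligible distinguishing term, which your hybrids $\mathsf{H}_0$--$\mathsf{H}_4$ simply make explicit. Your closing appeal to the mobile bound and Lemma~\ref{lem:refresh-randomizes} for multi-session composition is harmless but unnecessary, since each session's $B$ is fresh and uniform regardless of how many shares the coalition holds.
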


\begin{proof}
The curve transcript is identically distributed by
Lemma~\ref{lem:blind-perfect}, only the zero-knowledge replacement of the
blinding proof contributes a negligible distinguishing term.
\end{proof}

\begin{corollary}[Master-key privacy]
A coalition of fewer than $t$ semi-honest servers learns no information
about $k$ from its shares.  Recovering $k$ from the public key or from the
public evaluation pairs is no easier than solving vectorization in the
underlying group action.
\end{corollary}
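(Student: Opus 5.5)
The corollary has two parts, and I would prove them separately.

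\emph{Information-theoretic part.} Fix an epoch $e$ and a coalition $C$ with $|C|<t$. The coalition's local view consists of three things: its shares $\{s_i^{(e)}\}_{i\in C}$ with their openings, the public commitment vector $\mathbf A^{(e)}$, and the transcripts it took part in.

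\begin{itemize}
\item \textbf{Shares.} By Lemma~\ref{lem:shamir-passive}, the share vector is uniform over $\Zq^{|C|}$ and independent of $k$. The openings $\omega_i^{(e)}$ depend only on the commitment randomness, which is sampled independently of the coefficients, so they carry no information about $k$ either.
\item \textbf{Public commitments.} Statistical hiding of $\Com$ shows that $\mathbf A^{(e)}$ adds at most negligible information about $k$.
\item \textbf{Transcripts.} The argument of Theorem~\ref{thm:semi-honest} shows that the simulator can produce the coalition's view given only its shares and the public values. The semi-honest realization theorem therefore reduces the first part of the statement to the ideal world.
\item \textbf{Across epochs.} Lemma~\ref{lem:refresh-randomizes} and Lemma~\ref{lem:reshare-passive} give fresh independent sharings. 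Hence fewer than $t$ points of each sharing are independent of $k$, even when views from several epochs are combined.
\end{itemize}

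\emph{Computational part.} Here I would interpret ``no easier than vectorization'' as a reduction. Given an adversary $\A$ that recovers $k$ from $\pk$ together with public evaluation pairs $(x,F_k(x))$ and the curves $[k]\Hone(\ctx\|x)$, I build a vectorization solver $\mathcal B$.

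\begin{itemize}
\item \textbf{Embedding.} $\mathcal B$ receives a challenge $(E_0,E'=[k]E_0)$ and sets $\pk=E'$.
\item \textbf{Programming $\Hone$.} For each query $x$, $\mathcal B$ samples $u_x\leftarrow\Zq$ and answers $\Hone(\ctx\|x)=[u_x]E_0$. This is uniform in $\E$ because the action is free and transitive.
\item \textbf{Simulating evaluations.} By commutativity, $[u_x]E'=[k]([u_x]E_0)=[k]\Hone(\ctx\|x)$. So $\mathcal B$ can compute every $Y$ without knowing $k$, and it then programs $\Htwo$ to obtain the outputs.
\item \textbf{Shares and transcripts.} The fewer than $t$ shares seen by the coalition are simulated as uniform values by Lemma~\ref{lem:shamir-passive}. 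The remaining transcript is simulated as in Theorem~\ref{thm:semi-honest}, using simulated NIZK proofs and hiding commitments in place of anything that would require $k$. In particular, the link proofs and $A_0^{(0)}$ are handled by ZK simulation and hiding.
\item \textbf{Extraction.} Whatever $\A$ outputs as $k$ is output by $\mathcal B$ as the answer to the vectorization challenge.
\end{itemize}

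The success probability of $\mathcal B$ equals that of $\A$, up to the ZK, hiding and negligible terms of Theorem~\ref{thm:semi-honest}.

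\emph{Main obstacle.} The hardest step is making the chain simulation consistent without $k$. The evaluation chain exposes the intermediate curves $Q_h=[\alpha_h]B$. The partial sums $\alpha_h$ involve honest shares, and those shares must be consistent with the coalition's shares and with $k$.

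I would handle this as follows:
\begin{enumerate}
\item Choose the honest shares implicitly as affine functions of $k$. This is possible because fewer than $t$ shares are fixed, and all $t-1$ free coefficients can be sampled except for the dependence on $k$.
\item Compute each intermediate curve as $[a]E'\!$-style combinations, i.e.\ $[c_h]([k]B)$ with $B=[r]E_0$ chosen by $\mathcal B$. Every needed curve is then computable from $E'$ via commutativity.
\end{enumerate}

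Verifying that every intermediate curve is expressible in this form is where the detailed checking lies.
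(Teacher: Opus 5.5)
The parts of your proposal that the corollary actually needs coincide with the paper's proof. These are Lemma~\ref{lem:shamir-passive} for the share vector, hiding and zero knowledge for commitments and proofs, and a random self-reduction that embeds the challenge as $\pk=E'$ and produces each evaluation pair as $([u]E_0,[u]E')$. You route this through programming $\Hone$; the paper takes $B=[u]E_0$ directly.

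The genuine gap is the extension you flag as the main obstacle: folding the coalition's shares and the full evaluation chain into the same vectorization reduction. Your proposed resolution does not work. From $E_0$ and $E'=[k]E_0$ a group action only yields curves $[c]E_0$ and $[c]E'=[c+k]E_0$ for known $c$. Your $[c_h]([k]B)$ is $[c_h+k]B$, an additive shift, not a multiple of $k$ in the exponent. Making the honest shares affine in $k$ gives $\alpha_h=b_h+\gamma_hk$, with $\gamma_h$ generically outside $\{0,1\}$.

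Concretely, take $n=4$, $t=3$, coalition $C=\{3,4\}$ and quorum $I=(1,2,3)$. Since $s_3,s_4$ must be handed to the adversary, $F_e$ is pinned down by $k,s_3,s_4$ at the nodes $0,3,4$. Hence $s_1=\tfrac12k+\beta_1$ and $s_2=\tfrac16k+\beta_2$ with $\beta_1,\beta_2$ known. With $\lambda_1^I=3$, $\lambda_2^I=-3$, $\lambda_3^I=1$, the corrupted server $S_3$ receives $Q_1=[\tfrac32k+3\beta_1]B$ and $Q_2=[k+3\beta_1-3\beta_2]B$. Whatever $B$ you build from $E_0$ and $E'$, the $k$-coefficients of $B$ and $Q_1$ differ by $\tfrac32$, so they cannot both lie in $\{0,1\}$. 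Equivalently, $(Q_2,Q_1)$ is a pair with exponent $\tfrac12k+3\beta_2$. Producing it from $E'$ is a halving problem in the exponent, of squaring/parallelization type, and a vectorization challenge does not let you solve it.

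So the ``detailed checking'' you defer is exactly where the reduction breaks. The coalition's full view contains intermediate pairs $(Q_{h-1},[\lambda_{i_h}^Is_{i_h}^{(e)}]Q_{h-1})$ whose exponents are linear in $k$ with non-unit coefficients. A hardness claim for that view would need a stronger related-exponent assumption or a generic-group-action argument, not plain vectorization. The fix is to do what the paper does: restrict the information-theoretic claim to the shares, and the computational claim to $\pk$ and the pairs $(B,[k]B)$. That is all the statement asserts.

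Relatedly, your ``Transcripts'' bullet misreads Theorem~\ref{thm:semi-honest}. That simulator holds $k$ and every share and uses them to compute the honest positions of the chain. It does not generate the coalition's view from its own shares and public values. Moreover, the view contains $\pk$ and $(B,[k]B)$, which determine $k$ information-theoretically. So the ``no information about $k$'' claim can only concern the share vector.
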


\begin{proof}
Information-theoretic secrecy of the share vector follows from
Lemma~\ref{lem:shamir-passive}.  Statistical hiding protects the committed
coefficients, while zero knowledge protects the witnesses in the link,
evaluation, and resharing proofs.  The public key is the vectorization
instance $(E_0,[k]E_0)$.  Moreover, a complete evaluation exposes a pair
$(B,[k]B)$.  Polynomially many such random-base pairs do not make the problem
easier under random self-reducibility, from a challenge $(E_0,[k]E_0)$, a
reduction chooses $u\leftarrow\Zq$ and forms
$B=[u]E_0$ and $[u]([k]E_0)=[k]B$.  Therefore an algorithm recovering $k$
from the public transcript yields an algorithm for vectorization.
\end{proof}

\paragraph{Required one-more assumption.}
A corrupted client learns the unblinded group-action value
\[
  Y_x=[k]\Hone(\ctx\|x)
\]
before hashing it.  Hence a one-more assumption whose oracle returns only
$F_k(x)$ does not fully model the real client view.  For a fresh-output claim,
we use the following stronger and protocol-faithful game.

\begin{definition}[One-more hidden-group-action game]
\label{def:om-hga}
The challenger samples $k\leftarrow\Zq$, publishes
$\pk=[k]E_0$, and gives the adversary oracle access to
\[
  \mathcal O_k^{\mathsf{act}}(x)
  =[k]\Hone(\ctx\|x).
\]
After at most $q$ queries, the adversary outputs $(x^*,Y^*)$ and wins if
$x^*$ was not queried and
$Y^*=[k]\Hone(\ctx\|x^*)$.  The
$\mathsf{OM\text{-}HGA}$ assumption states that every PPT adversary wins
with negligible probability.
\end{definition}

\begin{corollary}[Fresh-output pseudorandomness]
\label{cor:fresh-output}
In the random-oracle model for $\Htwo$, a semi-honest client making at most
$q$ evaluations cannot compute $F_k(x^*)$ for a fresh input $x^*$ except
with probability
\[
  \Adv_{\mathsf{OM\text{-}HGA}}(\lambda)
  +\frac{q_{\Htwo}}{2^{\ell}},
\]
where $q_{\Htwo}$ is the number of its $\Htwo$ queries.
\end{corollary}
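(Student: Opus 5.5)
The plan is a direct reduction to the $\mathsf{OM\text{-}HGA}$ game of Definition~\ref{def:om-hga}, with $\Htwo$ programmed as a lazily sampled random oracle. Let $\A$ be a semi-honest client that takes part in at most $q$ evaluation sessions and then outputs a pair $(x^*,y^*)$. We build a reduction $\mathcal B$ that receives $\pk=[k]E_0$ and oracle access to $\mathcal O^{\mathsf{act}}_k$. The reduction runs $\A$ and simulates the client's entire view, with the evaluation view supplied by the corrupted-client simulation in the proof of Theorem~\ref{thm:semi-honest}.

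\emph{Step 1 (simulating evaluations).} For each session with input $x$, the honest client's request fixes $x$ and $r$. $\mathcal B$ queries $Y_x=\mathcal O^{\mathsf{act}}_k(x)$ and sets $Q_t=[r]Y_x=[k]B$. This is exactly the final curve the client sees.

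The intermediate curves $Q_1,\ldots,Q_{t-1}$ and the server proofs are the delicate part, because $\mathcal B$ does not know the shares. Here $\mathcal B$ samples the honest servers' partial exponents consistently with the corrupted servers' shares. It fixes those shares using Lemma~\ref{lem:shamir-passive}; fewer than $t$ of them are uniform and independent of $k$. Each intermediate curve is then obtained by acting on $B$ or on $Q_t$, so that the honest partial actions absorb the unknown key. The evaluation proofs are replaced by $\NIZK.\Sim$ proofs. Coefficient commitments are replaced by commitments to $0$, which is justified by statistical hiding, and the link proofs in the DKG are simulated as well. These replacements cost the $\Adv_{\mathsf{ZK}}$ and $\Adv_{\mathsf{hide}}$ terms; for a semi-honest client these are absorbed in the simulation error of Theorem~\ref{thm:semi-honest}. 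I expect this step to be the main obstacle, since placing the unknown-$k$ action consistently in a chain whose intermediate values mix known corrupted shares with unknown honest ones requires care.

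\emph{Step 2 (programming $\Htwo$).} $\mathcal B$ answers every $\Htwo$ query of $\A$ lazily and records it. In evaluation sessions the client computes $y=\Htwo(\mathsf{PIVOT-out}\|\ctx\|\pk\|x\|\mathsf{enc}(Y_x))$ itself, so these answers are ordinary oracle answers and stay consistent.

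\emph{Step 3 (extraction and case split).} Let $x^*$ be fresh, i.e.\ never evaluated, and let $W$ be the event that $\A$ queried $\Htwo$ on $\mathsf{PIVOT-out}\|\ctx\|\pk\|x^*\|\mathsf{enc}(Y^*)$ with $Y^*=[k]\Hone(\ctx\|x^*)$.

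If $W$ occurs, $\mathcal B$ scans the $\Htwo$ table for queries of this shape with input $x^*$. Because $\mathsf{enc}$ is canonical and injective, each candidate can be decoded to a curve. $\mathcal B$ cannot test which candidate is correct, so it either guesses one uniformly or outputs all of them if a list variant of the game is allowed. A cleaner route is to note that $\A$'s final output $y^*$ pins down the correct entry: $\mathcal B$ outputs the candidate whose table answer equals $y^*$. Collisions among table answers cost at most $q_{\Htwo}^2/2^{\ell}$, which is either absorbed or reduces to $q_{\Htwo}/2^{\ell}$ once $\mathcal B$ filters by $x^*$. Since $x^*$ was never sent to $\mathcal O^{\mathsf{act}}_k$, this yields an $\mathsf{OM\text{-}HGA}$ win, so $\Pr[W\wedge\text{success}]\le\Adv_{\mathsf{OM\text{-}HGA}}(\lambda)$.

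If $W$ does not occur, then $F_k(x^*)$ is a uniformly random $\ell$-bit string independent of $\A$'s view, so any guess is correct with probability $2^{-\ell}$. Bounding this case by the union over at most $q_{\Htwo}$ queried points gives $q_{\Htwo}/2^{\ell}$. Adding the two cases gives the stated bound.
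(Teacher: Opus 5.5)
\textbf{Steps 2 and 3 match the paper; Step 1 is a genuine gap.} Steps 2 and 3 are essentially the paper's whole proof. It splits on whether $\Htwo$ was queried at the string containing $Y^*=[k]\Hone(\ctx\|x^*)$. The no-query case is bounded by a $2^{-\ell}$ guess, and in the other case the paper ``extracts that query'' as an $\mathsf{OM\text{-}HGA}$ win. Selecting the candidate whose table answer equals $y^*$ is a reasonable way to make that extraction precise. The $q_{\Htwo}/2^{\ell}$ term then pays for wrong candidates that also match $y^*$; the no-query case costs only $2^{-\ell}$, not a union bound.

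The gap is Step~1: the simulation you sketch does not work. A reduction that does not know $k$ can fix at most $t-1$ evaluations of $F_e$, say on a set $T$. Every other share then has the form $s_j=\mu_j k+d_j$ with $\mu_j\neq 0$ and $d_j$ known. These choices must be the same in every session of the epoch, because the real chain uses one polynomial $F_e$ throughout.

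Your forward-from-$B$/backward-from-$Q_t$ trick handles a quorum with exactly one position outside $T$. Now suppose a quorum $I$ contains two positions $p<p'$ outside $T$. The coefficients $\lambda_p^I\mu_p$ and $\lambda_{p'}^I\mu_{p'}$ of $k$ are both nonzero and sum to $1$. Hence every $Q_h$ with $p\le h<p'$ equals $[ck+d]B$ with $c\notin\{0,1\}$. Neither $[ck+d]B$ nor $[(c-1)k+d]Q_t$ can be obtained from oracle answers $[k]\Hone(\cdot)$. A group action lets you add known scalars to the hidden exponent but never multiply it by a public constant. Programming $\Hone$ does not help either, since $[ck]E_0$ is just as unavailable from $\pk$.

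Such quorums arise with noticeable probability from $\mathsf{SelectQuorum}$ whenever $n>t\geq 2$. Moreover, when only the client is corrupted, all $t$ quorum members are honest, so there are no corrupted shares to lean on.

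\textbf{The paper does not supply the missing idea either.} Its proof, and the remark preceding Definition~\ref{def:om-hga}, treat $Y_x$ as the client's only extra information. In fact a semi-honest client also receives $Q_1,\ldots,Q_{t-1}$. Since it knows $r$, it learns $[\alpha_h]\Hone(\ctx\|x)$ for the partial Lagrange sums $\alpha_h$. To close the gap you need one of two things:
\begin{itemize}
\item an assumption whose oracle returns the whole chain, i.e.\ a one-more game in which the challenger holds a Shamir sharing of $k$ and answers a query $(x,I)$ with all partial actions; or
\item an additional decisional group-action assumption that lets you replace $Q_1,\ldots,Q_{t-1}$ by independent uniform curves, together with a careful treatment of the simulated evaluation proofs.
\end{itemize}
In the second case that advantage must be added to the bound. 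So must the $\Adv_{\mathsf{ZK}}$ and $\Adv_{\mathsf{hide}}$ terms you propose to ``absorb''; the stated bound contains none of them.
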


\begin{proof}
Suppose an adversary outputs $y^*=F_k(x^*)$ for a fresh $x^*$.  If it never
queries $\Htwo$ at
\[
  \textsf{PIVOT-out}\|\ctx\|\pk\|x^*\|
  \mathsf{enc}([k]\Hone(\ctx\|x^*)),
\]
then $y^*$ is an independent $\ell$-bit random value and is guessed with
probability at most $2^{-\ell}$ per relevant attempt.  Otherwise, the
correct random-oracle query contains
$Y^*=[k]\Hone(\ctx\|x^*)$, extracting that query gives a successful
$\mathsf{OM\text{-}HGA}$ adversary.  A union bound over the
$q_{\Htwo}$ oracle queries gives the stated bound.
\end{proof}

\begin{remark}[Assumptions needed only for malicious security]
In the semi-honest model, parties never produce malformed NIZKs, equivocate
commitments, forge signatures, submit invalid refresh contributions, or
abort strategically.  Consequently, simulation-extractability, commitment
binding, EUF-CMA unforgeability, blame soundness, and the robust-restart
argument are not used in Theorem~\ref{thm:semi-honest}.  They should be
retained for the malicious-security theorem, but invoking them in the
passive proof obscures which assumptions protect privacy and which protect
active correctness.
\end{remark}

\section{Efficiency, Applications, and Conclusion}
\label{sec:effi}

This section examines the communication and computational costs of
\scheme{} at the level of its individual sub-protocols. The purpose of
the analysis is not to claim that the construction is as lightweight as
a conventional single-server OPRF.  Such a comparison would overlook
the additional functionality provided by the protocol.  In addition to
oblivious evaluation, \scheme{} provides dealerless threshold key
generation, public certification of epoch shares, proactive protection
against a mobile adversary, identifiable failures, secure epoch
transitions, and migration to a new committee without changing the
master key.  These properties necessarily introduce communication and
computation that are absent from protocols designed for a shorter-lived
or less demanding security model.

We separate one-time setup costs, per-evaluation online costs, periodic
maintenance costs, and occasional committee-migration costs.  This
distinction is important in practice.  Distributed key generation is
normally executed only once, proactive refresh is performed once per
epoch, and committee resharing is expected to be comparatively rare.
Only threshold evaluation lies on the critical path of every client
request.  Consequently, a high setup or refresh cost may be acceptable
when it is amortized over a large number of evaluations, whereas the
sequential cost of evaluation directly affects client-visible latency.

\paragraph{Cost notation.}
Let \(n\) denote the size of the current committee, \(t\) its threshold,
\(q_{\mathsf{DKG}}=|\mathcal Q|\) the number of qualified DKG dealers,
and \(r_e=|\mathcal R_e|\) the number of qualified refresh dealers in
epoch \(e\).  For committee resharing, let \(m=|I|\) be the size of the
old reconstruction quorum, \(n'\) the number of new servers, and \(t'\)
the new threshold.  In a normal all-honest execution,
\(q_{\mathsf{DKG}}\approx n\), \(r_e\approx n\), and the old resharing
quorum is usually chosen with \(m=t\).

We denote the encoded sizes of a scalar, commitment opening randomness,
commitment, curve, signature, certificate, and relation-specific NIZK
proof by
\[
  \ell_{\mathbb Z},\quad
  \ell_{\mathsf R},\quad
  \ell_{\mathsf C},\quad
  \ell_{\mathsf E},\quad
  \ell_{\mathsf{sig}},\quad
  \ell_{\mathsf{cert}},\quad
  \ell_{\pi}^{\mathsf{link}},
  \ell_{\pi}^{\mathsf{eval}},
  \ell_{\pi}^{\mathsf{blind}},
  \ell_{\pi}^{\mathsf{reshare}},
\]
respectively.  A privately delivered VSS value and its opening
randomness have size
\[
  \ell_{\mathsf{sh}}
  =
  \ell_{\mathbb Z}+\ell_{\mathsf R}.
\]
Small metadata fields, such as protocol tags, indices, counters,
quorum descriptions, and fixed-length hashes, are suppressed in the
asymptotic expressions but must be included in a concrete
implementation.

For computation, let \(T_{\mathsf{act}}\) be the cost of one isogeny
group action, \(T_{\mathsf{com}}\) the cost of generating one
commitment, and \(T_{\mathsf{ec}}(d)\) the cost of evaluating a
coefficient-commitment vector of length \(d\).  We write
\(T_{\mathsf P}^{\mathcal R}\) and
\(T_{\mathsf V}^{\mathcal R}\) for proof generation and verification
for relation \(\mathcal R\), and
\(T_{\mathsf S}\) and \(T_{\mathsf{SV}}\) for signature generation and
verification.  The NIZK costs are left symbolic because they depend
strongly on the eventual proof-system instantiation.  In particular, it
would be misleading to convert every proof directly into a fixed number
of group actions before a concrete proof system for the joint
commitment--isogeny relations has been specified.

\subsection{Communication cost}
\label{sec:communication-cost}

Communication can be measured in two different ways.  The
\emph{logical transcript size} counts each broadcast object once and is
useful for describing the public protocol transcript.  The
\emph{aggregate network traffic} counts the number of point-to-point
deliveries required to disseminate those objects.  Under an ideal
broadcast functionality, these two views are often conflated.  In an
actual network, however, broadcasting one object to \(n-1\) recipients
may require \(n-1\) deliveries or an equivalent multicast service.  We
therefore discuss both views in cases where the distinction changes the asymptotic
cost.

\paragraph{Dealerless distributed key generation.}
During the DKG, each of the \(n\) servers publishes \(t\) commitments to
the coefficients of its polynomial.  The public commitment transcript
therefore contains
\[
  nt\,\ell_{\mathsf C}
\]
bits.  Each dealer also privately sends one polynomial evaluation and
the corresponding opening randomness to every other server.  Ignoring
the dealer's local self-share, this contributes
\[
  n(n-1)\ell_{\mathsf{sh}}
\]
bits of point-to-point communication.

After the VSS qualification phase, every qualified dealer contributes
one curve and one link proof to the public-key chain.  This adds
\[
  q_{\mathsf{DKG}}
  \bigl(
    \ell_{\mathsf E}
    +
    \ell_{\pi}^{\mathsf{link}}
  \bigr)
\]
bits to the logical transcript.  Finally, the epoch-zero certificate and
its server signatures contribute approximately
\[
  \ell_{\mathsf{cert}}
  +
  n\ell_{\mathsf{sig}}.
\]
Thus, in a complaint-free execution, the logical DKG payload is
\begin{align}
  \mathsf{Comm}_{\mathsf{DKG}}
  &=
  nt\,\ell_{\mathsf C}
  +
  n(n-1)\ell_{\mathsf{sh}}
  +
  q_{\mathsf{DKG}}
  \bigl(
    \ell_{\mathsf E}
    +
    \ell_{\pi}^{\mathsf{link}}
  \bigr)
  +
  n\ell_{\mathsf{sig}}
  +
  \ell_{\mathsf{cert}}.
  \label{eq:comm-dkg}
\end{align}
The private-share term is quadratic in \(n\), as is usual for
dealerless VSS in which every server acts as a dealer.  If broadcast is
implemented by independent delivery to all recipients, the
coefficient-commitment and link-proof terms also acquire an additional
factor of approximately \(n\) in aggregate network traffic.

The DKG is not constant-round in the current specification.  Polynomial
commitment, share delivery, and complaint resolution require a small
number of VSS phases, but the public-key link chain contains
\(q_{\mathsf{DKG}}\) sequential actions because the next curve depends
on the preceding curve.  Consequently, the latency of the DKG contains
an \(O(q_{\mathsf{DKG}})\) sequential component even though most VSS
messages can be sent in parallel.  This is acceptable for a one-time
initialization procedure, but it should be stated explicitly.

\paragraph{Threshold evaluation.}
A client request contains the blinded curve, the blinding proof, and
session metadata.  Its principal payload is
\[
  \ell_{\mathsf E}
  +
  \ell_{\pi}^{\mathsf{blind}}.
\]
The final verifiable response contains one record for every server in
the quorum.  A record consists, up to fixed metadata, of the output curve
of the partial action, one evaluation proof, and one signature.  Define
\[
  \ell_{\mathsf{rec}}
  =
  \ell_{\mathsf E}
  +
  \ell_{\pi}^{\mathsf{eval}}
  +
  \ell_{\mathsf{sig}}.
\]
If the client already stores the active epoch certificate, the
client-visible communication for one evaluation is approximately
\begin{equation}
  \mathsf{Comm}_{\mathsf{client}}
  =
  \ell_{\mathsf E}
  +
  \ell_{\pi}^{\mathsf{blind}}
  +
  t\,\ell_{\mathsf{rec}}.
  \label{eq:comm-eval-client}
\end{equation}
If the certificate is transmitted with every response, an additional
\(\ell_{\mathsf{cert}}\) bits are required.  The client transcript is
therefore linear in the threshold:
\[
  \mathsf{Comm}_{\mathsf{client}}=O(t).
\]

There is an important distinction between this client transcript and
the aggregate traffic generated by the literal forwarding rule in
Protocol~2.  At position \(h\), the server forwards the complete prefix
of \(h\) evaluation records so that the next server can verify all
preceding contributions.  The number of transmitted records is then
\[
  1+2+\cdots+t
  =
  \frac{t(t+1)}{2}.
\]
Accordingly, the aggregate evaluation traffic is approximately
\begin{equation}
  \mathsf{Comm}_{\mathsf{eval}}^{\mathsf{network}}
  =
  \ell_{\mathsf E}
  +
  \ell_{\pi}^{\mathsf{blind}}
  +
  \frac{t(t+1)}{2}\,
  \ell_{\mathsf{rec}}
  +
  \ell_{\mathsf{cert}},
  \label{eq:comm-eval-network}
\end{equation}
which is \(O(t^2)\) under the protocol as presently written.  This does
not contradict the \(O(t)\) client-transcript claim: the quadratic term
arises because earlier records are retransmitted across multiple
server-to-server hops.

An implementation may reduce this aggregate traffic by storing the
append-only transcript on an authenticated bulletin board, by forwarding
only the newly created record together with a transcript hash, or by
using a reliable multicast channel.  Such an optimization reduces the
physical traffic toward \(O(t)\), although its effect on the verification
and robustness argument must be specified carefully.  The evaluation
still requires \(t\) sequential server actions, because
\(Q_h\) depends on \(Q_{h-1}\).

\paragraph{Proactive share refresh.}
During refresh, every qualified refresh dealer publishes \(t-1\)
commitments to its nonconstant zero-polynomial coefficients and sends
one zero-share/opening pair to every other server.  The complaint-free
payload is therefore
\begin{align}
  \mathsf{Comm}_{\mathsf{refresh}}
  &=
  r_e(t-1)\ell_{\mathsf C}
  +
  r_e(n-1)\ell_{\mathsf{sh}}
  + \ell_{\mathsf{cert}}
  +
  O(n\ell_{\mathsf{sig}}).
  \label{eq:comm-refresh}
\end{align}
The first line covers polynomial distribution.  The second line covers
the proposed next-epoch certificate and the signed PREPARE, COMMIT, and
activation or certificate messages.  With \(r_e\approx n\), the
point-to-point zero-share distribution is \(O(n^2)\).  If each
transition signature is independently broadcast to all servers, the
aggregate delivery cost of the transition messages is also quadratic in
\(n\), although their logical transcript contains only \(O(n)\)
signatures. Unlike evaluation, refresh has no inherently sequential chain of
\(n\) group-action messages.  Its main phases, that are, commitment broadcast,
private zero-share delivery, verification, certificate proposal,
PREPARE, COMMIT, and activation form a constant number of communication
phases in a complaint-free synchronous execution.  Complaint resolution
or leader replacement may add further phases.  Refresh can normally be
performed outside the critical path of client requests and amortized
over all evaluations completed in the epoch.

\paragraph{Verification, blame, and restart.}
An evaluation blame certificate contains the adjacent curves involved
in the failed transition, the offending evaluation proof, the server
signature, and session metadata.  Its principal size is
\[
  2\ell_{\mathsf E}
  +
  \ell_{\pi}^{\mathsf{eval}}
  +
  \ell_{\mathsf{sig}}.
\]
A refresh complaint is smaller because the coefficient commitments are
already public and it principally contains the disputed share-opening pair,
the dealer signature, and references to the public commitments.  Blame
communication is exceptional rather than part of the honest-case cost. A failed evaluation may be restarted with a new quorum.  If \(f\)
servers are identified and removed before a successful attempt, the
communication is approximately \((f+1)\) times the honest evaluation
cost, plus \(f\) blame certificates.  The robust wrapper permits at most
\(n-t+1\) failed attempts before fewer than \(t\) eligible servers
remain.  This worst-case bound is intentionally conservative and should
not be confused with the normal per-evaluation cost.

\paragraph{Committee resharing.}
Let \(I\) contain \(m\) old servers.  Every old server publishes \(t'\)
commitments to the coefficients of its new sharing polynomial, produces
one resharing proof, and sends one share-opening pair to each of the
\(n'\) new servers.  The resulting payload is
\begin{align}
  \mathsf{Comm}_{\mathsf{reshare}}
  &=
  mt'\ell_{\mathsf C}
  +
  mn'\ell_{\mathsf{sh}}
  +
  m\ell_{\pi}^{\mathsf{reshare}}+
  n'\ell_{\mathsf{sig}}
  +
  \ell_{\mathsf{cert}}.
  \label{eq:comm-reshare}
\end{align}
For the common choice \(m=t\), this becomes
\[
  O\bigl(
    tt'\ell_{\mathsf C}
    +
    tn'\ell_{\mathsf{sh}}
    +
    t\ell_{\pi}^{\mathsf{reshare}}
  \bigr).
\]
Resharing is therefore more expensive than one evaluation but is
expected to occur only when the committee or threshold changes.  The
protocol avoids the considerably larger application-level cost of
generating a new OPRF key and recomputing or re-encrypting all data
derived from the old key.

\begin{table}[H]
\centering
\caption{Dominant communication terms in a complaint-free execution.
Broadcast objects are counted once in the logical transcript.  The
literal cumulative forwarding rule of Protocol~2 produces
\(O(t^2)\) aggregate network traffic even though the final client
transcript is \(O(t)\).}
\label{tab:communication-cost}
\small
\begin{tabular}{lll}
\toprule
Sub-protocol & Dominant payload & Main scaling term \\
\midrule
DKG
&
\(nt\) commitments,
\(n(n-1)\) private shares,
\(q_{\mathsf{DKG}}\) link records
&
\(O(n^2+nt)\)
\\
Evaluation: client view
&
one blinded request and \(t\) evaluation records
&
\(O(t)\)
\\
Evaluation: literal network
&
cumulative prefixes of \(t\) records
&
\(O(t^2)\)
\\
Refresh
&
\(r_e(t-1)\) commitments and
\(r_e(n-1)\) zero-shares
&
\(O(n^2)\) when \(r_e\approx n\)
\\
Resharing
&
\(mt'\) commitments and
\(mn'\) private shares
&
\(O(m(t'+n'))\)
\\
\bottomrule
\end{tabular}
\end{table}

\subsection{Computation cost}
\label{sec:computation-cost}

The computational profile of \scheme{} is heterogeneous.  Polynomial
evaluation and commitment processing dominate the distributed
maintenance protocols, whereas isogeny actions and joint NIZK proofs
dominate online OPRF evaluation.  Because proof generation for an
isogeny-action relation may be substantially more expensive than the
underlying action itself, the analysis keeps action and proof costs
separate.
\paragraph{Dealerless distributed key generation.}
Each server acts simultaneously as one VSS dealer and as a receiver of
the contributions of the other dealers.  In its dealer role, a server
samples \(t\) coefficients and commitment randomness values, generates
\(t\) commitments, and evaluates both its polynomial and its randomness
polynomial at \(n\) server indices.  With straightforward Horner
evaluation, this requires \(O(nt)\) field operations per dealer and
\(O(n^2t)\) field operations over the complete committee.

In its receiver role, each server verifies one contribution from every
dealer.  A verification computes
\(\EvalCom(\mathbf A_j,i)\), which requires \(t\) public scalar
multiplications and \(t-1\) additions in the commitment space, followed
by one commitment to the received share.  Across all dealer--receiver
pairs, the system performs \(O(n^2)\) VSS checks, each involving a
length-\(t\) commitment vector.  The total commitment-processing cost is
therefore \(O(n^2t)\). Every qualified dealer additionally performs one group action and
generates one proof for
\(\mathcal R_{\mathsf{link}}\).  All servers verify the public link
chain.  Ignoring signatures and inexpensive field operations, the
system-wide cryptographic cost can be summarized as
\begin{align}
  \mathsf{Comp}_{\mathsf{DKG}}
  &\approx
  nt\,T_{\mathsf{com}}
  +
  n^2 T_{\mathsf{ec}}(t)
  +
  q_{\mathsf{DKG}}
  \bigl(
    T_{\mathsf{act}}
    +
    T_{\mathsf P}^{\mathcal R_{\mathsf{link}}}
  \bigr)
  +
  nq_{\mathsf{DKG}}
  T_{\mathsf V}^{\mathcal R_{\mathsf{link}}}.
  \label{eq:comp-dkg}
\end{align}
The \(q_{\mathsf{DKG}}\) group actions and link-proof generations lie on
a sequential chain.  Other polynomial and VSS operations can be
parallelized across dealers and receivers.  Since the DKG is executed
once, this relatively high cost is primarily a setup concern.

\paragraph{Threshold evaluation.}
The client computes one hash-to-orbit operation, one group action to
blind the input, and one proof for
\(\mathcal R_{\mathsf{blind}}\).  After receiving the response, it
verifies \(t\) evaluation proofs and \(t\) signatures, performs one
inverse group action to remove the blinding, and evaluates the final
output hash.  Its principal cost is therefore
\begin{align}
  \mathsf{Comp}_{\mathsf{client}}
  &\approx
  T_{\Hone}
  +
  2T_{\mathsf{act}}
  +
  T_{\mathsf P}^{\mathcal R_{\mathsf{blind}}}
  +
  t\left(
    T_{\mathsf V}^{\mathcal R_{\mathsf{eval}}}
    +
    T_{\mathsf{SV}}
  \right)
  +
  T_{\Htwo}.
  \label{eq:comp-eval-client}
\end{align}

Each of the \(t\) servers in the quorum computes its Lagrange
coefficient, performs one partial group action, generates one evaluation
proof, and signs the resulting record.  The basic server-side generation
cost is
\begin{equation}
  t\left(
    T_{\mathsf{act}}
    +
    T_{\mathsf P}^{\mathcal R_{\mathsf{eval}}}
    +
    T_{\mathsf S}
  \right).
  \label{eq:comp-eval-generation}
\end{equation}
Computing each Lagrange coefficient independently requires \(O(t)\)
field operations, giving \(O(t^2)\) field operations across the quorum.
The coefficients may instead be computed together using standard batch
techniques or cached when a quorum is reused. The present protocol requires the server at position \(h\) to verify all
\(h-1\) earlier proofs and signatures.  The total number of intermediate
verification operations is therefore
\[
  \sum_{h=1}^{t}(h-1)
  =
  \frac{t(t-1)}{2}.
\]
Consequently, the literal server-side cost is
\begin{align}
  \mathsf{Comp}_{\mathsf{eval}}^{\mathsf{servers}}
  &\approx
  t\left(
    T_{\mathsf{act}}
    +
    T_{\mathsf P}^{\mathcal R_{\mathsf{eval}}}
    +
    T_{\mathsf S}
  \right)
  +
  \frac{t(t-1)}{2}
  \left(
    T_{\mathsf V}^{\mathcal R_{\mathsf{eval}}}
    +
    T_{\mathsf{SV}}
  \right).
  \label{eq:comp-eval-servers}
\end{align}
The \(t\) action-and-proof generation steps are sequential and determine
online latency.  This is the principal efficiency limitation of the
native threshold design.  Parallel hardware can accelerate the internal
arithmetic of each proof, but it cannot remove the dependency
\(Q_h\leftarrow Q_{h-1}\). The quadratic intermediate-verification term is not intrinsic to the
mathematical OPRF computation, it follows from the conservative rule that
each server rechecks the complete prefix.  A variant in which every
server verifies only its immediate predecessor and the client performs
the final global verification would reduce intermediate verification to
\(O(t)\), but the corresponding blame and robustness argument would
need to be adjusted.  The efficiency claims of the present paper should
therefore use Equation~\eqref{eq:comp-eval-servers} unless such a variant
is formally adopted.

\paragraph{Proactive share refresh.}
Refresh performs no isogeny group action and, in the current protocol,
requires no joint NIZK proof.  Each qualified refresh dealer generates
\(t-1\) commitments and evaluates a degree-at-most-\((t-1)\)
zero-polynomial and its randomness polynomial at \(n\) points.  This
requires
\[
  r_e(t-1)T_{\mathsf{com}}
  +
  O(r_ent)
\]
field operations. Each server verifies one zero-share contribution from every qualified
refresh dealer.  The verification of one contribution evaluates a
length-\((t-1)\) commitment vector and compares it with a commitment to
the received zero-share.  Across the complete committee, this gives
approximately
\[
  nr_e T_{\mathsf{ec}}(t-1)
\]
commitment-evaluation work.  Servers then add the accepted zero-shares
to their local shares and update the \(t-1\) nonconstant public
commitments.  The transition additionally requires signature generation
and verification for the certificate, PREPARE, and COMMIT messages.

A compact system-wide expression is
\begin{align}
  \mathsf{Comp}_{\mathsf{refresh}}
  &\approx
  r_e(t-1)T_{\mathsf{com}}
  +
  nr_e T_{\mathsf{ec}}(t-1)
  +
  O(nT_{\mathsf S})
  +
  O(ntT_{\mathsf{SV}})
  +
  O(r_ent)
  \label{eq:comp-refresh}
\end{align}
field operations.  For \(r_e\approx n\), the refresh cost is quadratic
in the committee size, but it is composed primarily of parallelizable
field and commitment operations.  It does not contain the expensive
sequential isogeny-proof chain that appears in online evaluation.

\paragraph{Verification, blame, and restart.}
A party constructing an evaluation blame certificate has already
performed the failed proof or signature verification.  Producing the
certificate requires only transcript packaging and a signature, while
every observer independently repeats the relevant verification.  A
refresh complaint similarly requires one commitment-consistency check
and signature verification.  These costs occur only when a deviation or
fault is detected.

A retry repeats the complete online evaluation cost with a new quorum.
If \(f\) attempts fail before success, the total computational work is
approximately \((f+1)\) evaluations plus the verification of \(f\)
blame certificates.  This cost is the price of identifiable robustness,
the protocol does not silently accept a malformed response, but obtains
public evidence and continues with a different server set.

\paragraph{Committee resharing.}
Each of the \(m\) old servers samples \(t'-1\) fresh coefficients,
generates \(t'\) commitments, and evaluates its new polynomial at
\(n'\) points.  The old committee therefore performs
\(mt'\) commitment generations and \(O(mn't')\) field operations.
Every old server also generates one proof for
\(\mathcal R_{\mathsf{reshare}}\).

Each new server verifies \(m\) VSS contributions and \(m\) resharing
proofs.  The system-wide cost is approximately
\begin{align}
  \mathsf{Comp}_{\mathsf{reshare}}
  &\approx
  mt'T_{\mathsf{com}}
  +
  mn'T_{\mathsf{ec}}(t')
  +
  mT_{\mathsf P}^{\mathcal R_{\mathsf{reshare}}}
  +
  mn'T_{\mathsf V}^{\mathcal R_{\mathsf{reshare}}}
  +
  O(mn't').
  \label{eq:comp-reshare}
\end{align}
The resharing relation contains commitment consistency but no
isogeny-action assertion.  It may therefore admit a substantially
cheaper proof than
\(\mathcal R_{\mathsf{link}}\) or
\(\mathcal R_{\mathsf{eval}}\), depending on the selected NIZK
instantiation.  Resharing can also be parallelized across the \(m\) old
dealers and \(n'\) new receivers.

\begin{table}[H]
\centering
\caption{Dominant computational operations.  NIZK generation and
verification costs remain symbolic because they depend on the concrete
proof-system instantiation.}
\label{tab:computation-cost}
\begin{tabular}{llll}
\toprule
Sub-protocol
& Group actions
& Proof generation
& Other dominant work \\
\midrule
DKG
&
\(q_{\mathsf{DKG}}\)
&
\(q_{\mathsf{DKG}}\) link proofs
&
\(O(n^2t)\) field/commitment work
\\
Evaluation
&
\(t+2\)
&
one blind proof and \(t\) eval proofs
&
\(O(t^2)\) intermediate verification
\\
Refresh
&
\(0\)
&
\(0\)
&
\(O(nr_et)\) field/commitment work
\\
Resharing
&
\(0\)
&
\(m\) reshare proofs
&
\(O(mn't')\) field/commitment work
\\
\bottomrule
\end{tabular}
\end{table}

\paragraph{Amortized maintenance cost.}
Suppose epoch \(e\) contains \(N_e\) successful OPRF evaluations before
the next refresh.  The refresh overhead amortized over one evaluation is
\[
  \frac{
    \mathsf{Comm}_{\mathsf{refresh}}
  }{N_e}
  \qquad\text{and}\qquad
  \frac{
    \mathsf{Comp}_{\mathsf{refresh}}
  }{N_e}.
\]
For a high-volume service, these quantities can be much smaller than the
online cost of one evaluation.  The DKG cost is amortized over the full
lifetime of the key, while resharing is amortized over the period during
which the new committee remains active.  The dominant recurring
bottleneck is therefore not proactive refresh itself, but the
sequential generation and verification of the joint evaluation proofs.

\subsection{Comparative analysis and interpretation}
\label{sec:comparative-efficiency}

A direct comparison based only on communication size or round count
places \scheme{} at a disadvantage.  A conventional single-server OPRF
can have a constant-size transcript and a constant number of rounds
because one server already possesses the complete key.  A threshold
protocol implemented through an MPC-emulated virtual server may also
hide the internal committee structure from the client and produce a
constant-size client transcript.  In contrast, \scheme{} exposes every
server's contribution so that the client can verify and attribute the
complete evaluation chain.  This design results in \(t\) sequential
partial actions, \(t\) evaluation proofs, and \(t\) signatures.

The additional cost should therefore be interpreted as the price of a
different functionality profile.  The protocol does not merely divide a
static OPRF key.  It supports dealerless creation of the key, public
certification of the current sharing polynomial, proactive renewal
against a mobile adversary, explicit epoch transitions, public blame,
and committee migration while preserving the key.  Several of these
features have no counterpart in a non-threshold or non-proactive OPRF,
and their costs cannot be removed by a more favorable accounting
convention.

\begin{table}[ht]
\centering
\caption{Qualitative efficiency--functionality comparison.  The
asymptotic values describe the interfaces considered in this paper and
do not normalize the concrete costs of different algebraic assumptions
or proof systems.}
\label{tab:comparative-efficiency}
\small
\begin{tabular}{lccccc}
\toprule
Protocol
& Online rounds
& Client transcript
& Verifiable
& Proactive
& PQ \\
\midrule
Jarecki--Liu~\citep{jarecki2009}
& \(O(1)\)
& \(O(1)\)
& No
& No
& No
\\
Baecker et al.~\citep{baecker2025}
& \(O(1)\)
& \(O(1)\)
& No
& Yes
& No
\\
Pedersen~\citep{pedersen2026}
& \(O(1)\)
& \(O(1)\)
& Yes
& No
& Yes
\\
\scheme{}
& \(O(t)\)
& \(O(t)\)
& Yes
& Yes
& Yes
\\
\bottomrule
\end{tabular}
\end{table}

Relative to the classical OPRF of Jarecki and Liu, \scheme{} incurs
threshold coordination, post-quantum group-action costs, and proof
overhead, but removes the single key-holding server and adds proactive
maintenance.  Relative to the proactive threshold OPRF of Baecker et
al., the present construction additionally targets post-quantum security
and verifiable partial evaluation, at the cost of a larger and
sequential online transcript.  Relative to the isogeny-based threshold
VOPRF of Pedersen, \scheme{} gives up the constant-size virtual-server
interface in order to maintain explicit, independently refreshable
Shamir shares and individually attributable server actions.

The most significant online disadvantage is the sequential evaluation
path.  Even if all servers compute quickly, the next action cannot begin
until the preceding curve has been received and verified.  The NIZK
proof for \(\mathcal R_{\mathsf{eval}}\) is also likely to dominate
computation because it must connect a committed scalar to an isogeny
action.  The present construction should therefore be viewed as a
feature-complete protocol framework rather than as a claim of immediate
low-latency deployment. The maintenance costs are more favorable when viewed over the lifetime
of a service.  DKG is a one-time operation.  Refresh is quadratic in the
committee size but contains no isogeny action or joint action proof, can
be parallelized, and is amortized over an epoch.  Resharing is more
expensive than refresh but occurs only when committee membership or the
threshold changes.  In return, both operations preserve
\[
  k,\qquad
  \pk=[k]E_0,\qquad
  F_k(x).
\]
This preservation has an application-level efficiency benefit that is
not visible in the protocol transcript alone, a large encrypted
database indexed by OPRF outputs does not need to be recomputed or
re-encrypted after a refresh or committee migration.

\section{Conclusion}
\label{sec:conclusion}

We have introduced \scheme, the first threshold VOPRF from isogeny
group actions that provides proactive security against a mobile
adversary.  The protocol periodically refreshes all server shares
without changing the master key, public key, or any previously generated
OPRF output, and supports committee migration, publicly verifiable
blame, and coordinated epoch transitions. The construction makes one architectural choice, it exposes the
threshold structure as an explicit sequential chain rather than hiding
it behind an MPC-emulated virtual server.  This makes the
secret-sharing state directly visible and amendable, so proactive
refresh reduces to adding verifiable zero-sharing polynomials, no
isogeny evaluations, no new proofs.  Committee resharing operates on
the same explicit sharing polynomial.

Correctness is established by an invariant maintained across DKG,
evaluation, refresh, and resharing: the master key, public key, and
OPRF outputs are preserved throughout the system lifetime.  A
simulation-based security proof shows that fewer than \(t\) shares per
epoch reveal nothing about the key, that evaluation transcripts are
independent of client inputs, and that cross-epoch share accumulation
provides no advantage to the adversary. The cost is an \(O(t)\) online transcript where \(t\) is the threshold.
This is the price of making every server individually accountable and
keeping the sharing state amenable to periodic renewal.  For long-lived
services that must survive gradual compromise over years, the protocol
purchases operational longevity that shorter-lived designs cannot offer.

\subsection*{Acknowledgments}
Vikas Srivastava acknowledges the support received from the ANRF-PMECRG project with Ref.
\texttt{ANRF/ECRG/2025/002808/PMS} and NIT Warangal Research Seed Grant. 

\bibliographystyle{unsrtnat}
\bibliography{references}

\end{document}